\title{Counting Patterns in Degenerate Graphs in Constant Space} %TODO Please add
\author{Balagopal Komarath}{Department of Computer Science,
IIT Gandhinagar, India}{bkomarath@iitgn.ac.in}{}{}%TODO mandatory, please use full name; only 1 author per \author macro; first two parameters are mandatory, other parameters can be empty. Please provide at least the name of the affiliation and the country. The full address is optional. Use additional curly braces to indicate the correct name splitting when the last name consists of multiple name parts.
\author{Anant Kumar}{Department of Computer Science,
IIT Gandhinagar, India}{kumar\_anant@iitgn.ac.in}{}{}
\author{Akash Pareek}{Department of Computer Science,
Université Libre de Bruxelles, Belgium}{akash.pareek@ulb.be}{}{}
\authorrunning{B. Komarath, A. Kumar, and A. Pareek} %TODO mandatory. First: Use abbreviated first/middle names. Second (only in severe cases): Use first author plus 'et al.'
\keywords{Homomorphism Counting, Subgraph Counting, Induced subgraph Counting, Bounded degeneracy graph} %TODO mandatory; please add comma-separated list of keywords
\newcommand{\spasm}{\mathsf{Spasm}}
\newcommand{\merge}{\mathsf{Merge}}
\newcommand{\td}{\mathsf{td}}
\newcommand{\gmain}{H}
\newcommand{\SEEN}{\textsc{Seen}}
\begin{document}

\maketitle

\begin{abstract}

For a fixed pattern graph, we study the algorithmic complexity of counting homomorphisms, subgraph isomorphisms, and induced subgraph isomorphisms into an $n$-vertex, $d$-degenerate host graph. Bressan (Algorithmica, 2021) introduced the notion of DAG treewidth and showed that counting homomorphisms and induced subgraphs can be performed efficiently using dynamic programming that requires polynomial space. In this work, we introduce a new graph parameter, called \emph{DAG treedepth}, which enables efficient divide-and-conquer algorithms for counting homomorphisms in $d$-degenerate host graphs using only constant space.

Bera, Gishboliner, Levanzov, Seshadhri, and Shapira (SODA, 2021) showed that a pattern graph has DAG treewidth one if and only if it contains no induced cycle of length at least six. This induced minor characterization leads to linear-time and linear-space algorithms. Building on this line of work, we derive an induced-minor characterization of graphs with DAG treedepth at most two that uses only constant space.

Recently, Paul-Pena and Seshadhri (ICALP, 2025) proved that all pattern graphs on at most nine vertices can be counted in subquadratic time using polynomial space. We show that every pattern graph on at most nine vertices can be counted as an induced subgraph in $O(n^3)$ time using only constant space. Moreover, we show that patterns on at most eleven vertices can be counted in $O(n^2)$ time using polynomial space.

Finally, we present a constant-space algorithm for counting induced subgraphs that matches the running time of Bressan’s algorithm. We further show that, when polynomial space is allowed, homomorphisms, subgraph isomorphisms, and induced subgraph isomorphisms can be counted faster than Bressan’s algorithm. In addition, we establish several other results related to DAG treewidth and DAG treedepth that may be of independent interest.

\end{abstract}
\newpage
\section{Introduction}\label{sec:intro}

For simple, undirected graphs $G$ (host) and $H$ (pattern), we consider the problem of counting the number of occurrences of $H$ in $G$ under the following three fundamental notions of containment.
\begin{enumerate}
    \item The number of subgraphs of $G$ isomorphic to $H$. That is, the cardinality of $\{(V', E') \mid (V', E') \cong H, V' \subseteq V(G), E' \subseteq E(G) \}$.
    \item The number of induced subgraphs of $G$ isomorphic to $H$. That is, the number of vertex subsets $S \subseteq V(G)$ such that the induced subgraph $G[S] = (S, \{\{u,v\} \in E(G) \mid u,v \in S\})$ is isomorphic to $H$.
    \item The number of homomorphisms from $H$ to $G$. That is, the number of functions $\phi : V(H) \to V(G)$ such that for every edge $\{u,v\} \in E(H)$, we have $\{\phi(u),\phi(v)\} \in E(G)$.

\end{enumerate}

Counting subgraphs and homomorphisms is at the heart of many problems in combinatorics, database theory, and network analysis \cite{flum2004parameterized, lovasz2012large, ahmed2015efficient, curticapean2017homomorphisms, deep2025ranked, kara2025conjunctive}. These problems capture fundamental questions such as detecting specific patterns in large networks and evaluating conjunctive queries. When the host graph $G$ is unrestricted, the complexity of such counting tasks is well understood, as it is closely tied to the structural measures of the pattern graph $H$. In particular, the running time for counting subgraphs is governed by the \emph{treewidth} of $H$ in the setting of unrestricted space, and by the treedepth of $H$ when constant space is imposed \cite{DBLP:journals/tcs/DiazST02, komarath2023finding}.

In many real-world settings, host graphs are not arbitrary but exhibit additional structure such as sparsity or bounded degeneracy \cite{pashanasangi2020efficiently, jha2015path}. The class of \emph{$d$-degenerate graphs} provides a natural way to model such sparse networks. A graph is said to be $d$-degenerate if every subgraph contains a vertex of degree at most $d$. This family includes many important graph classes, such as planar graphs, graphs excluding a fixed minor, and preferential attachment
graphs \cite{seshadhri2023some}. Studying counting problems on degenerate graphs leads to more refined algorithmic bounds that better reflect the structure of real data.

Existing work on counting patterns in bounded-degeneracy graphs has primarily focused on improving \emph{time} complexity. By exploiting acyclic orientations of the host graph with bounded outdegree, Bressan~\cite{bressan2021faster} introduced the notion of \emph{DAG treewidth} and obtained faster algorithms for counting homomorphisms and induced subgraphs via dynamic programming. This approach has led to a rich theory of (near) linear-time algorithms and structural characterizations for restricted pattern classes \cite{bera2021near, bressan2022exact, bera2022counting, paul2022dichotomy, paul2025dichotomy}.

In contrast, the \emph{space complexity} of pattern counting in bounded-degeneracy graphs has received little attention. Dynamic programming techniques underlying existing algorithms inherently require maintaining tables whose size grows with the width of the decomposition, and therefore do not extend naturally to the constant-space setting. This motivates the search for structural parameters that support divide-and-conquer algorithms with bounded recursion depth.

In this work, we introduce a new graph parameter, called \emph{DAG treedepth} (dtd), which can be viewed as a treedepth analogue of DAG treewidth. The DAG treedepth parameter captures the reachability structure induced by an acyclic orientation of the pattern graphs. We show that this parameter precisely governs the complexity of counting homomorphisms, subgraphs, and induced subgraphs in bounded-degeneracy graphs.  The algorithms use only constant space, assuming that both the degeneracy and the dtd are constants.

\subsection{Our Results}

We begin by introducing a new structural parameter, called the \emph{DAG treedepth} (see \Cref{def:dtd}), which serves as a natural analogue of treedepth. The DAG treedepth parameter enables the design of efficient, constant-space, divide-and-conquer algorithms for counting homomorphisms into $d$-degenerate host graphs. Our first main result shows that this parameter precisely governs the complexity of counting homomorphisms using only constant space.

\begin{restatable}{theorem}{dtdalgo}
\label{thm:dtdalgo}
Let $d$ be a fixed constant. If the DAG treedepth of a $k$-vertex graph $H$ is $t$, then the number of homomorphisms from $H$ to an $n$-vertex, $d$-degenerate host graph $G$ can be counted in $O(n^{t})$ time using constant space.
\end{restatable}

We further show that the DAG-treedepth framework extends beyond homomorphism counting. In particular, it yields constant-space algorithms for counting induced subgraphs in bounded-degeneracy graphs whose running times asymptotically match the best known bounds obtained via DAG treewidth ~\cite{bressan2021faster}. Crucially, our algorithms avoid the polynomial-space overhead inherent in dynamic programming.

\begin{restatable}{theorem}{indsubcount}
\label{thm:indsubcount}
Let $d$ be a fixed constant and let $H$ be a $k$-vertex graph. Then the number of induced subgraphs of an $n$-vertex, $d$-degenerate graph $G$ that are isomorphic to $H$ can be counted in $O(n^{k/4+O(1)})$ time using constant space.
\end{restatable}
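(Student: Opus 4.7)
The plan is to mirror Bressan's inclusion--exclusion reduction from induced subgraph counting to subgraph/homomorphism counting, but to invoke the DAG-treedepth algorithm of \Cref{thm:dtdalgo} for every resulting homomorphism count in place of his DAG-treewidth DP. First, I would write
\[
\#\{\text{induced copies of } H \text{ in } G\} \;=\; \sum_{\substack{H' \supseteq H \\ V(H') = V(H)}} \mu(H, H')\cdot \#\{\text{subgraph copies of } H' \text{ in } G\},
\]
where $\mu$ is the usual Möbius-type coefficient, and then expand each subgraph count as the standard fixed $\mathbb{Q}$-linear combination of homomorphism counts over quotients in $\spasm(H')$. Both expansions produce at most $f(k)$ terms, and every resulting pattern $F$ has at most $k$ vertices.

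Second, I would evaluate each $\ghom(F,G)$ using \Cref{thm:dtdalgo} in time $O(n^{\mathrm{dtd}(F)})$ and constant space, and accumulate the signed contributions. Summing $f(k)$ many integers, each of bit length $O(k \log n)$, requires only a constant number of machine words in the unit-cost RAM model, so the constant-space guarantee is preserved. The overall running time is dominated by the term with the largest $\mathrm{dtd}(F)$, so it suffices to bound this quantity uniformly.

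The heart of the proof is therefore a structural inequality $\mathrm{dtd}(F) \le k/4 + O(1)$ for every $k$-vertex $F$ appearing in the expansion, matching Bressan's bound $\mathrm{dtw}(F) \le k/4 + O(1)$. I would either (i) revisit Bressan's decomposition that achieves DAG treewidth $k/4$ and refine each bag into an elimination-forest-style DAG decomposition, or (ii) invoke the structural comparison between DAG treewidth and DAG treedepth established earlier in the paper. Since $|V(F)| = O(k)$ is a constant independent of $n$, any multiplicative $O(\log k)$ blow-up incurred when converting a width-$w$ decomposition into a depth bound is absorbed into the additive $O(1)$ term in the exponent of $n$. The main obstacle is precisely this width-to-depth conversion while respecting the acyclicity constraints defining $\mathrm{dtd}$: for undirected graphs the passage from treewidth $w$ on $k$ vertices to treedepth $O(w \log k)$ is classical, but a DAG elimination forest demands that the eliminated sets be topologically consistent with the underlying orientation, so a naive translation need not preserve the DAG structure. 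Once this step is in place, the claimed $O(n^{k/4 + O(1)})$-time, constant-space bound follows immediately by summing over the $f(k)$ terms.
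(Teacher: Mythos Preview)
Your high-level reduction is correct and matches the paper: expand induced-subgraph counts over supergraphs, then over $\spasm$, and evaluate each resulting $\ghom(F,G)$ with \Cref{thm:dtdalgo}. The paper does exactly this (\Cref{thm:dtdvertices}).

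The genuine gap is in your structural step. You assert that a multiplicative $O(\log k)$ blow-up from a width-to-depth conversion ``is absorbed into the additive $O(1)$ term in the exponent of $n$.'' This is false: if you start from $\mathrm{dtw}(F)\le k/4+O(1)$ and apply any generic $\mathrm{td}\le O(\mathrm{tw}\cdot\log k)$-style conversion, you obtain an exponent of order $(k/4)\log k$, which is \emph{not} $k/4+O(1)$ for an absolute constant. The $O(1)$ in the theorem is absolute (the paper gets $\lfloor k/4\rfloor+2$), so a $\log k$ multiplicative factor destroys the claim. Moreover, no such DAG-treewidth-to-DAG-treedepth conversion lemma is established in the paper; \Cref{thm:dagtdtd} relates $\mathrm{dtd}$ to ordinary treedepth of graphs in $\mathcal{G}$, not to $\mathrm{dtw}$.

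The paper bypasses this entirely: it proves $\mathrm{dtd}(F)\le\lfloor k/4\rfloor+2$ \emph{directly} for every $k$-vertex DAG $F$ (\Cref{thm:dtdk/4}) by a greedy construction of a DAG elimination path. One repeatedly picks a source $s$ maximizing $|R(s)\setminus R(D)|$ among the undiscovered vertices; as long as this quantity is at least $3$ (or $2$ with a pairing argument), each unit of depth discovers at least four vertices, and the residual cases cost only an additive $+2$. This is the idea your option (i) gestures toward, but it is an independent argument for $\mathrm{dtd}$, not a refinement of Bressan's DAG-treewidth bags, and it is precisely what your proposal is missing.
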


We also show that our techniques yield faster constant-space algorithms for counting subgraphs when the pattern graph is sufficiently sparse, with running time depending on the number of edges rather than the number of vertices. Additionally, in \Cref{thm:indsubcount}, we show that, instead of constant space, if we allow polynomial space, we can improve the bound from $O(n^{k/4+O(1)})$ to $O(n^{k/5+O(1)})$.

\begin{restatable}{theorem}{k-vertex-ind}
    \label{thm:k-vertex-ind}
    Let $H$ be a $k$-vertex graph and let $d$ be a constant. Then there is an $O(n^{k/5+O(1)})$-time and polynomial space algorithm to count the number of induced subgraphs of a given $n$-vertex, $d$-degenerate graph $G$ that are isomorphic to $H$.
    %\todo{It will be good to add the space complexity here.}
\end{restatable}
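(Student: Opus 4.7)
The plan is to stay inside Bressan's framework for induced-subgraph counting, but sharpen the DAG-treewidth bound that ultimately appears in the exponent. The improvement from $k/4$ to $k/5$ comes from a better structural bound on DAG treewidth of $k$-vertex patterns, not from changing the overall algorithmic scaffolding.

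First I would reduce induced-subgraph counting to homomorphism counting via the standard M\"obius / spasm inversion:
\[
\gind(H,G) \;=\; \sum_{H' \in \spasm(H)} \alpha_{H'}\, \ghom(H',G),
\]
where $\spasm(H)$ ranges over all graphs obtained by merging independent sets of $H$, so every $H' \in \spasm(H)$ has at most $k$ vertices, and the coefficients $\alpha_{H'}$ depend only on $k$. Next, fix a degeneracy orientation $\vec{G}$ of $G$, partition each $\ghom(H',G)$ according to how the edges of $H'$ are routed through this orientation, and reduce to counting $\ghom(\vec{H'},\vec{G})$ for every acyclic orientation $\vec{H'}$ of $H'$. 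Each such count can be performed in $O(n^{dtw(\vec{H'})+O(1)})$ time and polynomial space using Bressan's DP, so the total running time is governed by $\max_{H' \in \spasm(H)} \min_{\vec{H'}} dtw(\vec{H'})$, up to factors depending only on $k$.

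The key new ingredient is a structural claim: every $k$-vertex graph admits an acyclic orientation of DAG treewidth at most $k/5 + O(1)$. I would attempt to prove this inductively, by choosing a topological order and then extracting a separator $S$ of size at most $k/5 + O(1)$ whose deletion leaves sub-DAGs that recursively admit DAG tree decompositions of proportionally smaller width, with $S$ itself serving as a valid bag. The analysis would split cases by the numbers of sources, sinks, and mid-level vertices in $\vec{H'}$, and would crucially exploit the freedom to pick the orientation rather than work with an adversarial DAG; an averaging or LP-type argument over candidate orientations could be used to balance the sizes of the resulting "crossing" sets.

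The main obstacle is exactly this $k/5$ bound. Bressan's $k/4$ analysis leaves little slack, so obtaining the extra saving will require exploiting constraints that his direct argument ignores, in particular the ability to optimize over orientations of the undirected pattern and to pass to a well-chosen ordering before building the separator. Once the bound is in hand, summing $O(n^{k/5+O(1)})$ over the $f(k)$-bounded number of quotients in $\spasm(H)$ and acyclic orientations of each yields the claimed running time with polynomial space.
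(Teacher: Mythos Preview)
There is a genuine gap in your reduction. When you write that the running time is governed by $\max_{H'}\min_{\vec{H'}} dtw(\vec{H'})$, the $\min$ should be a $\max$: once the degeneracy orientation $\vec{G}$ of the host is fixed, $\ghom(H',G)$ splits as a sum over \emph{all} acyclic orientations $\vec{H'}$, and each term $\ghom(\vec{H'},\vec{G})$ has to be computed separately. You cannot pick a favourable orientation of the pattern; the orientation is dictated by how the image edges happen to be directed inside $\vec{G}$. Consequently the ``key new ingredient'' must be the stronger statement that \emph{every} $k$-vertex DAG has DAG treewidth at most $k/5+O(1)$, and your proof sketch --- which explicitly relies on ``the freedom to pick the orientation'' and on averaging over candidate orientations --- is aimed at the wrong target. (There is also a minor slip: the reduction from $\gind$ to $\ghom$ goes through all $k$-vertex supergraphs of $H$ first, not only through $\spasm(H)$; this is what Bressan's $\tau_3$ encodes. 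It does not affect the exponent since the supergraphs still have $k$ vertices.)

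The paper proves exactly the stronger, for-all-DAGs statement, and by a rather different route than you sketch. Given an arbitrary $k$-vertex DAG, it greedily assembles a bag $B^*$ of sources each of which uncovers at least four new non-sources; in the residual DAG every source reaches at most three non-sources, and a further case analysis reduces to the situation where every source reaches at most two. That last sub-case is handled by encoding the sources as edges of an auxiliary undirected graph on the non-source vertices and invoking the Kneis--M\"olle--Richter--Rossmanith bound $tw \le m/5.217 + 3$; pulling this back through the encoding gives the required DAG tree decomposition. The argument nowhere chooses the orientation --- it works uniformly for every DAG, which is precisely what the counting application needs.
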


While DAG treewidth has proved effective for obtaining linear-time algorithms on bounded-degeneracy graphs, its structural complexity quickly becomes intractable beyond this regime. In particular, as we show, even deciding whether a directed acyclic graph has DAG treewidth at most two is \textup{NP}-complete. Therefore, DAG treewidth does not admit induced minor characterizations even in the quadratic-time regime. In contrast, DAG treedepth admits a significantly more tractable structural theory. In particular, for every fixed constant $k$, it can be decided in polynomial time whether a given directed acyclic graph has DAG treedepth at most $k$ (see \Cref{thm:dtdverified}). 

Bera, Pashanasangi, and Seshadhri~\cite{bera2020linear} showed that a pattern graph $H$ has DAG treewidth one if and only if it has no induced cycle of length $6$ or greater. Inspired by the induced-minor characterization of patterns admitting linear-time and linear-space algorithms \cite{bera2020linear}, we develop an analogous framework for DAG treedepth. This framework enables both constant-space algorithms and structural characterizations of tractable pattern classes. In particular, we obtain a complete induced-minor characterization of patterns that admit cubic-time, constant-space algorithms.

\begin{restatable}{theorem}{dagtdtwo}\label{thm:dagtdtwo}
A pattern graph $H$ has DAG treedepth at most two if and only if it is $\{C_6, P_7, H_1, H_2\}$-induced-minor-free (See ~\Cref{fig:h1} and ~\Cref{fig:h2}).
\end{restatable}

More generally, we investigate the structural relationship between induced-minor–based characterizations of bounded DAG treedepth and the classical minor-based characterizations of bounded treedepth (see \Cref{lem:indmin1} and \Cref{lem:indmin2}). We show that, for every fixed integer \(k\), the class of graphs of DAG treedepth less than \(k\) admits an induced-minor characterization that is directly related to the minor obstruction set for treedepth and their supergraphs.

We further complement this structural connection with conditional lower bounds, relating the complexity of algorithms parameterized by DAG treedepth to that of algorithms parameterized by treedepth. In particular, improving the running time of constant-space algorithms for patterns of DAG treedepth three would imply a corresponding breakthrough for classical pattern-counting problems.

\begin{restatable}{theorem}{thmlower}\label{thm:dtd3-lower}
For every $\varepsilon > 0$, there is no $O(n^{3-\varepsilon})$-time constant-space algorithm for counting subgraphs isomorphic to any pattern $H$ with $dtd(H)=3$ in bounded-degeneracy graphs of degeneracy two, unless there exists an $O(n^{3-\varepsilon})$-time constant-space algorithm for counting triangles in arbitrary graphs.
\end{restatable}
Note that subcubic-time algorithms for triangle counting based on fast matrix multiplication are known. However, such algorithms inherently require polynomial space and therefore fall outside the constant-space setting considered here.

Several graph patterns can be counted efficiently when the host graph has bounded degeneracy. Bressan established efficient counting results for certain special classes of patterns, including \(K_n\), \(K_n - \{e\}\), and $K_{k_1,k_2}$. In~\cite{bera2020linear}, the authors showed that all patterns on at most five vertices can be counted in nearly linear time using linear space. More recently, Paul-Pena and Seshadhri \cite{paul2024subgraph} proved that all patterns on at most nine vertices can be counted in subquadratic time using polynomial space. To this end, we show the following:
\begin{theorem}
Let $d$ be a constant. For all patterns with at most nine vertices, we can count the number of occurrences as induced subgraphs in $O(n^3)$ time and constant space for an $n$-vertex $d$-degenerate graph given as input.
\end{theorem}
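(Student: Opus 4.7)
The plan is to reduce the induced subgraph count to a fixed linear combination of homomorphism counts and then bound the DAG treedepth of every graph appearing in that combination by three. I would first use the standard identity expressing $\gind(H, G)$ as a $\mathbb{Z}$-linear combination of $\gsub(H', G)$ over all edge-supergraphs $H'$ of $H$ on the same vertex set, and then expand each $\gsub(H', G)$ as a $\mathbb{Z}$-linear combination of $\ghom(F, G)$ over $F \in \spasm(H')$. Because $|V(H)| \le 9$, the number of summands on either level is bounded by a constant depending only on $k$, every $F$ satisfies $|V(F)| \le 9$, and the coefficients are precomputable in constant time. Hence it suffices to prove $\operatorname{dtd}(F) \le 3$ for every graph $F$ on at most nine vertices; once this is done, \Cref{thm:dtdalgo} evaluates each $\ghom(F, G)$ in $O(n^3)$ time and constant space, and summing finitely many such counts preserves both the running time and the space bound.

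The core step is the DAG-treedepth bound, and I would tackle it by two complementary routes. The analytic route mirrors Bressan's bound $\operatorname{dtw}(H) \le \lceil k/4 \rceil$ on DAG treewidth: fix an acyclic orientation of $F$ induced by a degeneracy ordering, partition the vertices into at most $\lceil 9/4 \rceil = 3$ layers compatible with that orientation, and argue that the resulting layered DAG admits a DAG elimination tree of depth three, essentially by tightening the construction underlying \Cref{thm:indsubcount} so that the additive $O(1)$ slack in the exponent is removed when $k \le 9$. The computational route exploits finiteness: there are only finitely many isomorphism classes of graphs on $\le 9$ vertices, and for the fixed constant $k=3$ the predicate $\operatorname{dtd}(F) \le 3$ is decidable in polynomial time by the XP verification theorem for DAG treedepth stated earlier in the excerpt. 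An exhaustive computer search then certifies the bound graph by graph.

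The hard part will be the analytic bound, because DAG treedepth can strictly exceed DAG treewidth and Bressan's tree-decomposition balancing does not port verbatim: a DAG elimination tree must simultaneously control both the in-neighbourhood and the remaining out-neighbourhood of every eliminated vertex, whereas a DAG tree decomposition has more flexibility in how bag overlaps propagate. If the structural argument falls short on a small residual family of patterns, I would treat those exceptions by a finite case analysis; since the bad patterns form an $n$-independent list, a tailored $O(n^3)$ constant-space routine for each of them closes the remaining cases. Putting everything together, the decomposition, the uniform bound $\operatorname{dtd}(F) \le 3$, and \Cref{thm:dtdalgo} combine to yield the claimed $O(n^3)$-time, constant-space algorithm for counting induced copies of every nine-vertex pattern in a $d$-degenerate host graph.
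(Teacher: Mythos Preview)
Your reduction framework is exactly right and matches the paper: express $\gind(H,G)$ as a finite $\mathbb{Z}$-linear combination of $\ghom(F,G)$ over graphs $F$ with $|V(F)|\le 9$, prove $\operatorname{dtd}(F)\le 3$ for every such $F$, and invoke \Cref{thm:dtdalgo}. The gap is that you have not actually proven the key lemma, and your analytic route contains a conceptual slip.

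The slip is that the DAG treedepth of an \emph{undirected} graph is the \emph{maximum} over all acyclic orientations, so you cannot ``fix an acyclic orientation of $F$ induced by a degeneracy ordering'' and argue only about that one; the algorithm must sum over all orientations of the pattern, and the running time is governed by the worst one. Moreover, the general bound behind \Cref{thm:indsubcount} is $\operatorname{dtd}\le \lfloor k/4\rfloor+2$, which for $k=9$ yields $4$, not $3$; saying you will ``tighten the additive slack'' is naming the problem, not solving it. Your ``layers'' idea is not a DAG elimination tree: the depth counts how many sources you must delete along a root-to-leaf path, and nothing in your description explains why three source deletions always suffice for an \emph{arbitrary} nine-vertex DAG.

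The paper proves $\operatorname{dtd}(\vec F)\le 3$ for every acyclic orientation $\vec F$ of every $F$ on $\le 9$ vertices by a short case analysis on the number of sources $s$ and non-sources $t$ (with $s+t\le 9$): if $s\le 3$ the bound is immediate; if $t\le 3$ one or two sources already cover all non-sources and the rest hang as leaves; if $s=4$ then $\operatorname{dtd}=4$ would force each of the $\binom{4}{2}=6$ source pairs to have a privately reachable non-source, hence $t\ge 6$, impossible; if $s=5$ (so $t=4$) one picks a source with maximum reachability and argues by a small reachability-union case split that depth $3$ suffices. This is the tightening you allude to but do not carry out. Your computer-search fallback is logically valid, but the hand argument is only a paragraph and is what the paper actually does.
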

We further show that the bound of nine vertices in the above theorem is tight by establishing a conditional lower bound for a specific pattern on ten vertices.

 Paul-Pena and Seshadhri~\cite{paul2024subgraph} also identified a specific ten-vertex pattern, denoted $\mathcal{H}_{\Delta}$, and conjectured that it does not admit a sub-quadratic time polynomial space algorithm.

\begin{restatable}{conjecture}{conjjj}
\label{con:paul}
(Conjecture~1.8 of~\cite{paul2024subgraph})
For any $\varepsilon > 0$, there is no $O(m^{2-\varepsilon})$-time
\emph{combinatorial} algorithm for computing $\mathrm{sub}(\mathcal{H}_{\Delta}, G)$.
\end{restatable}
We resolve this conjecture affirmatively in the combinatorial setting by assuming that counting $K_4$-copies needs (for combinatorial algorithms) quadratic time in the number of edges.

Beyond resolving this conjecture, we provide a structural explanation for the observed nine-vertex barrier. We show that every pattern graph on at most eleven vertices admits an acyclic orientation whose DAG treewidth is at most two. We also show that this bound is tight by providing a graph on twelve vertices whose DAG treewidth is three. 

\begin{theorem}
Let $d$ be a constant. For every pattern graph $H$ with at most eleven vertices, the number of induced subgraphs of an $n$-vertex $d$-degenerate graph $G$ that are isomorphic to $H$ can be counted in $O(n^2)$ time using polynomial space. 
\end{theorem}

We further establish a relation between DAG treedepth and treedepth of the pattern graph, as well as DAG treewidth of the pattern graph. In addition, we show many properties and bounds related to DAG treewidth and DAG treedepth, which may be of independent interest.

\subsection{Related work}

Subgraph counting is a fundamental problem in graph algorithms, with its complexity and tractability heavily influenced by the structural properties of the host graph. In \emph{nowhere dense} and \emph{bounded degeneracy} graphs, several works have established exact and approximate algorithms, along with complexity dichotomies. A recent dichotomy hierarchy precisely characterizes linear-time subgraph counting in such graphs~\cite{paul2025dichotomy}. In the dense graph regime, Bressan et al. \cite{bera2021near} classified pattern counting complexity using DAG-treewidth, providing a sharp dichotomy for somewhere dense graph classes~\cite{bressan2023somewheredense}. At the same time, related work has addressed the complexity of counting homomorphic cycles in degenerate graphs~\cite{gishboliner2023counting}, and pattern counting in directed graphs parameterized by outdegree~\cite{bressan2023complexity, bressan2022exact}. Recent studies have also considered counting patterns when the host graph is sparse~\cite{komarath2023finding}. For approximate counting, several works have proposed efficient approximation algorithms for subgraph counting problems, including a general approximation framework for sparse graphs~\cite{lokshtanov2025efficiently}, and fast approximation algorithms specifically for triangle counting in streaming and distributed settings~\cite{bera2020degeneracy}. Beyond graphs, subhypergraph counting has recently gained traction. In ~\cite{bressan2025complexity}, the authors extended the homomorphism basis framework to hypergraphs, while~\cite{paul2025near} characterized the class of pattern hypergraphs $H$ that can be counted in near-linear time when the host graph $G$ is a hypergraph.

\subsection{Paper Organization}
We begin with the necessary preliminaries, introducing the definitions, notation, and background used throughout the paper. In \Cref{sec:treedepth}, we define DAG treedepth and prove \Cref{thm:dtdalgo}. We then identify the obstructions for DAG treedepth $0$, $1$, and $2$ in \Cref{sec:dtdobstraction}. Subsequently, in \Cref{sec:dtdlower}, we establish conditional lower bounds for counting patterns in cubic time using constant space. In \Cref{sec:dtdconstant}, we present constant-space algorithms for computing $\mathrm{hom}(H, G)$, $\mathrm{sub}(H, G)$, and $\mathrm{ind}(H, G)$.

 We then shift our focus to DAG treewidth in \Cref{sec:treewidth}. In \Cref{sec:dtwproperty}, we present key properties of DAG treewidth, which are subsequently used in \Cref{sec:dtwfast} to design faster algorithms for computing $\mathrm{hom}(H, G)$, $\mathrm{sub}(H, G)$, and $\mathrm{ind}(H, G)$, along with several additional results. Finally, in \Cref{sec:connection}, we establish relationships between DAG treedepth and treedepth, as well as between DAG treewidth and treewidth.
\section{Preliminaries}\label{sec:prelim}

In this section, we present several definitions that will be used throughout the paper. All graphs considered are simple and may be either directed or undirected. For standard terminology in graph theory, we refer the reader to the textbook by Douglas West~\cite{west2001introduction}. We use the following standard notations for some well-known graphs: $e$ or ($uv$) for edges in a graph, $P_k$ for $k$-vertex paths, $C_k$ for $k$-cycles, $K_k$ for $k$-cliques, $K_k - e$ for a $k$-clique with one edge missing, $K_{m,n}$ for complete bipartite graphs. A $k$-star is a $(k+1)$-vertex graph with a vertex $u$ adjacent to vertices $v_1, \dotsc, v_k$ and no other edges. A \emph{star graph} is a $k$-star for some $k$. For a graph $G$ and $S \subseteq V(G)$, we denote by $G[S]$ the subgraph of $G$ induced by the vertices in $S$.
We begin with the definition of graph homomorphisms.
\begin{definition}
\label{def:hom}(Graph Homomorphism)
    Given two graphs $G$ and $H$, a graph homomorphism from $G$ to $H$ is a map $\phi: V(G) \rightarrow V(H)$ such that if $(u,v) \in E(G)$, then $(\phi(u),\phi(v)) \in E(H)$.
    
    A homomorphism $\phi$ is called a \emph{partial homomorphism} if $\phi(v)$ is not necessarily defined for all $v \in V(G)$ and if $(u, v) \in E(G)$ and $\phi(u)$ and $\phi(v)$ are defined, then $(\phi(u), \phi(v)) \in E(H)$.
\end{definition}

We now recall several fundamental graph-theoretic parameters, including treewidth, treedepth, and a few others.

\begin{definition}(Tree Decomposition)\label{def:treedecomposition}
    A \emph{tree decomposition} of a graph $G$ is a pair $(T, \{B(t)\}_{t \in V(T)})$, where $T$ is a tree and each node $t \in V(T)$ is assigned a subset $B(t) \subseteq V(G)$, called a \emph{bag}.
\end{definition}

A tree decomposition of a graph $G$ has the following properties:

\begin{itemize}
    \item \textbf{Connectivity Property:} For every vertex $v \in V(G)$, the set of nodes $\{t \in V(T) \mid v \in B(t)\}$ forms a connected component in $T$.

    \item \textbf{Edge Property:} For every edge $e = \{u, v\} \in E(G)$, there exists a node $t \in V(T)$ such that both $u$ and $v$ are in $B(t)$.
\end{itemize}

\begin{definition}(Treewidth)\label{def:treewidth}
    Let the width of a tree decomposition be defined as $\max_{t \in V(T)} |B(t)| - 1.$ Then, the \emph{treewidth} of a graph $G$, denoted as $\mathrm{tw}(G)$, is the minimum width over all possible tree decompositions of $G$.
\end{definition}

\begin{definition}(Elimination Tree)\label{def:eliminationtree}
An \emph{elimination tree} of a connected graph $G$ is a rooted tree $(T, r)$, where $r \in V(G)$ is the root, and each subtree rooted at a child of $r$ is an elimination tree of a connected component of the graph $G \setminus \{r\}$. For an empty graph (i.e., a graph with no vertices), the elimination tree is defined to be empty.

 The \emph{depth} of an elimination tree $(T, r)$ is the length (i.e., the number of vertices) of the longest path from the root $r$ to any leaf in the tree.

\end{definition}

\begin{definition}(Treedepth)\label{def:treedepth}
    The \emph{treedepth} of a graph $G$, denoted as $\td(G)$, is the minimum depth among all possible elimination trees of $G$.
\end{definition}

\begin{definition}(Edge Contraction)\label{def:edgecontract}
    An \emph{edge contraction} is an operation in which an edge $(u, v)$ is removed from the graph and its two vertices $u$ and $v$ are merged into a single vertex $uv$. All edges incident to $u$ or $v$ are now considered incident to the new vertex $uv$.
\end{definition}

\begin{definition}(Minor and Induced Minor)\label{def:minorindminor}
    A graph $G$ is said to be a \emph{minor} of another graph $G'$ if $G$ can be obtained from $G'$ by deleting vertices, deleting edges, and contracting edges. A graph $G$ is an \emph{induced minor} of a graph $G'$ if it can be obtained by contracting edges and deleting vertices of $G'$.
\end{definition}

Let $H$ be a pattern graph and $G$ be the host graph. We denote the number of homomorphism from $H$ to $G$ by $\mathrm{hom(H,G)}$, number of subgraphs of $H$ in $G$ by $\mathrm{sub}(H,G)$, and number of induced subgraph of $H$ in $G$  by $\mathrm{ind}(H,G)$.

\begin{definition}(Spasm)\label{def:spasm}
Let $H$ be a graph. If $H$ is a clique, then $\spasm(H) = \{H\}$.
Otherwise, let $\mathcal{I}$ denote the set of all independent sets of $H$ of size at least two.  
For any $I \in \mathcal{I}$, let $\merge(H, I)$ be the graph obtained by merging all vertices in $I$ into a single vertex. Then, $\spasm(H) = \{H\} \cup \bigcup_{I \in \mathcal{I}} \spasm(\merge(H, I)).$

\end{definition}

Note that for any fixed pattern graph $H$, the number of subgraph isomorphisms from $H$ to a host graph $G$ can be expressed as a linear combination of homomorphism counts from graphs in $\spasm(H)$ to $G$. 
More precisely, there exist constants $\{\alpha_{H'}\}_{H' \in \spasm(H)}$, such that
$\mathrm{sub}(H,G) \;=\; \sum_{H' \in \spasm(H)} \alpha_{H'} \,\mathrm{hom}(H',G)$. This identity is a key tool in the literature for counting subgraphs (see, for example, \cite{Alon1997, 10.1145/3055399.3055502})

\begin{definition}(Directed Acyclic Graph (DAG))
    Given an undirected graph $H = (V_H, E_H)$, an acyclic orientation of $H$ is an assignment of a direction to each edge $\{u, v\} \in E_H$, converting it to either $(u \rightarrow v)$ or $(v \rightarrow u)$, such that the resulting directed graph $\vec{H}$ is a directed acyclic graph (DAG) that is, $\vec{H}$ contains no directed cycles.
\end{definition}

Let $\vec{H}$ be a directed acyclic graph (DAG), and let $S = S(\vec{H})$ denote the set of \emph{sources} in $\vec{H}$, i.e., the set of vertices with in-degree zero. Otherwise, we denote $u \in V(\vec{H}\setminus S)$ as a non-source vertex. For any two vertices $u, v \in V(\vec{H})$, we say that \emph{$v$ is reachable from $u$} if there exists a directed path from $u$ to $v$ in $\vec{H}$.

For a source vertex $s \in S$, let $\mathrm{R}_H(s)$ denote the set of all vertices reachable from $s$ in $\vec{H}$. More generally, for a set of sources $B = \{s_1, s_2, \ldots, s_\ell\} \subseteq S$, we define, $\mathrm{R}_H(B) = \bigcup_{i=1}^{\ell} \mathrm{R}_H(s_i).$ When the underlying graph is clear from the context, we define the set of reachable vertices from $s$ as $\mathrm{R}(s)$.

\begin{definition}(Subdivision Vertex)
Let $G = (V,E)$ be an undirected graph and let $\{u,v\} \in E$.
The graph obtained after subdividing the edge $\{u,v\}$ is the graph
$G' = (V \cup \{w\},\, (E \setminus \{\{u,v\}\}) \cup \{\{u,w\}, \{w,v\}\})$, where $w$ is a new vertex not in $V$.
The vertex $w$ is called a \emph{subdivision vertex}.
\end{definition}

\begin{definition}(Hypergraph)
    A hypergraph is a pair $\mathcal{H}=(V, E)$, where $V$ is a finite set of vertices, and $E \subseteq 2^{V}$ is a non-empty subset of $V$, called hyperedges.
\end{definition}

\begin{definition}(Generalized hypertree decomposition (GHD)) A GHD of a hypergraph $\mathcal{H}=(V,E)$ is a tuple $\langle T, (B_u)_{u \in N(T)}, (\lambda_u)_{u \in N(T)} \rangle$, such that $T=\langle N(T),E(T)\rangle$ is a rooted tree and the following conditions holds:

\begin{enumerate}
    \item for each $e \in E$, there is a node $u \in N(T)$ with $e \subseteq B_u$.

    \item for each $v \in V$, the set $\{u \in N(T) | v \in B_u\}$ is connected in $T$.

    \item for each $u \in N(T)$, $\lambda_u$ is a function $\lambda_u: E \rightarrow |\{0,1\}$ with $B_u \subseteq B(\lambda_u)$
\end{enumerate}
    
\end{definition}

\begin{definition}(Generalized hypertree width of $\mathcal{H}$ ($ghw(\mathcal{H})$))
The width of a GHD is the maximum weight of the function $\lambda_u$ over all nodes $u$ in $T$. The $ghw(\mathcal{H})$ is defined as the minimum width of all GHDs.
\end{definition}

 In \cite{bressan2021faster}, Bressan described a way to construct a bipartite graph $\textsc{Bip}(\vec{H})$ from any DAG $\vec{H}$. For completeness, we briefly explain this construction and then describe a simpler version called $G_S$, which only uses the source vertices of $\vec{H}$. We will use $G_S$ later in several results to get useful bounds.

\subsection{Construction of $G_S$ from $\textsc{Bip}(\vec{H})$}\label{sec:construction}
Given a directed acyclic graph $\vec{H}$ with a fixed acyclic orientation, Bressan \cite{bressan2021faster} constructed a bipartite graph $\textsc{Bip}(\vec{H})$ as follows:
\begin{itemize}
    \item The vertex set of $\textsc{Bip}(\vec{H})$ denoted as $V(\textsc{Bip}(\vec{H})) = (S, V(H) \setminus S)$, where $S$ is the set of sources in $\vec{H}$.

    \item An edge $\{u, v\} \in E(\textsc{Bip}(\vec{H}))$ if and only if $v \in \mathrm{R}_H(u)$, i.e., $v$ is reachable from the source vertex $u$ in $\vec{H}$.

\end{itemize}

Here, every edge in $\textsc{Bip}(\vec{H})$ has one of its endpoints as a source vertex and another endpoint as a non-source vertex. Now, we perform edge contraction. Note that one non-source vertex may be reachable from several sources and therefore have several options for edge contraction. We sequentially contract exactly one edge for each non-source and get the resultant graph. Note that the vertices of the resultant graph are exactly the set of source vertices. The resulting graph may differ depending on the choice of edge contraction. We denote by $\mathcal{G}_{\vec{H}}$ (we drop the subscript when the DAG is clear from the context), the set of all such graphs. We will later demonstrate how to utilize this family of graphs to derive several bounds for DAG treedepth and DAG treewidth.

\subsection{Model of Computation} The model of computation that we consider is the unit-cost RAM model. In particular, we can store labels of vertices and edges of the host graph in a constant number of words\footnote{In the TM model or the log-cost RAM model, storing labels of vertices would take $O(\log n)$ space.}. In this model, algorithms based on fast matrix multiplication and/or treewidth mentioned above use polynomial space. However, the brute-force search algorithm uses only constant space as it only needs to store $k$ vertex labels at a time (Recall that we regard $k$ as a constant).

\section{DAG Treedepth}\label{sec:treedepth}

In \cite{bressan2021faster}, Bressan introduced DAG treewidth ($\mathrm{dtw}$), a structural parameter analogous to treewidth, and used it to design efficient algorithms for computing $\mathrm{hom}(H, G)$, $\mathrm{sub}(H, G)$, and $\mathrm{ind}(H, G)$. Motivated by this, we define a related parameter called DAG treedepth ($\mathrm{dtd}$), analogous to treedepth. While DAG treewidth enables the design of fast algorithms but suffers from hardness constraints, DAG treedepth allows the design of constant-space algorithms and does not exhibit the same intractability barriers.

In this section, we first formally define DAG treedepth and prove \Cref{thm:dtdalgo}. In ~\Cref{sec:dtdobstraction}, we identify the obstruction sets for DAG treedepth 0, 1, and 2. In \Cref{sec:dtdlower} we present conditional lower bounds for pattern counting and a few results on small pattern counting. 
In \Cref{sec:dtdconstant}, we compute $\mathrm{hom}(H, G)$, $\mathrm{sub}(H, G)$, and $\mathrm{ind}(H, G)$ using constant space, improving the space complexity of the fast algorithms of \cite{bressan2021faster} from polynomial to constant. We also provide upper bounds using the number of vertices and the number of edges of the pattern graph $H$.

\begin{definition}
    \label{def:dtd}
    Let $\vec{H}$ be any DAG. A \emph{DAG elimination forest} of $\vec{H}$ is defined as a collection of rooted trees, constructed recursively as follows:
    \begin{itemize}
        \item If $\vec{H}$ is empty, the forest is empty.
        %\item If $H$ is a single source, the forest is a single-node tree.
        \item If the underlying undirected graph of $\vec{H}$ is disconnected, then construct DAG elimination trees for each connected component, and their union is a DAG elimination forest for $\vec{H}$.
        \item Otherwise, $\vec{H}$ has at least one source and its underlying undirected graph is connected. In this case, the forest is a tree. Pick a source $s$ arbitrarily, delete that source and all vertices reachable from it in $\vec{H}$. The root of the tree is $s$, and its sub-trees are the trees in the DAG elimination forest for the remaining DAG.
    \end{itemize}

    The \emph{DAG treedepth} of a DAG $\vec{H}$ is defined as the minimum depth of any elimination forest constructed in the above fashion. For an undirected graph $H$, its DAG treedepth is defined as the maximum DAG treedepth taken over all acyclic orientations of $\vec{H}$.

\end{definition}

\begin{remark}
Notice that the definition of DAG treedepth is analogous to the definition of treedepth except that at each step we can delete a source and all non-sources reachable from it, instead of only a single node. For DAG treedepth, the nodes of the DAG elimination forest correspond exactly to the DAG's sources.
\end{remark}

\begin{observation}
  ~\Cref{fig:dadexample} shows two examples where the treedepth is unbounded. For instance, cliques and complete bipartite graphs have unbounded treedepth. However, the DAG treedepth of a clique is 1, and for complete bipartite graphs, it is 2.
\end{observation}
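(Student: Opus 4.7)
The plan is to verify three sub-claims independently: (i) cliques $K_n$ and complete bipartite graphs $K_{m,n}$ have treedepth unbounded in $n$ and in $\min(m,n)$ respectively; (ii) the DAG treedepth of $K_n$ equals $1$; (iii) the DAG treedepth of $K_{m,n}$ with $m,n\geq 2$ equals $2$. Claim (i) is standard and can be cited: removing any vertex of $K_n$ yields $K_{n-1}$, so every elimination tree of $K_n$ is a path and $\td(K_n)=n$, while an analogous argument yields $\td(K_{m,n})\geq \min(m,n)+1$.

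For (ii) I would observe that every acyclic orientation of $K_n$ is the comparability DAG of a total order on $V(K_n)$. In particular there is a unique source $s$ (the minimum element of the order), and every other vertex is an out-neighbor of $s$. Hence $\mathrm{R}(s)=V(K_n)\setminus\{s\}$, deleting $\{s\}\cup\mathrm{R}(s)$ empties the DAG, and the recursive definition produces a single-node elimination tree of depth $1$. Since this holds for every acyclic orientation, taking the maximum over orientations (the convention inherited from the DAG-treewidth framework of \cite{bressan2021faster}) still gives DAG treedepth exactly $1$.

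For (iii) I would prove matching bounds. Lower bound: use the orientation of $K_{m,n}$ in which every edge is directed from the left part $A$ to the right part $B$; the sources are exactly $A$, the reachable set of any chosen source $a\in A$ is $B$ (there are no $B\to A$ edges), and deleting $\{a\}\cup B$ leaves $A\setminus\{a\}$ as $m-1$ isolated sources, yielding an elimination tree of depth $2$. Upper bound: let $\vec{H}$ be any acyclic orientation of $K_{m,n}$; no source can lie on both sides, because $a\in A$ and $b\in B$ both being sources would require the edge $\{a,b\}$ to be oriented away from each, a contradiction, so WLOG all sources lie in $A$. Any source $a$ has only outgoing edges, hence $a\to b$ for every $b\in B$ and thus $B\subseteq \mathrm{R}(a)$; moreover every non-source $a'\in A$ has an incoming edge from some $b'\in B\subseteq \mathrm{R}(a)$, so $a'\in \mathrm{R}(a)$ as well. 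The residual DAG after removing $\{a\}\cup \mathrm{R}(a)$ consists only of sources in $A\setminus\{a\}$ that are pairwise non-adjacent (all their neighbors lay in the deleted $B$), so each is a depth-$1$ component and the tree rooted at $a$ has depth at most $2$.

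The main conceptual hurdle is fixing the convention that the DAG treedepth of an undirected graph is the maximum over acyclic orientations (matching the DAG-treewidth framework of \cite{bressan2021faster}); this is what makes the $A\to B$ orientation the binding case in~(iii), since other orientations of $K_{m,n}$ can actually achieve depth~$1$. Once this convention is settled, the rest follows immediately from the recursive definition together with the structural fact that the sources of any acyclic orientation of a bipartite graph form an independent set confined to a single side.
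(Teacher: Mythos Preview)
Your proposal is correct and in fact more thorough than the paper, which treats this observation as self-evident from \Cref{fig:dadexample}: the paper only exhibits one orientation of $K_5$ and one of $K_{3,4}$ together with their DAG elimination trees, without arguing that these are the worst orientations. Your argument supplies exactly that missing piece---in particular the verification that in \emph{every} acyclic orientation of $K_{m,n}$ the sources lie on one side and any single source already reaches all of $B$ and all non-source vertices of $A$---so the approach is the same as the paper's illustration, just carried out rigorously.
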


\begin{figure}[h]
    \centering

    \begin{subfigure}[b]{0.48\textwidth}
        \centering
        \includegraphics[width=\textwidth]{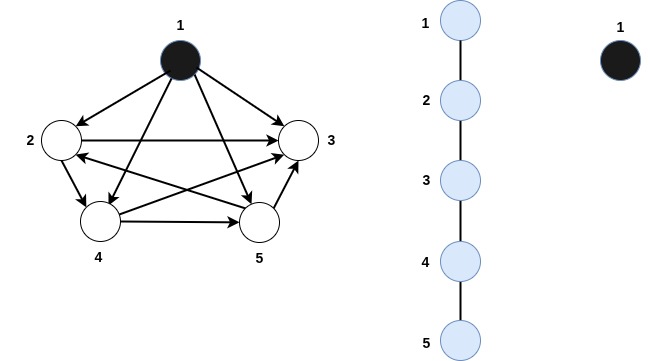}
        \caption{The first figure is $K_5$ with one source. The second figure is the elimination tree of $K_5$. The third figure is the DAG Elimination tree of $K_5$.}
        \label{fig:dtdclique}
    \end{subfigure}
    \hfill
    \begin{subfigure}[b]{0.48\textwidth}
        \centering
        \includegraphics[width=\textwidth]{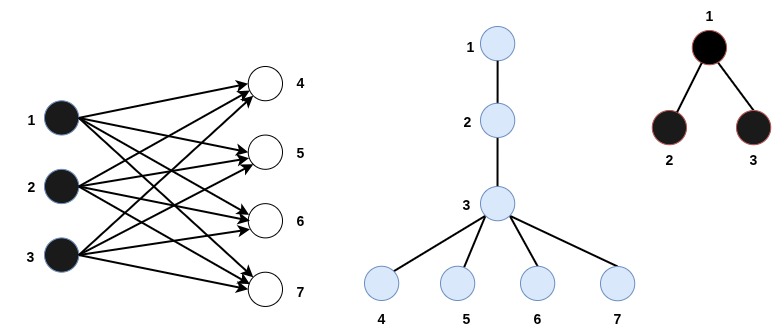}
        \caption{The first figure is $K_{3,4}$ with 3 sources. The second figure is the elimination tree of $K_{3,4}$. The third figure is the DAG Elimination tree of $K_{3,4}$.}
        \label{fig:dtdkmn}
    \end{subfigure}

    \caption{The black nodes are sources, and the white nodes are non-sources. In figure (a), the DAG treedepth is 1, but the treedepth is $5$. In figure (b), DAG treedepth is 2, but treedepth is 4.}
    \label{fig:dadexample}
\end{figure}

\begin{algorithm}[h]
\caption{COUNT-HOM(G, T, v, H, $\sigma$)\label{alg:count-hom-mtd}}
\begin{algorithmic}[1]
\Require{$G$ - The $d$-degenerate host graph}
\Require{$T$ - The DAG elimination tree for $H$}
\Require{$v$ - A vertex in $T$ (or $H$)}
\Require{$H$ - The pattern DAG}
\Require{$\sigma$ - A partial homomorphism from $H$ to $G$}
\State $c \gets 0$
\ForAll{$u\in V(G)$}
  \State $p \gets 0$
  \ForAll{$\sigma'$ extending $\sigma$ to $\{v\} \cup R(v)$ such that $\sigma'(v) = u$}
    \State $p \gets 1$
    \ForAll{children $w$ of $v$ in $T$}
      \State $p \gets p \times \text{COUNT-HOM(G, T, w, H, $\sigma'$)}$
    \EndFor
  \EndFor
  \State $c \gets c + p$
\EndFor
\State \Return $c$
\end{algorithmic}
\end{algorithm}

Next, we provide one of our main theorems, which is implied by \Cref{alg:count-hom-mtd}.
\dtdalgo*

\begin{proof}
    See \Cref{alg:count-hom-mtd}. The required count is obtained by \texttt{COUNT-HOM(G, T, r, H, $\{\}$}), where $T$ is the DAG elimination tree of depth $t$ and $r$ is the root of $T$.
    
    We claim that \texttt{COUNT-HOM(G, T, v, H, $\sigma$)} correctly computes the number of homomorphisms that extend the partial homomorphism $\sigma$ to all the sources in the sub-tree of $T$ rooted at $v$ and all the non-sources reachable from those sources when $\sigma$ is a partial homomorphism that maps all ancestors of $v$ in $T$ and all non-sources reachable from those vertices. We prove this using an induction on the height of the node $v$ in $T$. 

    In the base case, $v$ is a leaf. In this case, the algorithm is correct because we are simply iterating over all possible mappings $v$. Any non-source reachable from $v$ is already mapped in $\sigma$ or is only reachable from $v$. So once we fix the image of $v$ and all vertices only reachable from it, we only have to check whether these added images to $\sigma$ are consistent with the other defined vertices in $\sigma$. This is constant-time.

    If $v$ is an internal node, then observe that after mapping $v$ and all non-sources reachable from it, the subgraphs of $H$ spanned by subtrees of $v$ in $T$ are disjoint. That is, if $u$ and $w$ are two sources in separate subtrees of $v$ in $T$, then for any non-source $x$ reachable from $v$, $w$ is also reachable from $v$ or one of its ancestors. So $\sigma(x)$ is defined. We can count extensions of $\sigma$ to sources in each of the subtrees and new non-sources reachable from them independently and compute the total by multiplying these counts.

    In each iteration of the outer loop of \Cref{alg:count-hom-mtd}, observe that the loop in line~4 can only have $g(k, d)$ many iterations for some function $g$ as there are at most $d$ outgoing edges from any vertex in $H$. The length of any simple path from $v$ in $H$ is bounded by $k$. The number of iterations of the loop in line~6 is bounded by $k$. When a recursive call is made, the depth of $v$ increases by $1$. Therefore, the time complexity is $f(k, d)n^t = O(n^t)$ for some function $f$. Recall that we consider $k$ and $d$ as constants. So we can absorb them into the $O(.)$ notation. Each level of the recursion uses only constant space, and the depth of the recursion is at most $t$. Since we regard $t$ as a constant, the algorithm uses only constant space.
\end{proof}

Having established that bounded DAG treedepth enables constant-time homomorphism counting, a natural algorithmic question arises: given a directed graph $G$, how can one determine its DAG treedepth? In particular, since our main result assumes the parameter as part of the input, it is essential to understand the computational complexity of detecting and computing this parameter. We therefore study the problem of deciding whether a graph has DAG treedepth at most $k$, and of constructing a corresponding decomposition when it exists.

\begin{theorem}\label{thm:dtdverified}
 Given a DAG $H$, it can be verified in $O(g(k)n^{f(k)})$ time whether $dtd(H)\le k$ or not, where $f(k)$ and $g(k)$ are some computable functions.
\end{theorem}
\begin{proof}
    We can design an XP algorithm to check whether, for a given DAG $H$, $\operatorname{dtd}(H) \le k$.  
The algorithm (\Cref{alg:dtd-check}) recursively constructs a DAG elimination tree that satisfies the required properties.

We proceed as follows.  
Initially, we pick a source vertex $s$ of $H$, delete $s$ and all non-source vertices reachable from it (denoted by $R(s)$), and obtain the reduced DAG $H' = H - \{s\} - R(s)$.  
In the elimination tree, $s$ becomes the root.

If $H'$ is connected, we recursively pick a source $s_1$ from $H'$, delete $s_1$ and its reachable vertices, and make $s_1$ a child of $s$ in the elimination tree.

If $H'$ is disconnected, say $H'$ has connected components $H_1', H_2', \dots, H_r'$, we recursively apply the same process to each $H_i'$.  
In the elimination tree, the roots of the elimination trees of $H_i'$ are made children of $s$.

The above process continues until the graph becomes empty.  
If there exists a sequence of choices of sources that yields an elimination tree of height at most $k$, then $\operatorname{dtd}(H) \le k$.

Since at each step we try all possible choices of sources and the depth of recursion is at most $k$, and there can be at most $O(n)$ sources at any step, the running time is bounded by $O(g(k)n^{f(k)})$.  
Hence, the algorithm runs in XP time with respect to parameter $k$.

\begin{algorithm}[h]
\caption{Checking whether $\operatorname{dtd}(H) \le k$ for a DAG $H$}
\label{alg:dtd-check}
\begin{algorithmic}[1]
\Require{A directed acyclic graph $H = (V,E)$ and an integer $k \ge 1$}
\Ensure{Returns \textbf{true} if $\operatorname{dtd}(H) \le k$, otherwise \textbf{false}}
\Function{CheckDTD}{$H, k$}
    \If{$H$ is empty}
        \State \Return \textbf{true}
    \EndIf
    \If{$k < 1$}
        \State \Return \textbf{false}
    \EndIf
    \For{each source vertex $s$ of $H$}
        \State Let $R(s)$ be the set of non-source vertices reachable from $s$
        \State Let $H' = H - \{s\} - R(s)$
        \If{$H'$ is disconnected}
            \State Let $\{H_1', H_2', \ldots, H_r'\}$ be the connected components of $H'$
            \State $\textit{valid} \gets \textbf{true}$
            \For{each component $H_i'$}
                \If{\textsc{CheckDTD}$(H_i', k-1)$ is \textbf{false}}
                    \State $\textit{valid} \gets \textbf{false}$
                    \State \textbf{break}
                \EndIf
            \EndFor
            \If{$\textit{valid}$}
                \State \Return \textbf{true}
            \EndIf
        \Else
            \If{\textsc{CheckDTD}$(H', k-1)$ is \textbf{true}}
                \State \Return \textbf{true}
            \EndIf
        \EndIf
    \EndFor
    \State \Return \textbf{false}
\EndFunction
\end{algorithmic}
\end{algorithm}

\end{proof}

Next, we analyze the structure of the DAG elimination tree and derive a structural constraint on the sources. Intuitively, if two sources have a unique common reachable vertex, then any elimination tree must reflect this dependency by placing them along a common root-to-leaf path. The following lemma formalizes this observation.

\begin{lemma}\label{lem:uniqueintersection}
Let $\vec{H}$ be a directed acyclic graph (DAG) with a fixed orientation. Suppose there are two source vertices $u$ and $v$ such that there exists a vertex $w \in \mathrm{R}_H(u) \cap \mathrm{R}_H(v)$, and $w \notin \mathrm{R}_H(s)$ for any other source $s \neq u, v$. Then, in any DAG elimination tree of $\vec{H}$, the sources $u$ and $v$ must lie on the same root-to-leaf path.
\end{lemma}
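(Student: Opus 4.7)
The plan is to argue by induction on $|V(\vec{H})|$ by analyzing the root of an arbitrary DAG elimination tree $T$ of $\vec{H}$. Before beginning the induction, I would observe that $u$, $v$, and $w$ all lie in the same connected component of the underlying undirected graph of $\vec{H}$, since both $u$ and $v$ admit directed paths to $w$; consequently $u$ and $v$ are placed in the same tree of the DAG elimination forest, and the task reduces to showing that inside that tree they lie on a common root-to-leaf path.

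Let $s_1$ be the first source chosen, i.e., the root of the relevant tree of $T$. If $s_1 \in \{u, v\}$, say $s_1 = u$, then because $v$ is a source of $\vec{H}$ it has no incoming edges and so is not reachable from $u$; hence $v$ survives the removal of $\{u\} \cup \mathrm{R}_H(u)$ and must appear in some subtree of $u$, immediately placing $u$ and $v$ on the same root-to-leaf path.

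The main case is $s_1 \notin \{u, v\}$. Set $D' = \vec{H} \setminus (\{s_1\} \cup \mathrm{R}_H(s_1))$, so that the subtrees hanging below $s_1$ in $T$ are DAG elimination trees of the connected components of $D'$. The strategy is to verify that the hypothesis of the lemma is preserved in passing from $\vec{H}$ to $D'$, so that the induction hypothesis can be applied to the component of $D'$ containing $u$, $v$, $w$. I would check three points: (i) $u, v, w \in V(D')$, using that $u, v$ are sources of $\vec{H}$ and therefore unreachable from $s_1$, together with $w \notin \mathrm{R}_H(s_1)$ by hypothesis; (ii) $w$ remains reachable from $u$ and from $v$ in $D'$, because any directed $u$-to-$w$ path meeting $\{s_1\} \cup \mathrm{R}_H(s_1)$ would yield a directed $s_1$-to-$w$ path in $\vec{H}$, contradicting the hypothesis; (iii) $w$ is still not reachable from any other source of $D'$. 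For (iii) the key sub-observation is that every source of $D'$ is already a source of $\vec{H}$: if a candidate ``new'' source $s' \in V(D')$ had a predecessor $p$ in $\vec{H}$, then the fact that $p$ was removed forces $p \in \{s_1\} \cup \mathrm{R}_H(s_1)$, which in turn puts $s'$ itself into $\mathrm{R}_H(s_1)$, contradicting $s' \in V(D')$.

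Once (i)--(iii) are in place, the induction hypothesis applied to the connected component of $D'$ containing $u$ and $v$ places them on a common root-to-leaf path of the corresponding sub-elimination-tree, and prepending $s_1$ extends this to a root-to-leaf path of $T$. I expect the principal obstacle to be item (iii): carefully arguing that no genuinely new source can arise in $D'$ and then translating the ``$w$ is reachable only from $u, v$'' condition from $\vec{H}$ down to $D'$. The remaining parts reduce to straightforward path-preservation arguments powered by the uniqueness hypothesis on $w$.
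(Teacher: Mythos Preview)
Your inductive argument is correct, and in fact it spells out several points that deserve care (no new sources appear in $D'$; the directed $u$--$w$ and $v$--$w$ paths survive into $D'$).  The paper, however, does not proceed by induction: it gives a short two-line contradiction argument.  Assuming $u$ and $v$ lie in different subtrees of some DAG elimination tree, it observes that the vertex sets ``discovered'' along the respective root-to-$u$ and root-to-$v$ branches are disjoint once the branches diverge; but $w$ is reachable from both $u$ and $v$ and from no other source, so $w$ cannot be covered by any source on the common ancestor path, forcing $u$ and $v$ to remain in the same connected component after those eliminations, which contradicts their being placed in different subtrees.

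The two proofs share the same core observation---eliminating any source $s\neq u,v$ cannot separate $u$ from $v$ because $w$ still connects them---but package it differently.  The paper's proof is shorter and arguably a bit informal (it does not make explicit, as you do, that sources of the residual DAG are already sources of $\vec{H}$, or that the relevant directed paths survive).  Your induction makes these verifications explicit and would scale cleanly to variations of the lemma; the paper's contradiction trades that rigor for brevity.
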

\begin{proof}
    Assume, for contradiction, that there exists a DAG elimination tree in which $u$ and $v$ do not appear on the same root-to-leaf path. Then, by the definition of an elimination tree, $u$ and $v$ belong to different subtrees, and the sets of vertices discovered by eliminating $u$ and $v$ are disjoint. However, since $w$ is reachable from both $u$ and $v$ and from no other source, the elimination tree would fail to account for the reachability of $w$, violating the correctness of the elimination tree. Therefore, $u$ and $v$ must lie on the same root-to-leaf path.
\end{proof}

\begin{remark}
 Note that, unfortunately, $dtd$ is not minor closed like $td$. One can observe that $K_6$ has only one source and, therefore, $dtd(K_6)=1$. However, considering the $C_6$ subgraph of $K_6$, we know that $dtd(C_6)=3$. Thus, $dtd$ is not even subgraph-closed.
\end{remark}

Let $I$ be an induced minor of a graph $\gmain$, then we show the following theorem.

\begin{theorem}\label{dtdinducedminor}
If $I$ is an induced minor of $\gmain$, then $dtd(\gmain)\geq dtd(I)$.
\end{theorem}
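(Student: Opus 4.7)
The plan is to induct on the number of elementary induced-minor operations, reducing the theorem to showing that (a) $dtd(H - v) \leq dtd(H)$ for every $v \in V(H)$ and (b) $dtd(H/e) \leq dtd(H)$ for every $e \in E(H)$. A key subtlety to keep in mind is that we cannot simply reuse the edge orientations of an optimal $\vec{H}$: for instance, orienting $C_4$ so that a single source reaches all other vertices gives a DAG with treedepth one, but deleting that source from the DAG leaves two sources and forces the resulting DAG treedepth to be two. Hence the proof must be free to re-orient the surviving edges of $H - v$ (respectively, $H/e$) when choosing the orientation that witnesses $dtd$ of the minor.

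The approach I would take uses the bridge between DAG treedepth and undirected treedepth established by \Cref{lem:dtdupper} and \Cref{lem:dtdlb}: for an optimal orientation $\vec{H}^{*}$ of $H$ there exists $G^{*} \in \mathcal{G}_{\vec{H}^{*}}$ with $td(G^{*}) = dtd(\vec{H}^{*}) = dtd(H)$. For each elementary operation on $H$, I would exhibit a DAG orientation $\vec{I}$ of the resulting graph together with a $G' \in \mathcal{G}_{\vec{I}}$ that is an induced minor of $G^{*}$. Since treedepth of undirected graphs is monotone under induced minors, combining this with \Cref{lem:dtdupper} yields
\[
  dtd(I) \leq dtd(\vec{I}) \leq td(G') \leq td(G^{*}) = dtd(H).
\]
For vertex deletion I would inherit the orientation of $\vec{H}^{*}$ on the surviving edges, re-orient the edges incident to any vertex that newly becomes a source, and reuse the pattern of contractions from $G^{*}$ on the surviving sources to obtain $G'$; for edge contraction I would collapse the merged vertex at the position of the later of its two endpoints in a topological order of $\vec{H}^{*}$, which descends to a contraction (or a contraction plus a deletion) on $G^{*}$.

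The main obstacle I anticipate is that the construction of $G_S$ in \Cref{def:assocfam} depends on a choice of which incident edge to contract at each non-source of $\textsc{Bip}(\vec{H})$, and these choices may become inconsistent with the induced-minor operation on $H$: a non-source of $\vec{H}$ may lose the particular source against which its $\textsc{Bip}$-edge was contracted, and deletion may spawn new sources that have no obvious counterpart in $G^{*}$. Revising the contraction choices so that the resulting $G'$ is genuinely obtainable from $G^{*}$ by vertex deletion and edge contraction only (never edge deletion), and verifying that the re-oriented $\vec{I}$ remains a DAG, is the delicate combinatorial core of the argument.
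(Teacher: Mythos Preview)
Your inequality chain
\[
  dtd(I) \leq dtd(\vec{I}) \leq td(G') \leq td(G^{*}) = dtd(H)
\]
breaks at the very first step. Recall that for an undirected graph $H$ one defines $dtd(H)=\max_{\vec H}\,dtd(\vec H)$ over all acyclic orientations. Hence for any single orientation $\vec I$ you have $dtd(\vec I)\le dtd(I)$, not the reverse. Starting from the extremal orientation $\vec H^{*}$ of $H$ and pushing it down to one particular orientation $\vec I$ of $I$ tells you only that this one $\vec I$ has small DAG treedepth; it says nothing about the orientation of $I$ that actually realises $dtd(I)$. Your own $C_4$ example illustrates exactly this trap: the single-source orientation of $C_4$ has $dtd=1$, and pushing that orientation to the induced subgraph $P_3$ gives a DAG with $dtd=1$, yet $dtd(P_3)=2$ because the bad orientation of $P_3$ is not the one inherited from your $\vec H^{*}$.

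The cure is to reverse the direction of the argument, which is precisely what the paper does. One must take an \emph{arbitrary} (in particular, the worst) orientation $\vec I$ of $I$ and extend it to an orientation $\vec H$ of $H$ in such a way that (i) the sources of $\vec I$ remain sources in $\vec H$ and (ii) no new source of $\vec H$ reaches any non-source of $\vec I$. The paper achieves this by layering $V(H)\setminus V(I)$ by distance from $I$ and orienting all cross edges outward. Then any DAG elimination forest of $\vec H$ restricts to one for $\vec I$, giving $dtd(\vec I)\le dtd(\vec H)\le dtd(H)$ for every $\vec I$, hence $dtd(I)\le dtd(H)$. The edge-contraction step is handled analogously: fix an arbitrary orientation of the contracted graph $H'$, lift it to $H$ by orienting the contracted edge, and transform the DAG elimination tree of $\vec H$ into one for $\vec{H'}$ by a small case analysis on whether the merged endpoints are sources. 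Your detour through $\mathcal G$ and treedepth monotonicity is not needed, and in the direction you set it up it cannot close the gap.
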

\begin{proof}
    To obtain this result, it suffices to prove the following two claims.

    \begin{enumerate}
        \item If $I$ is an induced subgraph of $\gmain$ then $dtd(\gmain)\geq dtd(I)$.
        \item Let ${H}$ be an undirected graph with $\mathrm{dtd}({H}) = k$. Let $H'$ be the undirected graph obtained by contracting a single edge in $H$. Then, $\mathrm{dtd}({H'}) \leq k.$
    \end{enumerate}

\paragraph*{Proof of Claim 1:}
Let $dtd(I) = k$. By definition, there exists an acyclic orientation $O_1$ of $I$ such that $dtd(I) = k$ under this orientation.

We now construct an acyclic orientation of $\gmain$ that extends $O_1$ and preserves the treedepth. Let $I' \subseteq \gmain$ be an induced subgraph isomorphic to $I$, with an isomorphism $f: V(I) \to V(I')$. Fix the orientation of $I'$ in $\gmain$ according to $O_1$.

Let $V(\gmain) = V(I') \cup \left(V(\gmain) \setminus V(I')\right)$. Partition the remaining vertices based on their distance from $I'$ in $\gmain$: define layers $D_1, D_2, \dots, D_j$, where

$$D_i = \{ v \in V(\gmain) \setminus V(I') \mid \text{dist}_{\gmain}(v, V(I')) = i \}.$$

Now, define a topological ordering of the vertices of $\gmain$ as:

$$V(I'), D_1, D_2, \dots, D_j.$$

Orient the edges from each $D_i$ to $D_{i+1}$ in the forward direction (i.e., from smaller to larger layers), and similarly orient the edges from $V(I')$ to $D_1$ forward. For edges within any $D_i$, choose an arbitrary acyclic orientation.

Note that after this orientation, we have $S(I)=S(\vec{H})\cap I'$. Also, for any non-source $u$ in $I$, we have $P_{I}(u)=P_{\vec{H}}(u)$ i.e.,  there is no source reaching $u$ other than the source vertices in $I$.

Now, suppose for contradiction that $dtd(\gmain) < k$ under this orientation. Let $T$ be a valid DAG elimination tree of $\gmain$ with DAG treedepth less than $k$. Then, the restriction of $T$ to the subgraph $I'$ forms a valid DAG elimination tree of $I'$, and hence of $I$ (by isomorphism), contradicting the assumption that $dtd(I) = k$. Therefore, $dtd(\gmain) \geq dtd(I)$.

\paragraph*{Proof of Claim 2:}

The DAG treedepth of an undirected graph $H$ is the maximum of the DAG treedepth of DAG $\vec{H}$. Let $H'$ be the graph obtained after contracting an edge $\{u,v\}$. We want to show that $dtd(H')\leq dtd(H)$. We pick an arbitrary but fixed acyclic orientation of $H'$. We copy the same orientation in $H$. Note that $\{u,v\}\notin E(H')$. So, the orientation is not known. We first give $(u,v)$ orientation to $\{u,v\}$. Note that the orientation of $H$ is acyclic. Let $w$ be the vertex in $H'$ obtained after contracting $\{u,v\}$. Let $T$ be a DAG elimination tree of $\vec{H}$ of width at most $k$. We will now construct the DAG elimination tree $T'$ of $\vec{H'}$.

\begin{itemize}
    \item \textbf{Case 1:} Both $u$ and $v$ are non-source vertices.

    In this case, we can assume without loss of generality that $P_H(u) \subseteq P_H(v)$, i.e., all source vertices reaching $u$ also reach $v$. Note that the set of source vertices remains unchanged after contraction, i.e., $S(\vec{H'}) = S(\vec{H})$. We consider the same DAG elimination tree decomposition $T$ of $\vec{H}$. One can see that $P_H(v) = P_{H'}(w)$. So, the source vertices in $P_{H'}(w)$ follow the ancestor-descendant relation.

\item \textbf{Case 2:} $u$ is a source vertex.

We consider two subcases depending on whether the new vertex $w$ is a source in $\vec{H'}$ or not.

\begin{itemize}
    \item  \textbf{Case A :} $w$ is a source vertex.

This happens only when $P_H(v) = \{u\}$, meaning $v$ was reachable only from $u$. In this case, $R_H(u) = R_{H'}(w) \setminus \{v\}$. We can construct a DAG elimination tree $T'$ of $\vec{H'}$ by simply replacing $u$ with $w$. Since no new vertices are introduced and treedepth remains unchanged, $\mathrm{dtd}(\vec{H'}) \leq k$.

\item \textbf{Case B:} $w$ is a non-source vertex.

Here, the contraction removes $u$ from the set of sources, so $S(\vec{H'}) = S(\vec{H}) \setminus \{u\}$. $v$ must be reachable by some source vertex $s$ other than $u$. Moreover, $$P_{H'}(w) = P_H(v) \setminus \{u\},$$
and for each source $s \in P_{H'}(w)$, $$R_{H'}(s) = R_H(s) \cup R_H(u).$$ 
Since $P_H(v)$ follows ancestor descendant relations in $T$, and since $u \in P_H(v)$, the subtree formed by $P_H(v)$ also includes $u$. If $u$ appears as an ancestor, then we replace $u$ by $s\in P_{H'}(w)$, which is closest to the root. As these modifications only replace one source with another and do not increase the depth of the tree, the resulting DAG elimination tree $T'$ of $\vec{H'}$ has the same width as $T$, i.e., $\mathrm{dtd}(\vec{H'}) \leq k.$

\end{itemize}

\end{itemize}
Finally, after repetitive use of Claim 2, we get an induced subgraph, and then we use Claim 1. Thus, $dtd(I)\leq dtd(H)$.
\end{proof}

\subsection{DAG Treedepth Obstruction}\label{sec:dtdobstraction}
In this section, we obtain a complete induced-minor characterization of patterns with DAG treedepth at most $2$. This helps in identifying exactly those patterns that admit cubic-time, constant-space algorithms. In addition, we show that, for every fixed integer \(k\), the class of graphs of DAG treedepth less than \(k\) admits an induced-minor characterization that is directly related to the minor obstruction set for treedepth and their supergraphs.

\begin{definition}(Reduced DAG Elimination Tree)
    A DAG elimination tree is called a \emph{reduced DAG elimination tree} if for every source vertex $s$, there exists a non-source vertex $u$ such that in the elimination tree, $u\in R(\textsc{Parent}(s))\cap R(T(s))$ and $u\notin R(\text{ancestor of \textsc{Parent}(s))}.$
\end{definition}

One can check that the leaf source vertex has a unique intersection with its parent source in the DAG elimination tree $T$. We now establish the following lemma.

\begin{lemma}\label{lem:eliminatio}
    Let $H$ be a DAG with $dtd(H) = d$. Then, there exists a reduced DAG elimination tree of $H$ with maximum depth $d$.
\end{lemma}
\begin{proof}
    Suppose that in a DAG elimination tree $T$, for every source vertex $s$, there does not exist a non-source vertex $u$ such that $u \in R(\textsc{Parent}(s)) \cap R(T(s)) \quad \text{and} \quad u \notin R(\text{ancestor of } \textsc{Parent}(s)).$ In this case, we can modify the tree by removing the edge between $s$ and $\textsc{Parent}(s)$, and instead adding an edge between $s$ and $\textsc{Parent}(\textsc{Parent}(s))$.

    We apply the above modification in a bottom-up fashion. That is, we start with the leaf source vertices and ensure each one is non-conflicting. Then we proceed upward through the tree, level by level. At each step, this process reduces the number of conflicts in the elimination tree. Moreover, the height of the tree does not increase. Therefore, the resulting tree $T'$ is a valid \emph{reduced} DAG elimination tree with maximum depth $d$.
\end{proof}

\begin{figure}[h]
    \centering
\captionsetup{justification=centering}
    \begin{subfigure}[b]{0.35\textwidth}
        \centering
        \includegraphics[width=\textwidth]{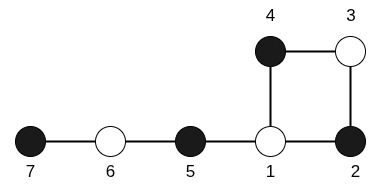}
        \caption{$H_1$}
       \label{fig:h1}
    \end{subfigure}
    \hfill
    \begin{subfigure}[b]{0.45\textwidth}
        \centering
        \includegraphics[width=\textwidth]{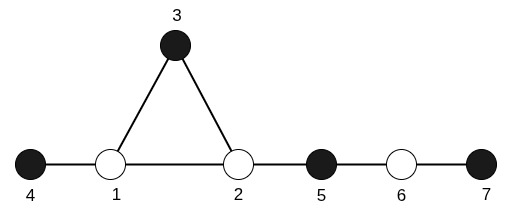}
        \caption{$H_2$}
        \label{fig:h2}
    \end{subfigure}
    
    \caption{$H_1$ and $H_2$ are obstruction for DAG treedepth 2. }
    \label{fig:dag-subfigures}
\end{figure}

Next, we derive the induced minor obstructions for DAG treedepth 0, 1, and 2.

\begin{theorem}\label{thm:dtdobs}
The induced minor obstructions for connected DAGs with DAG treedepth are as follows:
\begin{enumerate}
    \item For DAG treedepth 0: the obstruction is $K_1$.
    \item For DAG treedepth 1: the obstruction is $P_3$.
    \item For DAG treedepth 2: the obstructions are the graphs $C_6, P_7, H_1, H_2$.
\end{enumerate}
\end{theorem}
\begin{proof}
    We can easily verify the DAG treedepth of the listed graphs. We now show that these graphs form the complete set of induced minor obstructions for each treedepth level.

\textbf{Treedepth 0:}
 Any DAG with DAG treedepth one must have at least one source. A single-vertex DAG is trivially an example with one source, and its DAG treedepth is one. In fact, having exactly one source is also a sufficient condition for the DAG treedepth to be one. For example, all DAGs that are cliques (i.e., complete DAGs with a single source) have DAG treedepth one.

\textbf{Treedepth 1:}
 A DAG with treedepth two must have at least two source vertices. In such a case, there must exist at least one non-source vertex that is reachable from both sources. This implies that the path on three vertices, $P_3$, where a single non-source is reachable from two distinct sources, forms a minimal induced minor obstruction for DAG treedepth one.

\textbf{Treedepth 2:}
A DAG with treedepth three must have at least three source vertices. Also, there is no source $s$ such that for each pair of sources $s_i$ and $s_j$, $R(s_i)\cap R(s_j)\subseteq R(s)$. Otherwise, we make $s$ the root and all other sources as children and get a DAG elimination tree of depth two.

Let $T$ be its reduced DAG elimination tree, defined over the set of sources $S$, and let $r$ be the root of $T$. For every source vertex $s \in S$ that is a leaf of $T$ at depth three, let $s_i$ be its parent in $T$. Then there must exist at least one non-source vertex $v \in V \setminus S$ such that: $v \in (R(s) \cap R(s_i)) \setminus R(r)$.

Moreover, for any source vertex $s_1 \in S$ that is a child or grandchild of $r$, there exist a non-source vertex $v$ that is reachable from both $s_1$ and $r$, i.e.,$\exists v \in V \setminus S \ \text{such that} \ v \in R(r) \cap R(s_1)$, otherwise, the graph would not be connected. Otherwise, we can make $r$ the root of a child. There are two cases: 1) the root $r$ has only one child $s$ in the elimination tree $T$, 2) the root $r$ has more than one child.

\begin{itemize}
    \item \textbf{Case 1:} The root $r$ has only one child $s$ in the elimination tree $T$.

    If $s$ has only one child $s_1$ in $T$, then we know that there exists $u\in R(s_1)\cap R(s)$ and $u\notin R(r)$. Also, there exists $v\in R(r)\cap R(s_1)$ and $v\notin R(s)$ otherwise we can root at $s$. Also, there exist $w\in R(r)\cap R(s)$ and $w\notin R(s_1)$. Otherwise, we can root at $s_1$. Thus, the structure induces a $C_6$; otherwise, the DAG treedepth would be at most two. 
    
    If $s$ has more than one child. Let $\{s_1,s_2,\ldots s_l\}$ be the children of $s$ in $T$. So, $R(s_i)\cap R(s_j)\subseteq R(s)$. Given $T$ is reduced elimination tree, there exists $u\in R(s)\cap R(s_i)$ and $u\notin R(r)$. Also, there exists $v\in R(r)\cap R(s_i)$ and $v\notin R(s)$, otherwise, we can root at $s$ and get a DAG elimination tree of depth two. Now, there exist $w$ such that either 
    \begin{itemize}
        \item $w\in R(s)\cap R(s_j)$ and $w\notin R(s_i)$ or
        \item $w\in R(s)\cap R(r)$ and $w\notin R(s_i)$
    \end{itemize}
    Otherwise, we can root at $s_i$ and get an elimination tree of depth two.
    Considering the case $w\in R(s)\cap R(s_j)$ and $w\notin R(s_i)$, we can check that it forms a structure over seven vertices, such that $P_7$ is a subgraph. So, the induced minor would be either $P_7$ or one of its supergraphs, such that the $dtd$ of the supergraph of $P_7$ is greater than two. We can check that those supergraphs are either $H_1$ or $H_2$. So, the induced minor obstructions in this case are $ P_7$, $ H_1$, and $H_2$.

    Now, consider the case $w\in R(s)\cap R(r)$ and $w\notin R(s_i)$, we can check that it forms an induced $C_6$.

    \item \textbf{Case 2:} The root $r$ has more than one child.

     In this case, we first handle the case when one child has a child in $T$. So, there exists $u\in R(s)\cap R(s_1)$ and $u\notin R(r)$. Also, there must exist some non-source in the intersection set of pairwise source vertices other than $s$ that is not reachable by $s$. So, there exist $v\in R(s_1)\cap R(r)$ and $v\notin R(s)$. Now, there exists a non-source in pairwise source reachability intersection; otherwise, we can root at $s_1$ and get a DAG elimination tree of depth two. So, there exist $w$ such that $w\in R(r)\cap R(s')$ and $w\notin R(s_1)$, where $s'$ is another child of $r$ in $T$; or $w\in R(r)\cap R(s)$ and $w\notin R(s_1)$. One can identify $P_7$ or its supergraphs of $dtd$ three as induced minors. 

     Consider the case when more than one child of $r$ has children in $T$. Let $s$ and $s'$ be child of $r$ and $\{s_1,s_2,\ldots s_{l}\}$ are children of $s$ and $\{s'_1,s'_2,\ldots s'_{l'}\}$ are children of $s'$. Since $T$ is reduced DAG elimination tree, there exist $u\in R(s)\cap R(s_i)$ and $u\notin R(r)$. Also, there exist $v \in R(s')\cap R(s_{i'}$ and $v\notin R(r)$. It is possible $u=v$ or $v\in R(s)$. Let $u\neq v$, then we know that there exist $w\in R(s_i)\cap R(r)$ and $w\notin R(s)$. Again, we can check that $P_7$ is a subgraph. Now, consider the case $u=v$, then there exist $w'\in R(s')\cap R(r)$ and $w'\notin R(s)$. Again, we get $P_7$. 
    
    Thus, the induced minor obstructions for treedepth two are $C_6$, $P_7$, and the supergraphs of $P_7$ with $ dtd=3$. We can check that $H_1$ and $H_2$ are such graphs. Hence, the induced minor obstructions of tree depth two are $ C_6$, $ P_7$, $ H_1$, and $H_2$.
\end{itemize}
\end{proof}

In the following two lemmas, we now give relations between the set of minor obstructions of treedepth and the set of induced minor obstructions of DAG treedepth. Let $H_{\text{sub}}$ be the graph obtained by subdividing each edge of $H$ exactly once. Then we have the following two lemmas.
\begin{lemma}\label{lem:indmin1}
    Let $H$ be a minor obstruction for treedepth $k-1$. Then $dtd(H_{\text{sub}}) \geq k$.
\end{lemma}
\begin{proof}
    Let $T$ be an elimination tree of $H$ with depth $k$. Suppose, for contradiction, that there exists a DAG elimination tree $T'$ of $H_{\text{sub}}$ with depth at most $k-1$.

    Note that in $H_{\text{sub}}$, the original vertices of $H$ are treated as source vertices, while each subdivision vertex is a non-source. If $T'$ was valid, we could use the same structure as an elimination tree for $H$ with depth at most $k-1$, contradicting the fact that $H$ is a minor obstruction for treedepth $k$. More concretely, since $H$ contains an edge $\{u, v\}$, there must exist two vertices $u$ and $v$ not on the same root-to-leaf path in $T'$. In $H_{\text{sub}}$, this edge is subdivided by a non-source vertex $w$ such that $w \in R(u) \cap R(v)$ and $w \notin R(s)$ for any other source $s \ne u, v$. This violates the reachability intersection condition of DAG elimination trees, so $T'$ is not valid. Hence, no DAG elimination tree of $H_{\text{sub}}$ can have depth less than $k$, therefore $dtd(H_{\text{sub}}) = k$.
\end{proof}

\begin{lemma}\label{lem:indmin2}
    Let $H$ be an induced minor such that $dtd(H) = k$. Then, there exists a sequence of edge contractions between source and non-source vertices such that the resulting graph $H'$ is a minor of treedepth $k$.
\end{lemma}
\begin{proof}
     We observe that after performing edge contractions, the remaining vertices correspond exactly to the set of source vertices in $H$. Let $T$ be a DAG elimination tree of $H$ with depth $k$. For each non-source vertex $u$ that is reachable from multiple source vertices, we contract $u$ with the source vertex that is an ancestor of all other sources reaching $u$ in $T$. We claim that the resulting graph $H'$, obtained after these edge contractions, has treedepth exactly $k$. 
    
    Suppose, for contradiction, that $H'$ admits an elimination tree $T'$ of depth at most $k-1$. Now consider using the same tree structure $T'$ as a DAG elimination tree for $H$, where the contracted vertices are replaced back with their original non-source vertices. Consider any two sibling source vertices $s_i$ and $s_j$ in $T'$. For each non-source vertex $u \in R(s_i) \cap R(s_j)$, by construction, $u$ was contracted into some source $s$ that is an ancestor of both $s_i$ and $s_j$ in $T$. Hence, $u \in R(s)$, contradicting the assumption that $dtd(H) = k$. Therefore, such a $T'$ of depth $k-1$ cannot exist, and we conclude that the treedepth of $H'$ is exactly $k$.
\end{proof}

Using \Cref{lem:indmin1}, we observe that the subdivision of any minor obstruction for treedepth \(k\) is an induced-minor obstruction for DAG treedepth \(k\). Conversely, by \Cref{lem:indmin2}, from any induced-minor obstruction for DAG treedepth \(k\), one can obtain a minor obstruction for treedepth \(k\) via a sequence of edge contractions.

Let \(S\) denote the set of minor obstructions for treedepth \(k\). For each \(H \in S\), we construct a corresponding pattern \(H_{\mathrm{sub}}\). By construction, each \(H_{\mathrm{sub}}\) is an induced-minor obstruction for DAG treedepth \(k\). Moreover, we must also consider all supergraphs of \(H_{\mathrm{sub}}\). Let \(H^{*}\) be a supergraph of \(H_{\mathrm{sub}}\) such that \(\mathrm{dtd}(H^{*}) \ge k\). Then \(H^{*}\) is also an induced-minor obstruction for DAG treedepth \(k\).

Therefore, the set of induced-minor obstructions for DAG treedepth \(k\) can be derived from the set of minor obstructions for treedepth \(k\). Since the number of minor obstructions for treedepth \(k\) is $\frac{(2^{2^{k-1}-k})(1 + 2^{2^{k-1}-k})}{2}$ \cite{giannopoulou2009obstructions}, it implies that the number of induced-minor obstructions for DAG treedepth \(k\) is at least $\frac{(2^{2^{k-1}-k})(1 + 2^{2^{k-1}-k})}{2}$. 

Giannopoulou and Thilikos \cite{giannopoulou2009obstructions} have listed the obstruction set of treedepth up $3$. The number of graphs in the obstruction set of treedepth $3$ is twelve. Consequently, to obtain the induced minor obstructions, one must consider the subdivisions of these graphs together with all of their supergraphs. Therefore, we gave induced minor obstruction till DAG treedepth at most $2$.

\subsection{Conditional Hardness and Small Pattern Counting}\label{sec:dtdlower}

In this section, we present conditional lower bounds for counting patterns in cubic time using constant space. We then show that all patterns on at most nine vertices can be counted in cubic time and constant space. Finally, we exhibit a specific pattern on ten vertices and prove a conditional lower bound, showing that it cannot be counted in \(O(n^{4-\varepsilon})\) time and constant space.

We have observed that DAGs with $\mathrm{dtd}=1$ are precisely those having a single source. Moreover, specific orientations, such as the star graph with all leaves as sources, may contain multiple sources but still admit a linear-time, constant-space algorithm for counting homomorphisms. More generally, if a DAG admits an orientation where either the number of sources or non-sources is one, then homomorphism counting remains tractable in linear time and constant space. The set of patterns with $dtd=1$ is cliques and star graphs. Leveraging the framework of \cite{curticapean2017homomorphisms}, we extend our result for subgraph counting. One can observe that the spasm of cliques and star graphs is a clique or a star graph and therefore has $dtd$ one. Hence, one can count the number of star graphs and cliques appearing as subgraphs in linear time and constant space. However, for the path $P_4$, an alternating edge orientation yields two sources and two non-sources, suggesting a potential boundary of tractability. This leads us to the following conjecture:

\begin{restatable}{conjecture}{conj}
Let $d$ be a constant. For any constant $\varepsilon > 0$, there is no $O(n^{2-\varepsilon})$ time constant-space algorithm for counting $\mathrm{hom}(P_4, H)$, where the input graph $H$ is a $d$-degenerate graph.
\end{restatable}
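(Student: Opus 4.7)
The conjecture is a fine-grained, constant-space lower bound, so any plausible proof will be conditional on an existing hardness assumption; an unconditional result appears beyond current techniques. My starting point is the identity
\[
\mathrm{hom}(P_4, H) \;=\; 2\sum_{\{u,v\}\in E(H)}\deg(u)\deg(v),
\]
obtained by summing out the endpoints of the path. This makes the problem trivial in $O(n)$ space (precompute all degrees, then sum over edges in $O(n+m)$ time), but in constant space any natural algorithm must re-extract each endpoint's degree on the fly, yielding $O(n\cdot m)=O(n^2)$ on a $d$-degenerate host. The conjecture is the assertion that no cleverness can meaningfully beat this.

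The main plan is a fine-grained reduction from \emph{Element Distinctness} (ED), for which the Beame--Saks--Sun--Vee time--space tradeoff $T\cdot S\in\widetilde{\Omega}(n^2)$ in the $R$-way branching-program model gives $T\in\Omega(n^{2-o(1)})$ whenever $S=O(1)$. I would encode an instance $x_1,\ldots,x_n$ as a bipartite host $H$ on $O(n)$ vertices of bounded degeneracy, where each input element contributes only a constant number of edges and duplicates among the $x_i$ concentrate adjacency on a common ``value'' vertex. The quadratic dependence on degree in the identity then amplifies repetitions into a distinguishable contribution to $\mathrm{hom}(P_4, H)$, so that a known-position bit or the parity of the count would recover the ED answer. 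A constant-space simulator of the reduction would convert any hypothetical $O(n^{2-\varepsilon})$, constant-space $P_4$-counting algorithm into a constant-space $O(n^{2-\varepsilon})$ algorithm for ED, contradicting the tradeoff.

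The hardest step is preserving degeneracy. A naive encoding places one vertex per distinct value, but a single heavy duplicate can drive that vertex's degree to $\Omega(n)$ and destroy the $d$-degeneracy bound. The mitigation I would try is to split each value vertex into $O(\log n)$ bucket copies joined by a sparse low-degeneracy expander, so that every copy is incident to $O(1)$ input edges while the degree-product aggregation still transports the duplicate signal through the gadget in a controlled way. A secondary obstacle is the gap between decision hardness (ED) and an arithmetic counting problem on a RAM: this can be bridged either by starting instead from a counting variant of ED, such as counting colliding pairs, which inherits the same branching-program tradeoff, or by calibrating the gadget so that a specified bit of $\mathrm{hom}(P_4, H)$ directly encodes the ED output. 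If both obstacles resist, the natural fallback is to recast the statement in the same conditional spirit as the OMv, OV, or $3\mathrm{SUM}$ conjectures, with the ``proof'' becoming an explicit reduction from the chosen conjecture to constant-space $P_4$-counting via the very same degree-product gadget.
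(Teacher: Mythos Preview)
The paper does not prove this statement: it is explicitly labeled a \emph{conjecture}, introduced with ``This leads us to the following conjecture'' and immediately followed by ``Assuming the above conjecture, we now give\ldots''. There is no proof to compare against; the conjecture is used purely as a hypothesis for the downstream characterization of linear-time, constant-space subgraph counting.

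Your proposal is therefore not a reconstruction of a missing argument but an attempt to prove something the authors left open, and it has a structural problem at the very first step. The identity
\[
\mathrm{hom}(P_4,H)=2\sum_{\{u,v\}\in E(H)}\deg(u)\deg(v)
\]
is correct, but it cuts against you rather than for you. In the standard adjacency-list model where each list carries its length (which is the default in the unit-cost RAM model the paper adopts), this sum is computable in $O(m)=O(dn)$ time and $O(1)$ working space by a single pass over the edges. That would outright refute the conjecture in that model. Your claim that ``in constant space any natural algorithm must re-extract each endpoint's degree on the fly'' tacitly assumes a model in which degree queries cost $\Theta(\deg(v))$; you never justify or even state that assumption, and the paper gives no indication that this is the intended model. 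Any proof of the conjecture must begin by pinning down an input model in which degree lookup is not free, and must argue that the restriction is natural; otherwise the statement is simply false.

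Even setting the model issue aside, the Element Distinctness route is not close to a proof. The Beame--Saks--Sun--Vee tradeoff is for comparison/$R$-way branching programs, not unit-cost RAMs with random access to a read-only input; transferring it requires care you do not supply. The degeneracy obstacle you flag is genuine and your ``split into $O(\log n)$ bucket copies joined by an expander'' sketch does not obviously preserve the degree-product signal, since spreading the colliding edges across buckets is precisely what kills the quadratic concentration that was supposed to distinguish duplicates. Your own fallback, to ``recast the statement in the same conditional spirit as the OMv, OV, or 3SUM conjectures,'' is essentially what the paper already does: it \emph{states} the conjecture and moves on.
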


%\conj*
Assuming the above conjecture, we now give a characterization of patterns that appear as subgraphs and can be counted in linear time and constant space.

\begin{theorem}\label{thm:lowerdtd1}
Patterns that are countable as subgraphs in linear time and constant space are precisely the graphs with $\operatorname{dtd}=1$ and the star graphs.
\end{theorem}
\begin{proof}
    A DAG with $\operatorname{dtd}=1$ has exactly one source vertex. 
Such graphs are either cliques or stars. 
Moreover, every graph in the spasm of a clique or a star also has $\operatorname{dtd}=1$. 
Hence, for each vertex in the spasm of a $\operatorname{dtd}=1$ graph, the number of homomorphisms can be counted in linear time and constant space. 
Consequently, the number of subgraphs of a DAG with $\operatorname{dtd}=1$ can also be computed in linear time and constant space.

Furthermore, there exist orientations of star graphs in which the number of sources is greater than one. However, in such cases, the number of non-sources is exactly one. 
Therefore, counting star subgraphs only requires computing the degree of each vertex in the host graph, which can be done in linear time and constant space.
\end{proof}

Next, to characterize patterns that can be counted in cubic time and constant space, we assume the following conjecture for a general graph.

\begin{conjecture}\label{conj:triangle}
For any constant $\varepsilon > 0$, there is no $O(n^{3-\varepsilon})$-time and constant-space algorithm for counting $C_3$ (triangles) in an arbitrary input graph $G$.
\end{conjecture}

It is shown in \cite{bera2020linear} that the problem of counting $C_6$ even in degeneracy two is equivalent to counting triangles in a general graph. Thus, assuming  \Cref{conj:triangle}, there is no $O(n^{3-\varepsilon})$-time and constant-space algorithm for counting $C_6$ in a graph $G$ with degeneracy equal to two. Further, consider the obstruction set of graphs with $\operatorname{dtd}=2$. It can be verified that the spasm of every graph appearing in this obstruction set contains a triangle $C_3$ when reduced in a manner similar to~\cite{bera2021near}. 
Hence, we have the following :

\begin{theorem}
There is no $O(n^{3-\varepsilon})$-time and constant-space algorithm for counting graphs with $\operatorname{dtd}=3$ in a graph $G$ with degeneracy equal to two.
\end{theorem}

Next, we prove that every pattern graph on at most nine vertices has DAG treedepth at most three. Consequently, by applying the framework of~\cite{curticapean2017homomorphisms}, it follows that homomorphisms, subgraphs, and induced subgraphs of all such patterns can be counted in \(O(n^3)\) time using constant space.

\begin{theorem}\label{thm:9vertices}
Let $d$ be a constant. For all patterns with at most nine vertices, we can count the number of occurrences as induced subgraphs in $O(n^3)$ time and constant space for an $n$-vertex $d$-degenerate graph given as input.
\end{theorem}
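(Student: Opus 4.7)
The plan is to combine the standard reduction from induced-subgraph counting to homomorphism counting with the constant-space algorithm of \Cref{thm:dtdalgo}. For any pattern $H$ on $k \le 9$ vertices there exists a finite set $\mathcal{F}(H)$ of graphs and rationals $\{c_F : F \in \mathcal{F}(H)\}$ such that
$$\mathrm{ind}(H, G) \;=\; \sum_{F \in \mathcal{F}(H)} c_F \cdot \mathrm{hom}(F, G)$$
for every graph $G$. This is obtained by first performing inclusion-exclusion over edge-supergraphs of $H$ on the vertex set $V(H)$ to pass from $\mathrm{ind}$ to $\mathrm{sub}$, and then applying the spasm decomposition (\Cref{def:spasm}) to pass from $\mathrm{sub}$ to $\mathrm{hom}$. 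Since both operations weakly decrease the number of vertices, every $F \in \mathcal{F}(H)$ satisfies $|V(F)| \le k \le 9$.

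The core step is then the following purely combinatorial claim: every undirected graph $F$ with $|V(F)| \le 9$ has $\mathrm{dtd}(F) \le 3$. I would establish this by a finite case analysis over unlabelled graphs of order at most nine. By \Cref{thm:dagtdtwo}, any $F$ that is $\{C_6, P_7, H_1, H_2\}$-induced-minor-free already satisfies $\mathrm{dtd}(F) \le 2$, so it suffices to handle those $F$ (necessarily on at least six vertices) containing one of these obstructions as an induced minor. For each such $F$ and each acyclic orientation $\vec{F}$, I would exhibit a DAG elimination forest of depth at most three; the principal tools are the inequality $\mathrm{dtd}(\vec{F}) \le \td(G_S)$ for some $G_S \in \mathcal{G}_{\vec{F}}$ from \Cref{lem:dtdupper}, which reduces the task to finding a contraction of treedepth at most three, and the XP-verifier of \Cref{alg:dtd-check} run with parameter $k = 3$, which handles the cases where a clean structural argument is unwieldy. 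Since every obstruction to $\mathrm{dtd} \le 2$ has at most seven vertices, only two ``slack'' vertices need to be absorbed on top of the obstruction, which heuristically explains why one extra level of elimination is always enough.

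Once the bound $\mathrm{dtd}(F) \le 3$ is in hand for every $F \in \mathcal{F}(H)$, \Cref{thm:dtdalgo} computes each $\mathrm{hom}(F, G)$ in $O(n^{\mathrm{dtd}(F)}) \le O(n^3)$ time and constant space. The number of terms depends only on $k \le 9$ (a constant), so summing them with the fixed rational coefficients $c_F$ yields $\mathrm{ind}(H, G)$ in $O(n^3)$ time and constant space, giving the theorem.

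The main obstacle is clearly the combinatorial claim that every graph on at most nine vertices has DAG treedepth at most three. This bound must be genuinely tight rather than a coarse counting estimate, since the companion discussion in the paper exhibits a ten-vertex pattern for which an $O(n^{3-\varepsilon})$ algorithm is ruled out conditionally; hence any argument that automatically extends to ten vertices cannot work. Consequently, a fully rigorous proof is likely to rely on a computer-assisted enumeration via \Cref{alg:dtd-check} rather than a purely structural bound, even though the tools developed in \Cref{sec:propertydtd} and \Cref{sec:dtdobstraction} handle large families in bulk and substantially cut down the remaining case work.
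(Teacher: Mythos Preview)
Your overall reduction matches the paper's: write $\mathrm{ind}(H,G)$ as a finite linear combination of $\mathrm{hom}(F,G)$ with each $|V(F)|\le 9$, show $\mathrm{dtd}(F)\le 3$ for every such $F$, and invoke \Cref{thm:dtdalgo}. Where you diverge is in how you establish the bound $\mathrm{dtd}\le 3$. You propose to dispose of the $\mathrm{dtd}\le 2$ cases via the obstruction set of \Cref{thm:dagtdtwo} and then handle graphs containing $C_6$, $P_7$, $H_1$, or $H_2$ as an induced minor by ad hoc arguments or computer enumeration via \Cref{alg:dtd-check}, conceding that a rigorous proof ``is likely to rely on a computer-assisted enumeration.''

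The paper instead gives a short uniform argument that avoids enumeration entirely. For a connected DAG on at most nine vertices, split by the number of sources. At most three sources gives depth three immediately. At most three non-sources: a root source plus one child can be chosen to cover all non-sources, so depth at most three. Four sources: a depth-four elimination tree would force, for each of the $\binom{4}{2}=6$ source pairs, a non-source reachable from exactly that pair, hence at least six non-sources and ten vertices. Five sources and four non-sources: take a source $s$ with $|R(s)|$ maximal; either some other $s'$ has $|R(s)\cup R(s')|=4$, giving a two-vertex path covering everything, or every remaining source misses at most one non-source outside $R(s)$, so $s$ at the root with one child per uncovered non-source suffices. This disposes of every DAG in a handful of lines. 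Your expectation that enumeration is unavoidable is therefore too pessimistic; the direct source/non-source count is both simpler and more informative, and in particular it makes transparent why the bound breaks at ten vertices (the one-subdivision of $K_4$ realises exactly the four-sources/six-non-sources obstruction).
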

\begin{proof}
    Assume that the pattern is connected.  
Consider a DAG with at most three sources. 
It is easy to observe that the \textup{dtd} of such a DAG is bounded by three.

Now consider the case when the number of non-source vertices is at most three. 
If the number of non-sources is at most two, then we can construct an elimination tree by taking one source vertex as the root and another source vertex reaching the other non-source vertex, while all remaining source vertices can be attached as leaves. 
If the number of non-sources is three, since the graph is connected, there must exist a source vertex that reaches at least two non-source vertices. 
We then select one more source vertex that reaches the remaining non-source vertex. 
Thus, we obtain an elimination tree of height at most three.

Next, consider the case where the DAG has four source vertices. 
The $\operatorname{dtd}$ of this DAG equals four if and only if there exist unique reachable non-source vertices for every pair of source vertices. 
Hence, such a DAG must contain at least six non-source vertices.

Now consider the case when the number of source vertices is five. 
Then the number of non-source vertices is four. 
Let $s$ be a source vertex with the maximum reachability set $R(s)$, so $|R(s)| \ge 2$. 
If there exists another source vertex $s'$ such that $|R(s)\cup R(s')| = 4$, then we again obtain an elimination tree of height three. 
Otherwise, for every remaining source vertex $s'$, we have $|R(s)\setminus R(s')|\le 1$. 
In this case, we make $s$ the root of the elimination tree and attach one source vertex as a child corresponding to each remaining non-source vertex. 
Thus, we again obtain an elimination tree of height at most three. 

Therefore, every DAG with at most nine vertices has $\operatorname{dtd}\le 3$. 
Since the spasm and all supergraphs of any graph with at most nine vertices also have at most nine vertices, 
we can count the induced subgraphs of all patterns with at most nine vertices in $O(n^3)$ time and constant space.
\end{proof}

Using the above theorem and the fact that $dtd(P_{10})=3$, we can conclude that \begin{corollary}
    We can count subgraph $P_{10}$ in time $O(n^3)$ using constant space.
\end{corollary}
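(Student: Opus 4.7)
The plan is to reduce the counting of $P_{10}$ as a subgraph to a constant-size rational linear combination of homomorphism counts, and then invoke \Cref{thm:dtdalgo} on each term. The first step will be to write
\[
\mathrm{sub}(P_{10}, G) \;=\; \sum_{H' \in \spasm(P_{10})} \alpha_{H'}\, \mathrm{hom}(H', G),
\]
where the coefficients $\alpha_{H'}\in\mathbb{Q}$ depend only on the pattern and $|\spasm(P_{10})|$ is a constant.

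Next I will argue that every summand has DAG treedepth at most three. Each $H'\in\spasm(P_{10})$ is obtained from $P_{10}$ by iteratively merging independent sets, so $P_{10}$ itself is the unique element on exactly ten vertices and every other element has at most nine vertices. The case analysis carried out in the proof of \Cref{thm:9vertices} in fact establishes the stronger statement that \emph{any} DAG on at most nine vertices has DAG treedepth at most three, so $\mathrm{dtd}(H')\le 3$ for every $H'\ne P_{10}$ in the spasm. Combining this with the given fact $\mathrm{dtd}(P_{10})=3$, I obtain $\mathrm{dtd}(H')\le 3$ uniformly across the spasm.

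Finally I will apply \Cref{thm:dtdalgo} to each summand: each $\mathrm{hom}(H',G)$ can be computed in $O(n^3)$ time and constant space on any $d$-degenerate host $G$. Since the spasm is a constant-size set, the rational linear combination can be accumulated sequentially with only $O(1)$ extra workspace in the unit-cost RAM model, yielding total running time $O(n^3)$ and total space $O(1)$.

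I do not foresee a substantive obstacle. The one subtle point is to confirm that the proof of \Cref{thm:9vertices} truly yields the uniform bound $\mathrm{dtd}\le 3$ for every at-most-nine-vertex DAG (its statement is phrased about induced-subgraph counting, but the case analysis in its proof produces a depth-three DAG elimination tree in every case), which is exactly the structural fact needed here for subgraph counting via the spasm decomposition.
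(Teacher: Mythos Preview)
Your proof is correct and follows essentially the same approach as the paper: the paper's one-line argument (``using the above theorem and the fact that $\mathrm{dtd}(P_{10})=3$'') implicitly relies on exactly the spasm decomposition and the structural fact from the proof of \Cref{thm:9vertices} that every DAG on at most nine vertices has DAG treedepth at most three, which you have correctly spelled out. Your observation that the relevant content of \Cref{thm:9vertices} is this structural bound (rather than its stated conclusion about induced subgraphs) is precisely the point.
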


Using \Cref{thm:9vertices}, we get an $O(n^3)$ time constant space algorithm up to nine vertices. We now present the hardness results for ten vertices. Consider a DAG on ten vertices, which is one subdivision of $k_4$. Using the reduction shown in the proof of \Cref{con:paul}, we get that it reduces to counting $k_4$ in a general graph. 
Now, assuming the conjecture that no $O(n^{4-\varepsilon})$-time and constant-space algorithm exists, for counting $k_4$ in general graphs. We get that there is no $O(n^{4-\varepsilon})$-time and constant-space algorithm for counting $k_4$ with one subdivision in bounded degenerate graphs.

It is worth noting that efficient algorithms for counting cliques exist using combinatorial techniques or fast matrix multiplication. 
However, both these approaches require non-constant space. 
To achieve constant-space algorithms, one must rely on divide-and-conquer strategies based on tree depth or matched tree depth. 
As shown earlier, $\operatorname{dtd}$ is bounded above by tree-depth. 
For sparse patterns, matched tree-depth can lead to faster algorithms. 
For example, all patterns with at most eleven edges can be counted in $O(n^3)$ time and constant space. 
Consequently, as a corollary, the paths $P_{12}$ and cycles $C_{11}$ can also be counted as subgraphs in $O(n^3)$ time and constant space. 
However, for dense patterns and induced subgraph counting, DAG treedepth provides a more advantageous framework compared to previously known constant-space algorithms.
\section{Constant Space computation of $\mathrm{hom}(H,G), \mathrm{ sub}(H,G$ and $\mathrm{ind}(H,G)$}\label{sec:dtdconstant} In this section we show that we can compute $\mathrm{ind}(H,G)$ in time $O(n^{\lfloor\frac{k}{4}\rfloor+2})$ and $\mathrm{hom}(H,G)$, $\mathrm{sub}(H,G) $ in time $\min(O(n^{\lfloor\frac{k}{4}\rfloor+2}),O(n^{\lfloor\frac{l}{5}\rfloor+3}))$ using $O(1)$-space, where $|V_H|=k$, $|E_H|=l$ and, $n$ is the number of vertices in $G$. To this end, we prove the following theorem.

\begin{theorem}\label{thm:dtdk/4}
 Let $H$ be a pattern DAG with $k$ vertices. Then, $dtd(H) \leq \lfloor\frac{k}{4}\rfloor + 2$.   
\end{theorem}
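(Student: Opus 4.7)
The plan is to prove the bound by strong induction on $k$. For the base case $k \leq 4$, I directly verify that every DAG on at most four vertices has $\mathrm{dtd}(H) \leq 2 \leq \lfloor k/4 \rfloor + 2$, by a brief case analysis on the number of sources: with only four vertices, no reachability configuration can force an elimination tree deeper than two (in particular, forcing a chain of three or four sources on a single root-to-leaf path would require too many witnesses of the form promised by \Cref{lem:uniqueintersection}).

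For the inductive step $k \geq 5$: if the underlying undirected graph of $H$ is disconnected, I apply the induction hypothesis to each connected component and take their forest, since the DAG elimination forest has depth equal to the maximum depth over its component trees. When $H$ is connected, the clean subcase is when some source $s$ satisfies $|R(s)| \geq 3$. Making $s$ the root of the elimination tree peels off at least four vertices, so $H' = H \setminus (\{s\} \cup R(s))$ has at most $k - 4$ vertices; the induction hypothesis gives $\mathrm{dtd}(H') \leq \lfloor (k-4)/4 \rfloor + 2 = \lfloor k/4 \rfloor + 1$, and hence $\mathrm{dtd}(H) \leq 1 + \lfloor k/4 \rfloor + 1 = \lfloor k/4 \rfloor + 2$, as required.

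The subtle case is the \emph{shallow} regime in which every source satisfies $|R(s)| \leq 2$. A single elimination step now removes at most three vertices, so the naive recursion $\mathrm{dtd}(H) \leq 1 + \mathrm{dtd}(H')$ with $|V(H')| \leq k - 3$ is off by one precisely when $k \equiv 3 \pmod{4}$. To overcome this, I plan to exploit two structural consequences of the shallow condition: $(i)$ $|R(s)| \leq 2$ forces $H$ to have only three directed levels (sources, and non-sources at reachability-depths $1$ and $2$), so $H$ is essentially bipartite between sources and non-sources; and $(ii)$ since $\sum_{s \in S} |R(s)| \leq 2|S|$ and every non-source has positive in-degree, we must have $|S| \geq k/3$. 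Under $(i)$--$(ii)$, I will argue by case analysis on how the sets $\{s : v \in R(s)\}$ distribute over the non-sources $v$ that there is always a source $s^\ast$ whose removal fragments $H$ into components each of size at most $k-4$; applying the induction hypothesis to each component then closes the step.

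I expect the main obstacle to be rigorously establishing this disconnection claim for every connected shallow DAG in the $k \equiv 3 \pmod 4$ case. It amounts to showing that some non-source vertex in $R(s^\ast)$ acts as a \emph{cut} of the undirected structure of $H$ after removing $s^\ast$, which requires enumerating the bipartite reachability configurations (distinguishing, for instance, whether every source has $|R(s)| = 2$ versus some source with $|R(s)| = 1$, and tracking which non-sources are "shared" among many sources) and verifying connectivity outcomes in each case.
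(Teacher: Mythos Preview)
Your induction works cleanly in the easy case, but the shallow case has a genuine gap: the disconnection claim you rely on is simply false. Take $k=15$ and let $H$ be the bipartite DAG whose non-sources are $\{1,\dots,5\}$ and whose sources are the ten pairs $s_{ij}$ (for $1\le i<j\le 5$), with edges $s_{ij}\to i$ and $s_{ij}\to j$. This is connected, shallow (every $|R(s_{ij})|=2$), and $15\equiv 3\pmod 4$. Removing any source $s_{ij}$ together with $R(s_{ij})=\{i,j\}$ leaves $12$ vertices: the three remaining non-sources still form a connected triangle via $s_{34},s_{35},s_{45}$ (up to relabelling), and every one of the other six surviving sources retains at least one endpoint among them. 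So the residual graph is connected of size $12>k-4=11$, and no choice of $s^\ast$ fragments $H$ into pieces of size $\le k-4$. Your inductive step cannot close here.

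The paper sidesteps disconnection entirely. Instead of recursing on components, it builds a \emph{caterpillar} elimination tree directly: a single spine $T_p$ of sources, each chosen greedily to maximise $|R(s)\setminus R(D)|$, with all ``cheap'' sources (residual reachability $\le 1$) hung off as leaves. The accounting is that every spine vertex with residual reachability $\ge 3$ discovers four vertices (Case~1); and once all sources have residual reachability $\le 2$, the paper argues (Case~2) that each spine step still retires four vertices on amortised count, because removing two shared non-sources necessarily drops some other source's residual reachability to $\le 1$, making it a free leaf. The additive $+2$ absorbs the final leaf layer. In the $K_5$ example this gives a spine $s_{12},s_{34},s_{15}$ (say) of length $3$, with the remaining seven sources as depth-$4$ leaves, well within the bound of $5$. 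If you want to salvage your inductive framing, you would need to replace the one-shot disconnection claim with a two-step argument (remove a source, then show some \emph{second} source now has $|R|\le 1$ and can be absorbed), which is essentially what the paper does non-recursively.
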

\begin{proof}
We make a DAG elimination tree of depth at most $\lfloor\frac{k}{4}\rfloor + 2$. To this end, we add source vertices that maximize the number of discovered vertices.
  %To prove the bound, we pick the sources such that   
   %We follow a similar construction as Theorem~\ref {thm:kbyfive}, but instead of building bags,
   We incrementally construct a root-to-leaf path $T_p$ of a DAG elimination tree. Let $D$ be the set of discovered (processed) vertices, initialized as empty. At each step, select a source vertex $s$ such that $|R(s) \setminus R(D)|$ is maximized. We add $s$ as a child in $T_p$.

\begin{itemize}
    \item \textbf{Case 1:} $|R(s_i) \setminus R(D)| \geq 3$
    
Adding such a source $s_i$ increases the depth by one but makes at least four vertices discovered (one source and three non-sources). Since all subsequent sources are added below $s_i$, the reachability condition is preserved.

Now, consider the residual graph $G'$ after deleting all sources added so far and their reachable vertices. If $|R(s_i)| = 0$, then add $s_i$ as a child to the last vertex added in $D$. This increases depth by at most one. If $|R(s_i)| = 1$, say $R(s_i) = \{u\}$ then  if a source $s$ already in $D$ reaches $u$, attach $s_i$ under $s$. Otherwise, attach $s_i$ under the last added vertex in $D$, and attach any other sources that also reach only $u$ as its children.

In all these cases, the depth increases by at most two. We now reduce the graph to a residual subgraph where each source reaches exactly two non-sources, and each non-source is reached by at least two sources.

\item \textbf{Case 2:} $|R(s_i) \setminus R(D)| = 2$ and $|P(u) \setminus D| \geq 2$

If there exists a source $s$ such that $R(s) \cap R(s_i) = \emptyset$ for all $s_i$, we can add $s$ as a child to the last source in $D$, and attach all sources whose reachable sets are contained in $R(s)$ under $s$. This increases depth by at most 2.

Now suppose that every source $s_i$ has some overlap $R(s_i) \cap R(s_j) \ne \emptyset$ with another source $s_j$. Then, adding $s_i$ removes at least four vertices from the graph: the source $s_i$ and two non-sources shared with other sources. If any leftover source $s_j$ now reaches only one undiscovered vertex, we can handle it as in Case 1.

\end{itemize}

In every case, we discover at least 4 vertices when we increase the DAG treedepth by one. Also, for corner cases when $|R(s)|\leq 1$, we can handle them by increasing the treedepth by two. Hence, the depth of the elimination tree of the DAG is bounded by $\lfloor\frac{k}{4}\rfloor + 2$. Therefore, $dtd(H) \leq \lfloor\frac{k}{4}\rfloor + 2$.
 
\end{proof}

Using \Cref{thm:dtdalgo} and \Cref{thm:dtdk/4}, we get the following result.
\begin{theorem}\label{thm:dtdvertices}
  Consider any $k$-node pattern graph $H = (V_H, E_H)$. Then one can compute $\mathrm{hom}(H,G),\mathrm{sub}(H,G)$ and $\mathrm{ind}(H,G)$ in $O(f(k,d) \cdot n^{\lfloor\frac{k}{4}\rfloor+2})$ time using $O(1)$-space.
\end{theorem}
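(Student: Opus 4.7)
The plan is to combine the two immediately preceding results with the standard homomorphism-basis decomposition of subgraph and induced subgraph counts. First, for $\mathrm{hom}(H,G)$, the bound follows directly: by \Cref{thm:dtdk/4} any $k$-vertex pattern $H$ admits some acyclic orientation $\vec{H}$ whose DAG treedepth is at most $\lfloor k/4 \rfloor + 2$, and then \Cref{thm:dtdalgo} immediately yields an $O(f(k,d)\cdot n^{\lfloor k/4 \rfloor + 2})$-time, constant-space algorithm for counting homomorphisms from $H$ into the $d$-degenerate host $G$.

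For $\mathrm{sub}(H,G)$, I would invoke the spasm identity recalled in \Cref{def:spasm}, namely $\mathrm{sub}(H,G) = \sum_{H' \in \spasm(H)} \alpha_{H'}\, \mathrm{hom}(H',G)$, where the coefficients $\alpha_{H'}\in\mathbb{Q}$ and the set $\spasm(H)$ depend only on $H$. The key observation is that every $H' \in \spasm(H)$ is obtained by iteratively merging independent sets in $H$, and therefore has at most $k$ vertices. Hence \Cref{thm:dtdk/4} applies to each such $H'$, giving $\operatorname{dtd}(H') \leq \lfloor k/4 \rfloor + 2$ under a suitable orientation, and \Cref{thm:dtdalgo} computes each individual $\mathrm{hom}(H',G)$ in the desired time and constant space. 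For $\mathrm{ind}(H,G)$, I would use the analogous inclusion-exclusion / homomorphism-basis identity $\mathrm{ind}(H,G) = \sum_{H'} \gamma_{H'}\, \mathrm{hom}(H',G)$, where $H'$ ranges over a finite family of quotient graphs on at most $k$ vertices and the coefficients $\gamma_{H'}$ depend only on $H$; exactly the same analysis applies.

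The only point that needs explicit care is the constant-space claim, since a naive implementation might want to store the whole linear combination. The fix is standard: enumerate the graphs $H'$ appearing in the decomposition one at a time (this enumeration depends only on $k$ and can be hardcoded into the function $f(k,d)$, hence uses no space dependent on $n$), compute each $\mathrm{hom}(H',G)$ via \Cref{thm:dtdalgo} using only constant space, and maintain a single running accumulator $\sum \alpha_{H'}\,\mathrm{hom}(H',G)$ (respectively with $\gamma_{H'}$). Since both the number of summands and the coefficients depend only on $k$, the total space remains $O(1)$ in the unit-cost RAM model, while the total running time is $|\spasm(H)| \cdot O(f(k,d)\cdot n^{\lfloor k/4 \rfloor + 2}) = O(f'(k,d)\cdot n^{\lfloor k/4 \rfloor + 2})$ for a new function $f'$, giving the claimed bound.

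I do not expect any serious obstacle here: the result is essentially a bookkeeping corollary of \Cref{thm:dtdalgo} and \Cref{thm:dtdk/4}. The mildest subtlety is verifying that spasm elements and the supergraph/quotient graphs used for the induced count all have at most $k$ vertices (so that the bound $\lfloor k/4 \rfloor + 2$ from \Cref{thm:dtdk/4} remains uniform across all summands); this is immediate from the constructions.
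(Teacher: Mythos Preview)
Your proposal is correct and follows essentially the same approach as the paper, which simply states that the result follows from \Cref{thm:dtdalgo} and \Cref{thm:dtdk/4}; you have spelled out the spasm/homomorphism-basis reductions for $\mathrm{sub}$ and $\mathrm{ind}$ and the constant-space bookkeeping in more detail than the paper bothers to. One small phrasing issue: you write that $H$ ``admits some acyclic orientation'' of small DAG treedepth, but in fact \Cref{thm:dtdk/4} holds for \emph{every} acyclic orientation of a $k$-vertex pattern, and this is what is actually needed (since counting undirected homomorphisms into a degeneracy-oriented host requires summing over all acyclic orientations of the pattern); this does not affect the argument.
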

The above theorem, keeping the time complexity the same, improves the space complexity of \cite{bressan2021faster} from polynomial to a constant for counting all patterns of size $k$.

\begin{theorem}\label{thm:dtdedgebound}
 Let $H$ be a DAG with $l$ edges. Then, $dtd(H) \leq \lfloor\frac{l}{5}\rfloor + 3.$

\end{theorem}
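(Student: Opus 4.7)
The plan is to mirror the proof of Theorem~\ref{thm:dtdk/4} but replace the vertex-counting amortization with an edge-counting one. As in that proof, I would construct the DAG elimination tree greedily by selecting, at each step, a source $s$ in the residual DAG and removing $\{s\}\cup R(s)$, so that each chosen source ``charges'' one unit of depth against a batch of edges eliminated from the DAG.

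The central observation is that when $\{s\}\cup R(s)$ is deleted from the residual DAG, the number of edges eliminated is at least $|R(s)|$. Indeed, each non-source $v \in R(s)$ has at least one incoming edge from within $\{s\}\cup R(s)$: the reachability path from $s$ to $v$ must stay inside this set, since any detour through an already-discovered non-source $u \in R(D)$ would force $v \in R(D)$, contradicting the fact that $v$ is newly discovered. Additional edges removed include external incoming edges to $R(s)$ from other still-present sources (or from non-sources reachable from those sources) and any extra internal edges beyond the ``one per non-source'' minimum.

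I would then induct on $l$. The base case is $l \le 4$: any connected DAG with at most four edges has at most five vertices (since in the underlying undirected graph $k - 1 \le l \le 4$), so Theorem~\ref{thm:dtdk/4} gives $\operatorname{dtd}(H) \le \lfloor 5/4 \rfloor + 2 = 3 = \lfloor l/5 \rfloor + 3$. For the inductive step, the \emph{main case} is when some source $s$ has ``edge weight'' $w(s) := |\text{edges incident to } \{s\} \cup R(s)| \ge 5$: selecting such an $s$ and applying induction to the residual DAG (with at most $l - 5$ edges) yields
\[
\operatorname{dtd}(H) \le 1 + \lfloor (l-5)/5 \rfloor + 3 = \lfloor l/5 \rfloor + 3.
\]

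The hard part will be the \emph{corner case} in which no source has $w(s) \ge 5$, i.e.\ $|R(s)|$ plus the number of external incoming edges to $R(s)$ is at most $4$ for every source $s$. Here I would adapt the sub-case handling for $|R(s)| \in \{0,1,2\}$ from the proof of Theorem~\ref{thm:dtdk/4}: when $\{s\}\cup R(s)$ admits no external incoming edges it forms a closed connected component on at most five vertices, which is dispatched by the vertex bound; otherwise a divide-and-conquer move (picking a source whose removal disconnects the residual into pieces each inheriting the bound) partitions the DAG into smaller sub-DAGs. Carefully tallying the depth overhead across these sub-cases and showing that it never exceeds three is the main technical challenge; this absorbs the ``$+3$'' and closes the induction.
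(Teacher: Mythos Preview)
Your overall framework (greedy removal, charge one unit of depth against a batch of eliminated edges, induct) matches the paper's, but the corner case is where the real content lies, and your plan there has a genuine gap.

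Take the DAG underlying an even cycle $C_{2k}$ with alternating sources and non-sources. Every source $s$ has $|R(s)|=2$, and each of those two non-sources has exactly one further incoming edge (from the adjacent source), so your edge weight is $w(s)=4<5$ for \emph{every} source: the ``main case'' never fires. Removing any single $\{s\}\cup R(s)$ leaves a connected path, so your proposed ``pick a source whose removal disconnects the residual'' does not trigger either. A naive greedy pass then removes only $4$ edges per unit of depth, yielding $l/4$, not $l/5$. Your sketch does not say how to beat this, and the vague ``carefully tallying the depth overhead \ldots never exceeds three'' does not address an arbitrarily long cycle.

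The paper closes exactly this gap with a structural observation you are missing. It first normalizes (every non-source reachable from at least two sources, no reachability containments), then counts eliminated edges as $\sum_{u\in R(s)}\deg(u)$. In the normalized DAG, $|R(s)|\ge 3$ already gives $\ge 6$ edges, and $|R(s)|=2$ with some $|P(u)|\ge 3$ gives $\ge 5$. The only remaining case is $|R(s)|=2$ for all sources and $|P(u)|=2$ for all non-sources; the paper then observes that such a residual is a disjoint union of even cycles, for which the DAG treedepth is at most $\log(\lceil l'/2\rceil)$ by the standard logarithmic treedepth of cycles, and checks $l'/5+1\ge \log(\lceil l'/2\rceil)$. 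That logarithmic bound on the cycle residual is the missing ingredient in your proposal.
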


\begin{proof}
   We follow a similar strategy to that in  \Cref{thm:dtdk/4}. The key idea is to show that for each unit increase in depth, we can safely eliminate at least 5 edges from the graph. The additive constant accounts for small residual cases.

As discussed before, after handling sources with low reachability (e.g., reaching at most one non-source), we reduce the graph to a simpler form where:
\begin{itemize}
    \item Every non-source vertex is reachable by at least two sources.
    \item For any two sources $s_i$ and $s_j$, $R(s_i) \nsubseteq R(s_j)$.
    \item For any two non-sources $u$ and $v$, $P(u) \nsubseteq P(v)$.
\end{itemize}

When we add a source vertex $s$ to the elimination path $T_p$, we eliminate all edges adjacent to its reachable non-sources $u \in R(s)$. Thus, the number of eliminated edges is at least $\sum_{u \in R(s)} \deg(u)$.

\begin{itemize}
    \item \textbf{Case 1:} $|R(s_i) \setminus R(D)| \geq 3$
    
Suppose $s_i$ reaches distinct non-sources $u_1, u_2, u_3$, and for each pair $u_j \neq u_k$, the sets $P(u_j) \setminus P(u_k)$ are non-empty as $P(u_j)\nsubseteq P(u_k)$. This implies that each non-source $u_i$ contributes at least 2 edges, where one edge is between the source vertex $s_j$ and $u_i$ and one edge between $u_i$ and some source $s$. Note that such a source exists because $P(u_i)\geq 2$. Also, if $|P(u_i)|=2$, then $P(u_i)\cap P(u_j)=\{s_i\}$ otherwise, $P(u_i)\subseteq P(u_j)$. Therefore, we eliminate at least 6 edges by adding $s_i$, increasing the depth by 1.

\item \textbf{Case 2:} $|R(s_i) \setminus R(D)| = 2$

\begin{itemize}
    \item \textbf{Case A:} There exists a non-source $u \in R(s_i)$ with $|P(u)| \geq 3$
    
  Let $s_i$ reach $u_1$ and $u_2$, and assume $|P(u_1) \setminus P(u_2)| \geq 1$ otherwise $P(u_1)\subseteq P(u_2)$. Let $|P(u_2)|\geq 3$ and $|P(u_1)|\geq 2$. So, $|P(u_1)\cup P(u_2)|$ is at least $4$. Now, if $|P(u_1)\cup P(u_2)|$ is at least five, then we can say that adding only one vertex $s_i$, we eliminate at least five edges. Now, consider the case $|P(u_1)\cup P(u_2)|=4$. It is only possible when $|P(u_1)|=2$. So, consider source $s\in P(u_1)$ other than $s_i$. We can check that $|R(s)\setminus R(D)|= 1$. We can eliminate source $s$ after adding $s_i$ to the path and increasing the depth by one. $s$ can be handled as described above using the additive factor two when the source reaches at most one undiscovered non-source. Thus, in this step as well, we increased the DAG treedepth by one and eliminated five edges.

  \item \textbf{Case B:} Every non-source $u$ has $|P(u)| = 2$
  
  In this scenario, the residual graph consists only of cycles or disjoint cycles. For a cycle with $l'$ edges, its DAG treedepth is at most $\log(\lceil l'/2 \rceil)$. Since $\frac{l'}{5} + 1 \geq \log(\lceil l'/2 \rceil)$ for all $l'$, the depth bound still holds.
\end{itemize}
\end{itemize}

Thus, in all cases, we can safely remove at least 5 edges per unit increase in depth (with an additive constant), which gives us that $\operatorname{dtd}(H) \leq \lfloor\frac{l}{5}\rfloor + 3$.
 
\end{proof}

Again, using \Cref{thm:dtdalgo} and \Cref{thm:dtdedgebound}, we have the following result.
\begin{theorem}\label{thm:dtdedges}
  Consider any $l$-edge pattern graph $H = (V_H, E_H)$. Then one can compute $Hom(H,G)$ and $Sub(H,G)$ in $O(f(k,d) \cdot n^{\lfloor\frac{l}{5}\rfloor+3})$ time using $O(1)$-space.
\end{theorem}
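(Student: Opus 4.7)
The plan is to reduce the statement to the two preceding results, namely the edge-based DAG treedepth bound of \Cref{thm:dtdedgebound} and the constant-space counting algorithm of \Cref{thm:dtdalgo}. For the homomorphism count, I would first fix any acyclic orientation $\vec{H}$ of $H$ achieving $dtd(\vec{H}) = dtd(H)$. By \Cref{thm:dtdedgebound}, we have $dtd(\vec{H}) \leq \lfloor l/5 \rfloor + 3$. Plugging this directly into \Cref{thm:dtdalgo} yields an $O(n^{\lfloor l/5 \rfloor + 3})$-time, constant-space algorithm for $\mathrm{hom}(H,G)$, with the hidden constant absorbed into $f(k,d)$.

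For the subgraph count, the plan is to exploit the standard spasm identity (\Cref{def:spasm}): there exist rational coefficients $\alpha_{H'}$ such that
\[
\mathrm{sub}(H, G) \;=\; \sum_{H' \in \spasm(H)} \alpha_{H'} \,\mathrm{hom}(H', G).
\]
The key observation I would use is that every graph $H' \in \spasm(H)$ is obtained by iteratively merging independent sets of vertices in $H$. Merging independent vertices never creates new edges; it can only collapse parallel edges, so $|E(H')| \le |E(H)| = l$ for every $H' \in \spasm(H)$. Furthermore, the size of $\spasm(H)$ is bounded by a function of $k = |V(H)|$ alone. Applying the homomorphism result from the previous paragraph to each term, every $\mathrm{hom}(H', G)$ can be computed in $O(f(k,d)\,n^{\lfloor l/5 \rfloor + 3})$ time and constant space, and summing the $|\spasm(H)|$ such contributions preserves both bounds (after redefining $f$).

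The constant-space claim requires a small bookkeeping argument: after computing each $\mathrm{hom}(H', G)$ in constant space, we accumulate its rational contribution into a single running sum, so we never need to store more than one partial count plus a pointer into an enumeration of $\spasm(H)$; since $\spasm(H)$ depends only on $H$, it is treated as part of the constant workspace. The only step that needs genuine care, and which I would flag as the main obstacle, is verifying that the spasm computation itself (enumerating $\spasm(H)$ and the coefficients $\alpha_{H'}$) can be done without disturbing the constant-space guarantee on the host graph $G$; this is handled by noting that this enumeration depends only on $H$ and can be precomputed as part of $f(k,d)$. Combining everything gives the claimed $O(f(k,d)\,n^{\lfloor l/5 \rfloor + 3})$-time, $O(1)$-space algorithm for both $\mathrm{hom}(H,G)$ and $\mathrm{sub}(H,G)$.
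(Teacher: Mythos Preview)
Your proposal is correct and matches the paper's approach: the paper's own proof is the single sentence ``Again, using \Cref{thm:dtdalgo} and \Cref{thm:dtdedgebound}, we have the following result,'' and your write-up is precisely the natural way to unpack that sentence, including the spasm identity for $\mathrm{sub}$ together with the observation that every $H'\in\spasm(H)$ has at most $l$ edges. One small wording fix: rather than ``fix any acyclic orientation achieving $dtd(\vec H)=dtd(H)$,'' it is cleaner to note that \Cref{thm:dtdedgebound} applies to every acyclic orientation of $H$ (each has $l$ edges), which is what \Cref{thm:dtdalgo} actually requires when counting undirected homomorphisms via the degeneracy orientation of $G$.
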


\begin{remark}
  From \Cref{thm:dtdvertices} and \Cref{thm:dtdedges}, we can compute $\mathrm{hom}(H,G)$ and $\mathrm{sub}(H,G)$ in time $\min(O(f(k,d) \cdot n^{\frac{k}{4}+2}),O(f(k,d) \cdot n^{\frac{l}{5}+3}))$ using $O(1)$-space.  
\end{remark}

In \Cref{thm:dtdedgebound}, we observed that in Case 1, we can achieve a bound of $\frac{\ell}{6} + 2$ on the DAG treedepth. For the remaining cases, we can construct an undirected graph $G$ as described in ~\Cref{lem:boundedreachablity}, where the vertex set consists of non-source vertices from the residual DAG $G'$. Importantly, the number of edges in $G$ is equal to the number of source vertices in $G'$, and the number of edges in $G'$ is exactly twice the number of its source vertices.

Now, suppose there exists a general upper bound on treedepth for undirected graphs in terms of the number of edges, i.e., $td(G) \leq \frac{m}{\alpha}$. Then, applying this to the graph $G$, we obtain: $dtd(G') \leq \frac{m}{2\alpha}.$ This  leads to the following conditional corollary:

\begin{corollary}
If for any undirected graph $G$ with $m$ edges, $td(G) \leq \frac{m}{3},$ then for any pattern DAG $H$ with $\ell$ edges, $dtd(H) \leq \frac{\ell}{6} + 2.$
 \end{corollary}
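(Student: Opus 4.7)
The plan is to refine the construction in \Cref{thm:dtdedgebound} by separating it into two regimes: a Case~1 regime in which each new source on the elimination path eliminates at least six edges, and a Case~2 regime in which the residual DAG is handed off to the auxiliary graph $G$ via the hypothesised treedepth bound. Once the two regimes are bounded independently, the target budget $\ell/6+2$ follows from a single additive computation.

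First I would replay the construction of \Cref{thm:dtdedgebound}, but apply only the Case~1 rule (pick a source $s$ with $|R(s)\setminus R(D)|\ge 3$) together with the $|R(s)|\le 1$ clean-up that produces the additive constant in the original proof. The Case~1 analysis there already shows that each such step adds exactly one source to the current root-to-leaf path while discovering at least six previously-unseen edges. Hence if $\ell'$ edges remain once the phase is exhausted, this phase contributes at most $(\ell-\ell')/6$ to the depth, with the degenerate low-reachability sources folded into the $+2$.

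When Case~1 can no longer be applied, the residual DAG $G'$ satisfies the structural restrictions collected in Case~2 of \Cref{thm:dtdedgebound}: every remaining source has $|R(s)|=2$, every remaining non-source has $|P(u)|\ge 2$, and no reachability set is contained in another. Construct, as in the paragraph preceding the corollary, the auxiliary graph $G$ on the non-sources of $G'$ by turning each source $s$ with $R(s)=\{u,v\}$ into the edge $\{u,v\}$. Then $|E(G)|$ equals the number $s'$ of residual sources and $\ell'=2s'$, so the hypothesis of the corollary gives $\td(G)\le |E(G)|/3=\ell'/6$.

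The key step, and the main obstacle I foresee, is the inequality $\operatorname{dtd}(G')\le \td(G)$. I would prove it by induction on $|V(G)|$: if $r$ is the root of an optimal elimination tree of $G$ and $v$ is any neighbour of $r$ in $G$, take the source $s^{*}=\{r,v\}$ of $G'$ as the root of the DAG elimination tree. Processing $s^{*}$ deletes both $r$ and $v$ from $G'$ and demotes every other source incident to $r$ or $v$ to an ``orphan'' source with $|R(\cdot)|\le 1$, which can be appended as leaves using the same bookkeeping that yielded the $+2$ in \Cref{thm:dtdedgebound}. The rest of $G'$ corresponds exactly to the auxiliary graph $G-\{r,v\}$, whose treedepth is at most $\td(G)-1$ because $\td(G-r)\le \td(G)-1$ by the choice of $r$ and further deletion of $v$ can only decrease this. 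Applying the inductive hypothesis yields $\operatorname{dtd}(G')\le 1+(\td(G)-1)=\td(G)$. The subtle point is to verify that the orphan sources can be threaded into the DAG elimination tree without inflating the depth, which is exactly what the $|R(s)|\le 1$ handling from \Cref{thm:dtdedgebound} achieves.

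Putting the two phases together, the total depth of the constructed DAG elimination tree is at most
$$
(\ell-\ell')/6 + \td(G) + 2 \le (\ell-\ell')/6 + \ell'/6 + 2 = \ell/6+2,
$$
which is the claimed bound.
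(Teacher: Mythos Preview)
Your overall plan coincides with the paper's own argument (the paragraph immediately preceding the corollary): run Case~1 of \Cref{thm:dtdedgebound} to pay at most $(\ell-\ell')/6$ in depth, then on the residual Case~2 DAG $G'$ build the auxiliary graph $G$ and invoke the hypothesised bound $\td(G)\le |E(G)|/3=\ell'/6$. The paper then asserts $\operatorname{dtd}(G')\le \ell'/6$ without justifying the implicit step $\operatorname{dtd}(G')\le\td(G)$; you correctly flag this as the main obstacle and try to prove it by induction.

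The gap is in that induction. After processing $s^{*}=\{r,v\}$ and recursing on $G-\{r,v\}$, each level-1 orphan $s$ with residual reach $\{w\}$ must be attached below the node $n_w$ of the recursive tree that actually deletes $w$; this puts $s$ at depth $d(n_w)+1$, which in the worst case is one more than the depth of the recursive tree. The recursion therefore yields $f(G)\le 1+\bigl(f(G-\{r,v\})+1\bigr)$, two more than the recursive value rather than one, so the induction does not close at $\td(G)$. The $|R(s)|\le 1$ bookkeeping from \Cref{thm:dtdedgebound} you appeal to is a \emph{single} global $+2$ applied once Case~1 terminates; it cannot absorb a $+1$ that reappears at every level of your inner induction on $G$. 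As written you would obtain only $\operatorname{dtd}(G')\le\td(G)+c$ for some growing $c$, hence $\ell/6+c'$ with $c'>2$.

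To repair this you need a direct translation of an elimination tree of $G$ into a DAG elimination tree of $G'$ (in the spirit of the DAG-treewidth analogue \Cref{lem:boundedreachablity}), rather than the delete-two-vertices-and-recurse scheme. The paper does not supply that step either, so you are not missing something written there; but your current induction does not establish $\operatorname{dtd}(G')\le\td(G)$, and hence not the claimed additive $+2$.
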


Next, we deal with DAG treewidth. Since treewidth is a well-studied graph parameter and efficient approximation algorithms exist for computing it \cite{bonnet2025treewidth}, it is reasonable to bound other parameters in terms of treewidth. Therefore, in the next section, we first define DAG treewidth \cite{bressan2021faster} and obtain many relationships between the DAG treewidth and the treewidth of graphs in $\mathcal{G}$.

\section{DAG Treewidth}\label{sec:treewidth}

In this section, we first establish several fundamental properties of DAG treewidth. Building upon these we prove that for any DAG $H$ with $k$ vertices, $\mathrm{dtw}(H) \leq \frac{k}{5} + 3$, and leverage this bound to derive improved runtime for computing $\mathrm{hom}(H, G)$, $\mathrm{sub}(H, G)$, and $\mathrm{ind}(H, G)$. We further prove a conjecture recently posed by \cite{paul2024subgraph} on the 10-vertex pattern graph $H$. Along the way, we obtain several additional structural bounds which may be of independent interest. To this end, we first define the DAG treewidth.

\begin{definition}(DAG Tree Decomposition)\label{def:dagtreedecomposition}
    Let  $\vec{H}$ be a DAG with a set of sources $S$. A \emph{DAG tree decomposition} of $\vec{H}$ is a rooted tree $T = (\mathcal{B}, E)$, where each node $B \in \mathcal{B}$ is a \emph{bag} consisting of a subset of sources from $S$, i.e., $B \subseteq S$, and the following properties are satisfied:

    \begin{itemize}
    \item \textbf{Coverage:} Every source appears in at least one bag, i.e.,  $\bigcup_{B \in \mathcal{B}} B = S $.
    
    \item \textbf{Reachability Intersection Property:} For any three bags $B, B_1, B_2 \in \mathcal{B} $, if $B$ lies on the unique path between $B_1$ and $B_2$ in the tree $T$, then
    $$\mathrm{R}_H(B_1) \cap \mathrm{R}_H(B_2) \subseteq \mathrm{R}_H(B).$$
\end{itemize}
\end{definition}

\begin{definition}(DAG Treewidth)
Let the \emph{DAG width} of a DAG tree decomposition $T = (\mathcal{B}, E)$ be defined as the size of the largest bag in $T$, i.e., the maximum number of sources in any bag. Then, the \emph{DAG treewidth} of a directed acyclic graph $\vec{H}$ is the minimum DAG width over all possible DAG tree decompositions of $\vec{H}$. For an undirected graph $H$, the DAG treewidth of $H$ is the maximum of the DAG treewidth of all acyclic orientations of $\vec{H}$. We denote the DAG treewidth of a graph $\vec{H}$ by $dtw(\vec{H})$.

\end{definition}

\begin{observation}\label{obs:uniqueintersection}
    Let $\vec{H}$ be a directed acyclic graph (DAG) with a fixed orientation. Suppose there are two sources $u$ and $v$ such that there exists a vertex $w \in \mathrm{R}_H(u) \cap \mathrm{R}_H(v)$, and $w \notin \mathrm{R}_H(s_i)$ for any other source $s_i \neq u, v$. Then it is easy to see that, in any DAG tree decomposition of $\vec{H}$, the sources $u$ and $v$ must appear in the same bag or in adjacent bags.
\end{observation}

\begin{observation}\label{obs:equaltreewidth}
    As edge contractions do not increase treewidth, we have $\mathrm{tw}(G_S) \leq \mathrm{tw}(\textsc{Bip}(\vec{H}))$.
\end{observation}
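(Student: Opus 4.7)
The plan is to prove the observation by invoking the well-known fact that treewidth is monotone under taking minors. Since $G_S$ is obtained from $\textsc{Bip}(\vec{H})$ by a prescribed sequence of edge contractions (one incident edge for each non-source vertex, per~\Cref{def:assocfam}), $G_S$ is an (induced) minor of $\textsc{Bip}(\vec{H})$, and the inequality $\mathrm{tw}(G_S)\le\mathrm{tw}(\textsc{Bip}(\vec{H}))$ follows immediately.

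To make this self-contained without appealing to the general minor-monotonicity theorem, I would give a direct transformation of tree decompositions. Let $(T, \{B(t)\}_{t\in V(T)})$ be an optimal tree decomposition of $\textsc{Bip}(\vec{H})$ of width $w = \mathrm{tw}(\textsc{Bip}(\vec{H}))$. Process the contractions one at a time: when contracting an edge $\{u,s\}$ where $u$ is a non-source and $s$ is a source, form a new decomposition by replacing, in every bag that contains $u$, the vertex $u$ with $s$. The resulting bag family still covers all edges of the contracted graph (the edge property is preserved because any edge touching $u$ was witnessed in some bag containing $u$, hence now contains $s$), and the connectivity property is maintained since the union of the connected subtree for $u$ with that for $s$ is again connected (those subtrees share a node because $\{u,s\}$ was an edge of $\textsc{Bip}(\vec{H})$). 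Moreover, no bag grows: each bag's size either stays the same (if $s$ was already present) or stays the same after the relabeling $u\mapsto s$. Finally, since every non-source of $\textsc{Bip}(\vec{H})$ is eliminated by this process, the resulting decomposition has bags that are subsets of $V(G_S)$, hence is a valid tree decomposition of $G_S$ of width at most $w$.

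The only point requiring mild care will be the connectivity property at the moment of contraction: one needs the two subtrees $\{t : u\in B(t)\}$ and $\{t : s\in B(t)\}$ to share at least one node before relabeling, which holds precisely because $\{u,s\}$ is an edge covered by some bag. Everything else is bookkeeping, so I expect no real obstacle.
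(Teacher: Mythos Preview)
Your proposal is correct and takes essentially the same approach as the paper: the paper treats this as an immediate observation, simply noting that edge contractions do not increase treewidth, without any further argument. Your write-up is in fact more detailed than the paper's, which provides no proof at all beyond the one-line justification in the statement itself; your explicit bag-relabeling construction is a sound (and standard) way to make the monotonicity fact self-contained.
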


Let $T = (\mathcal{B}, E)$ be a DAG tree decomposition. We denote by $B_{leaf}$ a leaf node of the tree $T$. For any node $B \in \mathcal{B}$, we denote its parent by \textsc{Parent}$(B)$. We denote the root of the tree $T$ by $\textsc{Root}(T)$. For any two nodes $B_1, B_2 \in \mathcal{B}$, the unique path between $B_1$ and $B_2$ in $T$ is defined as $T(B_1,B_2)$.

\subsection{Properties of DAG Treewidth}\label{sec:dtwproperty}

In this section, we establish several important properties of treewidth and DAG treewidth ($\mathrm{dtw}(\vec{H})$), along with important relationships between them. By definition, the DAG treewidth of an undirected graph is the maximum DAG treewidth over all its acyclic orientations. Consequently, results derived for DAGs can be naturally extended to undirected graphs by considering all possible acyclic orientations. Therefore, any property proven for a fixed acyclic orientation of a DAG applies to the corresponding undirected graph as well. However, in ~\Cref{dtwinducedminor} (which contracts a single edge, reducing the number of vertices by one), \Cref{lem:twbound} (which compares the treewidth of undirected graphs), and \Cref{lem:onesubdivision} (which introduces a new vertex on every edge (known as a subdivision vertex)), it is necessary to account for all acyclic orientations. For this reason, we present the proofs of these lemmas in the undirected setting.

Treewidth is a well-studied structural parameter that plays a central role in the design of dynamic programming algorithms for many NP-complete problems on graphs of bounded treewidth. In particular, for a pattern graph $H$ with treewidth $t$, the number of homomorphisms from $H$ to a host graph $G$ with $n = |V(G)|$ can be computed in time $n^{O(t)}$.

Bressan~\cite{bressan2021faster} introduced DAG treewidth to enable near-linear time algorithms for counting homomorphisms from certain graphs, such as cliques and complete bipartite graphs, whose treewidth is unbounded, but whose DAG treewidth is just 1. This highlights DAG treewidth as a powerful and relevant parameter for homomorphism counting. However, here we derive a hardness result for DAG treewidth. To this end, we show the following lemma.

\begin{lemma}\label{lem:dtw=ghw}
Let $H$ be a DAG with the set of sources $S$ and non-sources $T$. 
For each source $s \in S$, let $R(s) \subseteq T$ denote the set of vertices reachable from $s$. 
Construct a hypergraph $\mathcal{H}$ whose vertex set is $T$ and whose hyperedge set is 
$\mathcal{E} = \{ R(s) \mid s \in S \}$. 
Then the DAG treewidth of $H$ equals the generalized hypertree width of $\mathcal{H}$, i.e.,
\[
\operatorname{dtw}(H) = \operatorname{ghw}(\mathcal{H}).
\]
\end{lemma}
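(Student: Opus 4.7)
The plan is to prove the two inequalities $\operatorname{dtw}(H) \le \operatorname{ghw}(\mathcal{H})$ and $\operatorname{ghw}(\mathcal{H}) \le \operatorname{dtw}(H)$ separately, by exhibiting an explicit correspondence between DAG tree decompositions of $H$ and GHDs of $\mathcal{H}$ in which a source $s$ lying in a bag of the DAG tree decomposition matches the hyperedge $R(s)$ being selected by the cover $\lambda_u$ at the same tree node. The tree stays the same in both directions, and a source-bag $B_u \subseteq S$ is identified with a cover $\lambda_u$ via $\lambda_u(R(s)) = 1$ iff $s \in B_u$; crucially, $|B_u|$ equals the weight of $\lambda_u$, so widths match.

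For $\operatorname{ghw}(\mathcal{H}) \le \operatorname{dtw}(H)$, I would take an optimal DAG tree decomposition $(T, \{B_u\})$ of $H$ of width $w$ and define, on the same tree, $B'_u := R(B_u) = \bigcup_{s \in B_u} R(s)$ together with $\lambda_u$ as above. The hyperedge coverage condition $R(s) \subseteq B'_u$ for some $u$ follows from the coverage property of the DAG tree decomposition (pick any bag that contains $s$), the inclusion $B'_u \subseteq B(\lambda_u)$ holds with equality by construction, and the GHD vertex-connectedness condition for a non-source $v$ is literally a rephrasing of the reachability intersection property applied to $v$, since $v \in B'_u$ iff $v \in R(s)$ for some $s \in B_u$.

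For $\operatorname{dtw}(H) \le \operatorname{ghw}(\mathcal{H})$, I would invert the construction. Given a GHD of width $w$, I would first normalize it so that for every hyperedge $R(s)$ there exists some node $u$ with $\lambda_u(R(s)) = 1$: for each source $s$ not yet explicitly used in any cover, attach a new leaf $\tilde u$ to any $u^*$ with $R(s) \subseteq B'_{u^*}$ (which exists by GHD coverage), setting $B'_{\tilde u} := R(s)$ and $\lambda_{\tilde u}(R(s)) := 1$. This leaf has weight one, is attached to a bag already containing all of $R(s)$, and hence preserves width, coverage, and connectedness. Setting $B_u := \{s : \lambda_u(R(s)) = 1\}$ then produces a candidate DAG tree decomposition with $|B_u| \le w$ and with every source covered.

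The main obstacle will be checking the reachability intersection property for this candidate decomposition: a non-source $v$ with $v \in R(B_{u_1}) \cap R(B_{u_2})$ only immediately gives $v \in B(\lambda_{u_1}) \cap B(\lambda_{u_2})$, whereas the GHD's connectedness condition directly controls only the potentially smaller sets $\{u : v \in B'_u\}$. To close this gap I would further normalize the GHD so that $B'_u = B(\lambda_u)$ at every node: any hyperedge whose removal from $\lambda_u$ still leaves $B'_u \subseteq B(\lambda_u)$ can be dropped without changing width or coverage, and the remaining hyperedges each contribute to the bag, so $B'_u$ can be safely enlarged to $B(\lambda_u)$ while maintaining connectedness by the same leaf-attachment idea as above (attaching, for any vertex newly added to a bag, a leaf witnessing its presence under an already-containing neighbor). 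After this normalization $R(B_u) = B(\lambda_u) = B'_u$, and the reachability intersection property for any non-source $v$ follows immediately from the GHD's vertex-connectedness condition, completing the proof.
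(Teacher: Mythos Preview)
Your approach is the same as the paper's: identify a source $s$ with the hyperedge $R(s)$, keep the tree, and translate between a bag of sources $B_u$ and the cover $\lambda_u$. The paper's own proof of the GHD $\Rightarrow$ DTD direction is quite terse (and in fact contains a type mismatch, writing $\beta(t)=\chi(t)\cup\{s:R(s)\in\lambda(t)\}$ even though DAG bags contain only sources), so you are right to spell out the two normalizations the paper glosses over; your first normalization (attaching a singleton leaf for every unused source $s$ under a node whose bag already contains $R(s)$) is exactly what is needed for coverage.

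The one place your plan does not go through as written is the second normalization. You correctly diagnose the obstacle: the GHD connectedness axiom controls $\{u:v\in B'_u\}$, whereas the reachability-intersection property of the candidate DAG decomposition concerns the possibly larger sets $\{u:v\in B(\lambda_u)\}$, and these need not coincide. Your stated fix, however --- ``attach, for any vertex newly added to a bag, a leaf witnessing its presence under an already-containing neighbor'' --- cannot repair a connectedness violation at an \emph{internal} node: if enlarging $B'_u$ to $B(\lambda_u)$ puts $v$ into two tree nodes separated by a node $w$ with $v\notin B(\lambda_w)$, hanging a new leaf anywhere does not put $v$ into $B(\lambda_w)$. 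Dropping redundant hyperedges from $\lambda_u$ beforehand does not change this. What you actually need is the (standard, but nontrivial) normal form for generalized hypertree decompositions asserting that $\operatorname{ghw}(\mathcal{H})$ is always witnessed by a GHD with $B'_u=B(\lambda_u)$ at every node; once you have that, your translation and the reachability-intersection check go through verbatim. So the target of your second normalization is right, but the mechanism you describe does not achieve it --- either invoke the normal-form result directly, or replace the leaf-attachment sketch by a genuine argument that produces a complete GHD.
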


\begin{proof}
(\emph{dtd} $\Rightarrow$ \emph{ghd}) 
Let $(T_\tau, \{\beta(t)\}_{t \in V(T_\tau)})$ be a DAG tree decomposition of $H$. 
For each node $t$, construct a GHD node with 
$\chi(t) = \beta(t) \cap T$ and 
$\lambda(t) = \{ R(s) : s \in \beta(t) \cap S \}$. 
The coverage and connectedness conditions of a GHD follow directly from those of the DTD. 
Since each source $s$ corresponds to a single hyperedge $R(s)$, the width (measured by the number of sources/hyperedges) is preserved. 
Hence, $\operatorname{ghw}(\mathcal{H}) \le \operatorname{dtw}(H)$.

(\emph{ghd} $\Rightarrow$ \emph{dtd}) 
Conversely, let $(T_\tau, \{(\chi(t), \lambda(t))\}_{t \in V(T_\tau)})$ be a generalized hypertree decomposition of $\mathcal{H}$. 
We obtain a directed tree decomposition of $G$ by replacing each hyperedge $R(s) \in \lambda(t)$ with its corresponding source $s$. 
Formally, define $\beta(t) = \chi(t) \cup \{ s \in S : R(s) \in \lambda(t) \}$. 
The coverage and connectedness properties of a DTD follow from those of the GHD, and the width is again preserved. 
Hence, $\operatorname{dtw}(H) \le \operatorname{ghw}(\mathcal{H})$. Combining both directions yields $\operatorname{dtw}(H) = \operatorname{ghw}(\mathcal{H})$.
\end{proof}

To obtain a hardness result for DAG treewidth, we use the result of $\cite{fischl2018general}$, which shows that $ghw(\mathcal{H}\le 2)$ is NP-complete. To this end, we have the following result.

\begin{theorem}\label{thm:2np}
    Let $\vec{H}$ be a DAG, then deciding whether $dtw(H)\le 2$ is NP-complete.
\end{theorem}
\begin{proof}
    The proof follows using \Cref{lem:dtw=ghw} and the fact that deciding whether $ghw(\mathcal{H}\le 2)$ is NP-complete.
\end{proof}

The following lemma shows that $\mathrm{dtw}(\vec{H}) \leq \mathrm{tw}(H) + 1$. This bound suggests that the DAG treewidth of a graph is never significantly worse than its treewidth.

\begin{lemma}\label{lem:twbound}
    For any undirected graph $H$, $\vec{H}$ is any acyclic orientation of $H$. Then, $\mathrm{dtw}(\vec{H}) \leq \mathrm{tw}(H) + 1.
$
\end{lemma}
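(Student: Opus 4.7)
The plan is to construct a DAG tree decomposition of $\vec{H}$ of width at most $\mathrm{tw}(H)+1$ directly from an optimal tree decomposition of $H$. Fix such a decomposition $(T,\{B(t)\}_{t\in V(T)})$ with $\max_t |B(t)| = \mathrm{tw}(H)+1$, and for every vertex $x$ of $H$ write $T_x := \{t \in V(T) : x \in B(t)\}$ for its (connected) subtree of bags. Next, assign a canonical source to every non-source vertex of $\vec H$: for each non-source $v$, pick an in-neighbor $p(v)$ in $\vec{H}$ and iterate $p$ until a source is reached, calling that source $\sigma(v)$; put $\sigma(s)=s$ for sources $s$. By construction $\sigma(v)\in P_H(v)$, and crucially $\sigma(p(v))=\sigma(v)$ for every non-source $v$, since both iterations produce the same sequence of predecessors. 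Define $B'(t) := \{\sigma(v) : v \in B(t)\}$. Then $|B'(t)| \le |B(t)| \le \mathrm{tw}(H)+1$, and every source of $B(t)$ survives in $B'(t)$, so the coverage of $S$ required by a DAG tree decomposition holds.

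The main step is to verify the reachability intersection property. For each vertex $u$ of $\vec{H}$ let $\mathcal{T}(u) := \{t \in V(T) : u \in \mathrm{R}_H(B'(t))\}$; it suffices to show that $\mathcal{T}(u)$ is a connected subtree of $T$, because then $t_1, t_2 \in \mathcal{T}(u)$ places every intermediate $t_3$ in $\mathcal{T}(u)$ as well, which is exactly the intersection condition. Writing $T_s^\sigma := \{t : s \in B'(t)\} = T_s \cup \bigcup_{v:\sigma(v)=s} T_v$, unfolding the definitions gives $\mathcal{T}(u) = \bigcup_{s \in P_H(u)} T_s^\sigma$. For any $s \in P_H(u)$, fix a directed $s$-$u$ path $s = x_0 \to x_1 \to \cdots \to x_k = u$ in $\vec{H}$; each $x_i$ reaches $u$, so $\sigma(x_i)$ also reaches $u$, whence $\sigma(x_i) \in P_H(u)$ and $T_{x_i} \subseteq T_{\sigma(x_i)}^\sigma \subseteq \mathcal{T}(u)$. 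Because the edge $\{x_i,x_{i+1}\}$ of $H$ must lie in some bag, the subtrees $T_{x_i}$ and $T_{x_{i+1}}$ intersect, so the chain $T_s \cup T_{x_1} \cup \cdots \cup T_u$ is a connected subset of $\mathcal{T}(u)$ that links $T_s$ (and hence all of $T_s^\sigma$) to $T_u \subseteq T_{\sigma(u)}^\sigma$. Running this for every $s \in P_H(u)$ glues all of the pieces $T_s^\sigma$ together through $T_u$, forcing $\mathcal{T}(u)$ to be connected.

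The main obstacle is precisely this connectivity step, because the naive variants break: the choice $B'(t) = B(t) \cap S$ can leave a non-source $u$ reachable from sources in two far-apart bags but from no source in an intermediate bag, and an arbitrarily chosen $\sigma$ can make a single source $s$ appear in $B'$ only on a disconnected set of bags. The canonical definition of $\sigma$ via iteration of $p$ is exactly what synchronizes the construction: whenever a non-source $v$ is routed to $\sigma(v)$, the entire chain of $p$-ancestors of $v$ is routed to the same source, which is what allows the key observation $\sigma(x_i) \in P_H(u)$ to hold along every directed path toward $u$ and ultimately forces $\mathcal{T}(u)$ to be connected. Once this is established, $\mathrm{dtw}(\vec{H}) \le \max_t |B'(t)| \le \mathrm{tw}(H) + 1$ is immediate.
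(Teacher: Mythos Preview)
Your proof is correct and follows the same high-level idea as the paper: start from an optimal tree decomposition of $H$ and replace non-source vertices in each bag by sources that reach them, using directed paths in $\vec H$ to guarantee the reachability intersection property. The paper carries this out by an iterative fix-up (for each violating pair $s_i,s_j$ and each offending intermediate bag, swap a non-source on the $s_i$--$u$--$s_j$ path for $s_i$), whereas you do it in one shot via the global map $\sigma$. Your version is arguably cleaner and more rigorous, since the paper's iterative replacement does not explicitly argue why fixing one pair cannot disturb previously fixed pairs; your canonical $\sigma$ sidesteps that issue entirely.

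One small step you leave implicit: the parenthetical ``(and hence all of $T_s^\sigma$)'' uses that each $T_s^\sigma$ is itself connected. This does follow immediately from the property you highlight, $\sigma(p(v))=\sigma(v)$: for any $v$ with $\sigma(v)=s$, the $p$-chain $v,p(v),\dots,s$ lies entirely in $\sigma^{-1}(s)$ and the consecutive edges force $T_v,T_{p(v)},\dots,T_s$ to overlap pairwise, so every $T_v\subseteq T_s^\sigma$ is connected to $T_s$ inside $T_s^\sigma$. Making this one line explicit would complete the write-up.
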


\begin{proof}
 Let $T$ be a tree decomposition of $H$ where $\mathrm{tw}(H) = k$. Fix any arbitrary acyclic orientation of $\vec{H}$. Observe that every source in $\vec{H}$ must appear in some bag of $T$, since the decomposition covers all vertices.

Now, consider any pair of source vertices $s_i$ and $s_j$ such that their reachability sets intersect, i.e., $R(s_i) \cap R(s_j) \neq \emptyset$. Suppose $s_i \in B_1$ and $s_j \in B_2$, where $B_1$ and $B_2$ are bags in $T$. Let $T(B_1, B_2)$ denote the unique path between $B_1$ and $B_2$ in the tree decomposition. If $R(s_i) \cap R(s_j) \subseteq R(B)$ for all intermediate bags $B \in T(B_1, B_2)$, then the reachability condition is already satisfied. Otherwise, let $u \in R(s_i) \cap R(s_j)$ be a vertex that is not contained in any such intermediate bag.

Since $u \in R(s_i) \cap R(s_j)$, there exists a directed path from $s_i$ and $s_j$ to $u$, consisting only of non-source vertices:

$$
s_i \rightarrow u_1 \rightarrow \cdots \rightarrow u \leftarrow u'_1 \leftarrow \cdots \leftarrow u'_l \leftarrow s_j.
$$

Therefore, there is a corresponding path in the tree decomposition:

$$
B_1 - B'_1 - B'_2 - \cdots - B'_t - B_2
$$
where each bag $ B'_i$ must contain at least one vertex from the above path, by the property of the tree decomposition. In each such bag $ B'_i$, we replace one non-source vertex (e.g., $u_k$ or $ u'_k$) with the source vertex $s_i$. This replacement does not increase the size of the bag. By applying this process iteratively to all these source pairs $s_i, s_j$, we can ensure that the reachability condition for a DAG tree decomposition is satisfied throughout the tree. Thus, $dtw(\vec{H})\leq tw(H)+1$.

\end{proof}

The treewidth ($\mathrm{tw}$) of a graph remains invariant under any number of edge subdivisions. In many graph classes with bounded treewidth, the number of vertices can far exceed the treewidth. It is also known that cliques and quasi-cliques typically exhibit high treewidth but low DAG treewidth. However, even a single edge subdivision in such graphs can lead to a significant increase in the DAG treewidth. For example, $tw(K_n)=n-1$ and $dtw(K_n)=1$ but with only only subdivision we get that $tw(K_{n_{sub}})=n-1$ whereas $dtw(K_{n_{sub}})=\lceil \frac{n}{2}\rceil$.

To account for this behavior, ~\Cref{lem:onesubdivision} establishes an upper bound on the DAG treewidth following a single subdivision per edge. This result is especially useful for analyzing the DAG treewidth of cliques and quasi-cliques under limited subdivision. By combining ~\Cref{lem:onesubdivision} with ~\Cref{lem:twbound}, one can obtain a sharper upper bound on DAG treewidth in such settings.

\begin{lemma}\label{lem:onesubdivision}
    Let $\gmain$ be an undirected graph with $n$ vertices. Let $G_{\text{sub}}$ be a graph obtained by subdividing each edge of $\gmain$ at most once. Then, $dtw(G_{\text{sub}})\le \lceil\frac{n}{2}\rceil$.
\end{lemma}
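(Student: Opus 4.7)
The plan is to fix an arbitrary acyclic orientation $\vec{G_{\text{sub}}}$ of $G_{\text{sub}}$ and construct a DAG tree decomposition whose every bag has at most $\lceil n/2 \rceil$ sources. Let $S$ be the source set, written $S = V_s \sqcup W_s$, where $V_s$ consists of original-vertex sources from $V(G)$ and $W_s$ consists of subdivision-vertex sources. Since every subdivision vertex has degree exactly two, if $w \in W_s$ subdivides an edge $\{u,v\} \in E(G)$, then both its outgoing edges force $u,v \notin V_s$; hence $u,v \in V_n := V(G) \setminus V_s$. So each $w \in W_s$ canonically encodes an edge $e_w = \{u,v\}$ of the auxiliary graph $H^\ast := (V_n, E^\ast)$ with $E^\ast = \{e_w : w \in W_s\}$.

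The key combinatorial step is to use Vizing's theorem to partition $E^\ast$ into matchings $M_1, \ldots, M_k$, with $|M_i| \le \lfloor |V_n|/2 \rfloor \le \lfloor n/2 \rfloor$ for every $i$. For each $M_i$, I form a candidate source-bag $B_i = \{w \in W_s : e_w \in M_i\}$; by construction $|B_i| \le \lfloor n/2 \rfloor \le \lceil n/2 \rceil$. The original-vertex sources $V_s$ will be distributed across additional bags, each of size at most $\lceil n/2 \rceil$: since an original source $v \in V_s$ shares a reachable vertex with another source only through originals/subdivisions lying on directed paths from $v$, the same matching-style argument on $V_s$-induced reachability gives the analogous bound. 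The bags are then assembled into a tree whose backbone places the matching-bags in a path and attaches $V_s$-bags according to where their reachability overlaps the $M_i$'s.

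To verify the reachability-intersection condition, I will rely on \Cref{obs:uniqueintersection} together with the following structural fact: if two sources $s, s'$ share a non-source vertex $x$ in their reachability sets, then either (i) $x$ is an original in $V_n$ that is the endpoint of subdivision sources in both $B_i$'s, forcing a common matching vertex on the path, or (ii) $x$ lies on a directed path whose first subdivision step is captured by a single $w \in W_s$ appearing in some $M_j$ on the $s$–$s'$ path. In both cases one can route $s, s'$ through a bag whose reachability covers the intersection, by inserting a ``bridge bag'' whose matching $M_{ij}$ is chosen to include the relevant endpoint edge. Alternatively, one can appeal directly to \Cref{lem:dtw=ghw} and show $\operatorname{ghw}(\mathcal{H}) \le \lceil n/2 \rceil$ by covering any bag of a tree decomposition of $G$ (of width at most $n-1$) with at most $\lceil n/2 \rceil$ hyperedges from $\{R(s) : s \in S\}$, exploiting that each subdivision-source hyperedge covers two non-source originals at once.

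The main obstacle will be the bookkeeping needed to integrate $V_s$ sources with the matching bags without exceeding the size bound: in extreme cases such as $|V_s|$ large and $|V_n|$ small, most sources are original vertices rather than subdivision vertices, and one must rely on the observation that their pairwise reachability intersections are restricted to non-sources reachable by directed paths through $V_n \cup W_n$, so that a single ``pivot'' bag containing the shared reach suffices. Verifying that the resulting tree satisfies both the coverage and the reachability-intersection property while each bag stays within $\lceil n/2 \rceil$ is the crux of the argument.
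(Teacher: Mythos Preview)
Your matching-based construction does not yield a valid DAG tree decomposition: the reachability-intersection property fails already on the simplest example. Take $G=K_3$, subdivide every edge, and orient so that the three subdivision vertices $w_{ab},w_{bc},w_{ca}$ are all sources (the three original vertices $a,b,c$ are then sinks). Here $H^\ast=K_3$, your Vizing colouring gives three singleton matchings, and any path arrangement $B_1-B_2-B_3$ of the bags $\{w_{ab}\},\{w_{bc}\},\{w_{ca}\}$ has $a\in R(B_1)\cap R(B_3)$ but $a\notin R(B_2)$. The proposed ``bridge bags'' do not fix this: making the bags containing edges incident to a fixed vertex $u\in V_n$ contiguous in the tree is exactly the requirement of an \emph{interval edge colouring}, which does not exist for all graphs, and in any case does nothing for the longer-range reachability that arises when non-source originals have outgoing edges (your encoding $e_w=\{u,v\}$ only records the two neighbours of $w$, not $R(w)$, which can be much larger). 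The alternative ghw route is also miswired: the hypergraph $\mathcal{H}$ in \Cref{lem:dtw=ghw} has vertex set the non-sources of $G_{\text{sub}}$, so a tree decomposition of $G$ is not a tree decomposition of $\mathcal{H}$.

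The paper's argument is quite different and avoids these issues by using essentially only \emph{two} main bags. When $|S|\le n$ an arbitrary split into two bags of size $\le\lceil n/2\rceil$ works (with a two-node tree the intersection condition is vacuous). When $|S|>n$, it greedily fills two bags $B_1,B_2$ by walking along chains of subdivision sources that share endpoints, maintaining ``seen'' sets of original vertices so that each bag gains at most one source per newly seen original; leftover subdivision sources are then shown to have both endpoints already seen in the \emph{same} bag and are attached as singleton leaf bags. You are missing both the two-bag reduction and this greedy seen-list mechanism that ties bag growth to the $n$ original vertices rather than to the (possibly much larger) number of sources.
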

\begin{proof}
Let $O_i$ be arbitrary but fixed acyclic orientation of $G_{\text{sub}}$ and let $S = \{s_1, s_2, \dots, s_k\}$ be the set of source vertices in $G_{\text{sub}}$. These source vertices may be original vertices of $\gmain$ or subdivision vertices introduced in $G_{\text{sub}}$.

If $|S|\le n$, then we can partition the source set $S$ into two bags such that each bag contains at most $\lceil \frac{n}{2}\rceil$ vertices.

Now consider the case when $|S| > n$. This means some of the subdivision vertices in $G_{\text{sub}}$ are also source vertices. We construct a DAG tree decomposition $T$ as follows: Initialize $T$ with two bags, $B_1$ and $B_2$. We also maintain two corresponding lists $\SEEN(B_1)$ and $\SEEN(B_2)$, which keep track of the vertices from $\gmain$ added so far.

\paragraph*{Step 1}
Identify all source vertices that are also original vertices of $\gmain$. Add them alternately to $B_1$ and $B_2$. Each time we add such a vertex to a bag, we also add it to the corresponding $\SEEN$ list. Note that if a vertex $v$ from $\gmain$ is a source in $G_{\text{sub}}$, then none of the subdivision vertices adjacent to $v$ can be a source.

\paragraph*{Step 2}
Handle source vertices that are subdivision vertices. For a subdivision vertex $uv$, which comes from subdividing edge $\{u, v\}$ in $\gmain$:
\begin{itemize}
    \item \textbf{Step A:} If both $u$ and $v$ are not present in either $\SEEN(B_1)$ or $\SEEN(B_2)$, add $uv$ to both $B_1$ and $B_2$. Also, add $u$ and $v$ to both $\SEEN(B_1)$ and $\SEEN(B_2)$.

    \item \textbf{Step B:} Now look for a source vertex $vv_i$ where $v_i$ is not yet in either $\SEEN(B_1)$ or $\SEEN(B_2)$. Add $vv_i$ to the smaller of $B_1$ or $B_2$ and update the corresponding $\SEEN$ list by adding $v_i$. Continue this process: alternate between $B_1$ and $B_2$, adding new source vertices and updating $\SEEN$ lists accordingly.

If you reach a subdivision vertex $v_pv_q$ such that no further extension is possible (i.e., there is no next source $v_qv_r$ with $r$ not in $\SEEN$), then go back to Step A and repeat the process with a fresh $v_sv_t$ where both endpoints are still unseen.

\end{itemize}
Through these procedures, we ensure that (almost) all vertices of $\gmain$ are eventually added to $\SEEN(B_1)$ or $\SEEN(B_2)$, except potentially the vertex $v_q$ where Step B terminates. Moreover, no further source $v_qv_r$ exists with $r$ outside both $\SEEN$ lists.

\paragraph*{Step 3}
Now handle the remaining subdivision source vertices $v_sv_t$ where $v_s$ is already in some $\SEEN$ list, and $v_t$ is not. Add $v_sv_t$ to the smaller of $B_1$ or $B_2$ and update the respective $\SEEN$ list by including $v_t$. Repeat this until all vertices from $\gmain$ are in one of the $\SEEN$ lists.

After Steps 1–3, we claim that both bags have size at most $\lceil\frac{n}{2}\rceil$. Why? Because each time we add a source vertex, we are effectively marking a new vertex from $\gmain$ as $\SEEN$. In Step 2A, while a source vertex is added to both bags, it causes two $\gmain$ vertices to be added to each $\SEEN$ list. Additionally, we always ensure the bag sizes differ by at most 1, so the maximum size of each bag remains within $\lceil\frac{n}{2}\rceil$.

Now, there may still be some remaining source vertices $u_lv_l$ such that both $u_l$ and $v_l$ have already been added to the $\SEEN$ lists. For these vertices, we make the following claim:

\begin{claim}
    Each remaining such source vertex $u_lv_l$ has both endpoints in the same $\SEEN$ list, either $\SEEN(B_1)$ or $\SEEN(B_2)$.
\end{claim}

\begin{proof}
   Suppose $u_l$ is in $\SEEN(B_1)$ and $v_l$ is in $\SEEN(B_2)$. Then $u_l$ must have been added through Step 2 or 3. If it was added in Step 2 (except at the stopping point), then it must have been added to both $\SEEN$ lists. If $u_l$ were the terminal vertex of Step 2B, then we would still be able to process $u_lv_l$ in the next step, contradicting the assumption that it is a leftover. Similarly, if both $u_l$ and $v_l$ were previously unseen, they would have been picked up by Step 2A. So, either both endpoints were added together, or they ended up in the same $\SEEN$ list through the process. 
\end{proof}

\paragraph*{Step 4}
For all such remaining source vertices $u_lv_l$, attach a \emph{leaf bag} to the corresponding bag $B_i$ (either $B_1$ or $B_2$), containing just the vertex $u_lv_l$. We attach it to $B_i$ such that both $u_l$ and $v_l$ are in $\SEEN(B_i)$.

Now, we argue that the DAG tree decomposition $T$ is valid. All sources are included in some bag. For the reachability condition, observe that for a subdivision vertex $uv$, we have $\mathrm{R}_{T}(uv) \subseteq \mathrm{R}_{T}(u) \cap \mathrm{R}_{T}(v)$. (Here, we are slightly abusing notation since $u$ and $v$ may not be sources themselves.)

All subdivision sources added in Steps 1–3 are in the same or adjacent bags, so reachability is trivially preserved. For those added in Step 4, the claim ensures that both $u_l$ and $v_l$ are in the same $\SEEN$ list; hence, their reachability is covered by the parent bag to which they are attached. Thus, the decomposition is valid, and each bag has size at most $\lceil\frac{n}{2}\rceil$. Therefore, $dtw(G_{\text{sub}})\le \lceil\frac{n}{2}\rceil$.
\end{proof}

\begin{remark}
 The above bound is tight. Consider the graph $\gmain = K_4$, and let $G_{\text{sub}}$ be the graph obtained by subdividing each edge of $\gmain$ exactly once. If we orient each edge from the original vertex toward its corresponding subdivision vertex, then we can verify that $dtw(G_{\text{sub}}) = 2$.
\end{remark}
\begin{observation}\label{obs:notminorclosed}
Unlike treewidth $(\mathrm{tw})$, the DAG treewidth $(\mathrm{dtw})$ is not closed under minors. For example, the complete graph $K_6$ has only one source in any acyclic orientation, which implies $\mathrm{dtw}(K_6) = 1$. However, the cycle graph $C_6$, which is a subgraph of $K_6$, satisfies $\mathrm{dtw}(C_6) = 2$. This shows that $\mathrm{dtw}$ is not even closed under subgraphs.
\end{observation}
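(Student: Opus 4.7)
The plan is to verify two claims: $\mathrm{dtw}(K_6) = 1$ and $\mathrm{dtw}(C_6) \geq 2$, and then combine them with the fact that $C_6$ is a subgraph (hence a minor) of $K_6$ to conclude that $\mathrm{dtw}$ is not subgraph-closed, which in turn implies it is not minor-closed.

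For the first claim, I would observe that in any acyclic orientation of $K_6$, the topologically maximal vertex (under the reverse orientation) is the unique source: since every pair of vertices is adjacent, the unique source reaches every other vertex via a directed edge. Therefore, a DAG tree decomposition with a single bag consisting of this source satisfies both the coverage and reachability intersection conditions (the latter being vacuous). Hence $\mathrm{dtw}(\vec{K_6}) = 1$ for every acyclic orientation $\vec{K_6}$, and by definition $\mathrm{dtw}(K_6)$, which is the maximum over all orientations, equals $1$.

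For the second claim, I would exhibit a specific acyclic orientation of $C_6$ witnessing $\mathrm{dtw} \geq 2$. Label the vertices cyclically as $v_1,\dots,v_6$, and orient all edges out of $v_1, v_3, v_5$ so that $S = \{v_1,v_3,v_5\}$ are sources and $v_2, v_4, v_6$ are non-sources. The reachability sets are $R(v_1) = \{v_2,v_6\}$, $R(v_3) = \{v_2,v_4\}$, $R(v_5) = \{v_4,v_6\}$, so each pair of sources has a unique common reachable non-source not reachable from the third source. By \Cref{obs:uniqueintersection}, every pair among $\{v_1,v_3,v_5\}$ must appear in the same bag or adjacent bags of any DAG tree decomposition. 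If every bag had size $1$, the three bags form a path (the only tree on three nodes), leaving one pair of sources non-adjacent; checking the reachability intersection condition for that pair across the middle bag immediately fails, since the intersection consists of a non-source not reachable from the middle source. Hence no width-$1$ decomposition exists for this orientation, giving $\mathrm{dtw}(\vec{C_6}) \geq 2$ and therefore $\mathrm{dtw}(C_6) \geq 2$. A matching decomposition of width $2$ is easy: take two bags $\{v_1,v_3\}$ and $\{v_3,v_5\}$ connected by an edge, which covers all sources and is trivially consistent.

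Combining these, $C_6$ is a (spanning) subgraph of $K_6$ yet $\mathrm{dtw}(C_6) > \mathrm{dtw}(K_6)$, showing that $\mathrm{dtw}$ is not closed under taking subgraphs, and consequently not closed under taking minors either. The main obstacle is the lower bound $\mathrm{dtw}(C_6)\geq 2$, which I would handle by the unique-intersection argument above rather than any direct combinatorial enumeration; the rest of the proof is straightforward verification.
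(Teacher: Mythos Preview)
Your proposal is correct and follows exactly the reasoning the paper sketches inline in the observation: verifying that any acyclic orientation of $K_6$ has a unique source (so $\mathrm{dtw}(K_6)=1$), exhibiting the alternating orientation of $C_6$ with three sources to show $\mathrm{dtw}(C_6)\geq 2$, and concluding non-closure under subgraphs and minors. Your write-up is in fact more detailed than the paper's, which simply asserts $\mathrm{dtw}(C_6)=2$; the one informality is assuming a width-$1$ decomposition has exactly three bags, but this is easily patched (e.g., by noting that for any $\{v_1\}$-bag and $\{v_5\}$-bag the path between them can contain no $\{v_3\}$-bag, and symmetrically, which forces a contradiction once all three sources appear).
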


For a non-source vertex $u$ in a DAG $\vec{H}$, we define the set $P_H(u)$ as the set of source vertices from which $u$ is reachable. That is,
$P_H(u) = \{s \in S \mid u \in \mathrm{R}_H(s)\}$, where $S$ is the set of sources in $\vec{H}$ and $\mathrm{R}_H(s)$ denotes the set of vertices reachable from source $s$. Let $I$ be an induced minor of a graph $\gmain$, then we show the following theorem.

\begin{theorem}\label{dtwinducedminor}
Let $H$ be an undirected graph, and let $I$ be an induced minor of $H$. Then, $dtw(H) \geq dtw(I).$

\end{theorem}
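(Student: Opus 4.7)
The plan is to mirror the structure of the proof of Theorem~\ref{dtdinducedminor}, splitting the statement into two claims whose combination yields the result: (a) if $I$ is an induced subgraph of $H$, then $dtw(H)\ge dtw(I)$, and (b) if $H'$ is obtained from $H$ by contracting a single edge, then $dtw(H')\le dtw(H)$. Iterating (b) along a sequence of edge contractions and then applying (a) handles any induced minor.

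For (a), I would take an acyclic orientation $O_I$ of $I$ that realizes $dtw(I)$, embed $I$ as an induced subgraph $I'\subseteq H$, and extend $O_I$ to an acyclic orientation $O_H$ of $H$ by orienting the remaining edges according to a BFS layering from $I'$, always pointing outward (with an arbitrary acyclic orientation inside each layer). This extension guarantees the two invariants $S(\vec{I})=S(\vec{H})\cap V(I')$ and $P_I(u)=P_H(u)$ for every non-source $u\in V(I)$, which in turn yield $\mathrm{R}_I(s)=\mathrm{R}_H(s)\cap V(I)$ for every source $s\in V(I)$. Now, given any DAG tree decomposition $T$ of $\vec{H}$ of width $dtw(\vec{H})$, I would restrict every bag $B$ to $B\cap V(I')$ and verify that the resulting tree is a valid DAG tree decomposition of $\vec{I}$: coverage is immediate, and the reachability intersection property follows by intersecting $\mathrm{R}_H(B_1)\cap \mathrm{R}_H(B_2)\subseteq \mathrm{R}_H(B)$ with $V(I)$ and using the identity above. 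Since restriction only shrinks bags, $dtw(\vec{I})\le dtw(\vec{H})$.

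For (b), I would fix an acyclic orientation $O'$ of $H'$ realizing $dtw(\vec{H'})$, lift it to an acyclic orientation $O$ of $H$ by orienting the contracted edge $\{u,v\}$ as $(u,v)$, and case-split on the roles of $u$ and $v$, exactly as in the proof of Theorem~\ref{dtdinducedminor}. The easy sub-cases (both $u,v$ non-sources; or $u$ a source and the merged vertex $w$ a source with $P_H(v)=\{u\}$) are handled by either leaving $T$ unchanged or relabeling $u$ as $w$ in every bag. The main obstacle is the remaining sub-case, where $u$ is a source of $\vec{H}$ but the merged vertex $w$ is a non-source in $\vec{H'}$: then $S(\vec{H'})=S(\vec{H})\setminus\{u\}$, and for every $s\in P_H(v)\setminus\{u\}$ the reachability grows to $\mathrm{R}_{H'}(s)\supseteq \mathrm{R}_H(s)\cup(\mathrm{R}_H(u)\setminus\{v\})$. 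My plan is to choose some $s\in P_H(v)\setminus\{u\}$ and, in every bag of $T$ containing $u$, replace $u$ by $s$, then merge the resulting duplicated subtrees so that connectivity of $s$ across $T$ is preserved. The delicate step is re-verifying the reachability intersection property under $\mathrm{R}_{H'}$: the inclusion $\mathrm{R}_{H'}(s)\supseteq \mathrm{R}_H(u)\setminus\{v\}$ must be used to let $s$ absorb $u$'s former role wherever $\mathrm{R}_H(u)$ was needed to cover pairwise intersections. Since no bag grows during this transformation, we obtain $dtw(\vec{H'})\le dtw(\vec{H})$, and combining with (a) completes the proof.
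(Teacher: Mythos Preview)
Your two-claim structure and the proof of (a) match the paper's approach essentially verbatim, including the BFS-layered extension of the orientation and the bag-restriction argument.

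There is a small but genuine gap in your treatment of Case~2B in claim~(b). Replacing $u$ by a \emph{single fixed} source $s\in P_H(v)\setminus\{u\}$ in every bag that contained $u$ can break connectivity of $s$ in the tree: the connected subtree $T_u$ of bags containing $u$ and the connected subtree $T_s$ of bags already containing $s$ need not be adjacent. They both lie inside the (connected) subtree of bags meeting $P_H(v)$, but the path between them may pass through bags whose only $P_H(v)$-source is some third $s'\neq u,s$. After your substitution, $s$ would appear in $T_u\cup T_s$ with a gap in between, and your ``merge the resulting duplicated subtrees'' does not explain how to close this gap without enlarging a bag or deleting $s'$ from coverage.

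The paper avoids this by making the replacement \emph{local rather than global}: in each bag where $u$ is the only source from $P_H(v)$, it replaces $u$ by a source of $P_H(v)$ taken from an adjacent bag. Because the bags meeting $P_H(v)$ form a connected subtree (any bag on a path between two such bags must contain a source reaching $v$, hence a member of $P_H(v)$), processing the $u$-only bags outward from this subtree always provides an adjacent $P_H(v)$-source to borrow, and connectivity of every source is preserved without any bag growing. The verification that each such replacement source covers $R_H(u)$ in $\vec{H'}$ then goes through as you outlined, since every $s\in P_H(v)\setminus\{u\}$ satisfies $R_{H'}(s)\supseteq R_H(u)\setminus\{u,v\}$. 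Adjusting your plan to this local replacement closes the gap.
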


\begin{proof}
To obtain this result, it suffices to prove the following two claims.

    \begin{enumerate}
        \item If $I$ is an induced subgraph of $\gmain$ then $dtw(\gmain)\geq dtw(I)$.  
        \item Let ${H}$ be an undirected graph with $\mathrm{dtw}({H}) = k$. Let $H'$ be the undirected graph obtained by contracting a single edge in ${H}$. Then, $\mathrm{dtw}({H'}) \leq k.$
    \end{enumerate}

\paragraph*{Proof of Claim 1:}
Let $dtw(I) = k$. By definition, there exists an acyclic orientation $O_1$ of $I$ such that $dtw(I) = k$ under this orientation. We now construct an acyclic orientation of $\gmain$ that extends $O_1$ and preserves the treewidth. Let $I' \subseteq \gmain$ be an induced subgraph isomorphic to $I$, with an isomorphism $f: V(I) \to V(I')$. Fix the orientation of $I'$ in $\gmain$ according to $O_1$.

Let $V(\gmain) = V(I') \cup \left(V(\gmain) \setminus V(I')\right)$. Partition the remaining vertices based on their distance from $I'$ in $\gmain$. Define layers $D_1, D_2, \dots, D_j$, where

$$D_i = \{ v \in V(\gmain) \setminus V(I') \mid \text{dist}_{\gmain}(v, V(I')) = i \}.$$

Now, define a topological ordering on the vertices of $\gmain$ as:

$$V(I'), D_1, D_2, \dots, D_j.$$

Orient the edges from each $D_i$ to $D_{i+1}$ in the forward direction (i.e., from smaller to larger layers), and similarly orient the edges from $V(I')$ to $D_1$ forward. For edges within any $D_i$, choose an arbitrary acyclic orientation.
    
Note that after this orientation, we have $S(I)=S(\vec{H})\cap I'$. Also, for any non-source $u$ in $I$, we have $P_{I}(u)=P_{\vec{H}}(u)$ i.e.,  there is no source reaching $u$ other than the source vertices in $I$.

Now, suppose for contradiction that $dtw(\gmain) < k$ under this orientation. Let $T$ be a valid DAG-tree decomposition of $\gmain$ with DAG treewidth less than $k$. Then, the restriction of $T$ to the subgraph $I'$ forms a valid DAG-tree decomposition of $I'$, and hence of $I$ (by isomorphism), contradicting the assumption that $dtw(I) = k$. Therefore, $dtw(\gmain) \geq dtw(I)$.

\paragraph*{Proof of Claim 2:}
The DAG treewidth of an undirected graph $H$ is the maximum of the DAG treewidth of DAG $\vec{H}$. We want to show $dtw(H')\leq k$, where $H'$ is the undirected graph obtained after contraction of an edge $\{u,v\}$. So, for all acyclic orientations of $H'$, we have to show that the DAG treewidth of $H'$ is bounded by $k$. We pick an arbitrary but fixed acyclic orientation of $H'$. We copy the same orientation in $H$. Note that $\{u,v\}\notin E(H')$. So, the orientation is not known. We first give $(u,v)$ orientation to $\{u,v\}$. Note that the orientation of $H$ is acyclic. Let $w$ be the vertex in $H'$ obtained after contracting $\{u,v\}$. Let $T$ be a DAG tree decomposition of $\vec{H}$ of width at most $k$. We will now construct the DAG tree decomposition $T'$ of $\vec{H'}$.

\begin{itemize}
    \item \textbf{Case 1:} Both $u$ and $v$ are non-source vertices.

    In this case, we can assume without loss of generality that $P_H(u) \subseteq P_H(v)$, i.e., all source vertices reaching $u$ also reach $v$. Note that the set of source vertices remains unchanged after contraction, i.e., $S(\vec{H'}) = S(\vec{H})$. We consider the same DAG tree decomposition $T$ of $\vec{H}$. We only need to check whether the source vertices reaching $w$ are connected in $T$. However, one can see that $P_H(v) = P_{H'}(w)$. So, the source vertices in $P_{H'}(w)$ in $T$ are connected.

\item \textbf{Case 2:} $u$ is a source vertex.

We consider two subcases depending on whether the new vertex $w$ is a source in $\vec{H'}$ or not.

\begin{itemize}
    \item  \textbf{Case A :} $w$ is a source vertex.

This happens only when $P_H(v) = \{u\}$, meaning $v$ was reachable only from $u$. In this case, $R_H(u) = R_{H'}(w) \setminus \{v\}$. We can construct a DAG tree decomposition $T'$ of $\vec{H'}$ by simply replacing $u$ with $w$ in all the bags of $T$ where $u$ appears. Since no new vertices are introduced and bag sizes remain unchanged, $\mathrm{dtw}(\vec{H'}) \leq k$.

\item \textbf{Case B:} $w$ is a non-source vertex.

Here, the contraction removes $u$ from the set of sources, so $S(\vec{H'}) = S(\vec{H}) \setminus \{u\}$. $v$ must be reachable by some source vertex $s$ other than $u$. Moreover, $$P_{H'}(w) = P_H(v) \setminus \{u\},$$
and for each source $s \in P_{H'}(w)$, $$R_{H'}(s) = (R_H(s) \cup R_H(u)\cup \{w\}) \setminus \{v\}.$$

For every bag $B$ where $u$ appears and is the only source from $P_H(v)$, we can replace $u$ with another source from $P_H(v)$ present in an adjacent bag. Also, for any $s$ such that $v\in R(s)$, $R(u)\subseteq R(s)$. Thus, we obtain a decomposition of the DAG tree $T'$ with a maximum bag size $k$.

Since these modifications only replace one source with another and do not increase the size of any bag, the resulting DAG tree decomposition $T'$ of $\vec{H'}$ has the same width as $T$, i.e., $\mathrm{dtw}(\vec{H'}) \leq k.$
\end{itemize}
\end{itemize}

\end{proof}

\subsection{Fast Running Time for $\mathrm{hom}(H,G), \mathrm{sub}(H,G) \text{ and } \mathrm{ind}(H,G)$}\label{sec:dtwfast}

In \cite{bera2021near}, the authors established that for any pattern $H$ with at most five vertices, one can count $H$-homomorphisms in bounded-degeneracy graphs in $O(m \log m)$ time. More recently, \cite{paul2024subgraph} presented a subquadratic algorithm for counting subgraphs of patterns with up to nine vertices, achieving a running time of $\tilde{\mathcal{O}}(n^{5/3})$. They further conjectured that no subquadratic algorithm exists for computing the number of subgraphs for all 10-vertex patterns $H$.

Building upon these works (\cite{bera2021near,paul2024subgraph}), we extend the results to patterns with up to 11 vertices and show that they have DAG treewidth 2. Further, we answer the conjecture posed by \cite{paul2024subgraph} in the affirmative.

Further in \cite{bera2021near}, the authors also proved that for a DAG $H$ with $k$ vertices, $dtw(H) \le \frac{k}{4}+2$, and using this obtained upper bounds to compute $\mathrm{hom}(H,G), \mathrm{sub}(H,G) \text{ and } \mathrm{ind}(H,G)$ (\Cref{thm:bressen}). In this section we improve the bound of $\frac{k}{4}+2$ to $\frac{k}{5}+3$ and as a consequence get fast running time for $\mathrm{hom}(H,G), \mathrm{sub}(H,G) \text{ and } \mathrm{ind}(H,G)$ (\Cref{thm:timek/5}).

We now recall the definition of DAG treewidth introduced by Bressan~\cite{bressan2021faster}, along with a key theorem that will help us in deriving the results presented in this section.

\begin{definition} (Similar to Definition 6 of \cite{bressan2021faster})
For an undirected pattern graph $H$, the DAG treewidth of $H$ is $\tau(H) = \tau_3(H)$, where:

\begin{align*}
\tau_1(H) &= \max \left\{ \tau(P) \mid P \in \Sigma(H) \right\} \\
\tau_2(H) &= \max \left\{ \tau_1(H/\theta) \mid \theta \in \Theta(H) \right\} \\
\tau_3(H) &= \max \left\{ \tau_2(H') \mid H' \in D(H) \right\}
\end{align*}

Here:
\begin{itemize}
    \item $\Sigma(H)$ is the set of all acyclic orientations of $H$,
    \item $\Theta(H)$ is the set of all equivalence relations (partitions) over $V_H$, and $H/\theta$ is the pattern obtained by identifying equivalent nodes in $H$ according to $\theta$, removing loops and multiple edges,
    \item $D(H)$ is the set of all supergraphs of $H$ on the same vertex set $V_H$.
\end{itemize}
%$\tau(P)$ is the dag treewidth of a dag $P$, defined as the minimum width of any dag tree decomposition of $P$.
\end{definition}

We now state the following important theorem from \cite{bressan2021faster}.

\begin{theorem}(Theorem 9 of \cite{bressan2021faster})\label{thm:bressen}
Consider any $k$-node pattern graph $H = (V_H, E_H)$, and let $f_T(k)$ be an upper bound on the time needed to compute a dag tree decomposition of minimum width on $2^{O(k \log k)}$ bags for any $k$-node dag. Then one can compute:
\begin{itemize}
    \item $\mathrm{hom}(H, G)$ in time $2^{O(k \log k)} \cdot O\left(f_T(k) + d^{k - \tau_1(H)} n^{\tau_1(H)} \log n\right)$,
    \item $\mathrm{sub}(H, G)$ in time $2^{O(k \log k)} \cdot O\left(f_T(k) + d^{k - \tau_2(H)} n^{\tau_2(H)} \log n\right)$,
    \item $\mathrm{ind}(H, G)$ in time $2^{O(k^2)} \cdot O\left(f_T(k) + d^{k - \tau_3(H)} n^{\tau_3(H)} \log n\right)$.
\end{itemize}
The theorem still holds if we replace $\tau_1, \tau_2, \tau_3$ with upper bounds, and $f_T(k)$ with the time needed to compute a dag tree decomposition on $2^{O(k \log k)}$ bags that satisfies those upper bounds.
\end{theorem}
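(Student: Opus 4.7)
The plan is to prove the three bounds in the order $\mathrm{hom}$, then $\mathrm{sub}$, then $\mathrm{ind}$, since each reduces to the previous via a standard combinatorial identity whose size controls the prefactor. The algorithmic core is the homomorphism bound; once that is in place, the subgraph and induced subgraph bounds follow by summing the homomorphism algorithm over a family of pattern modifications, at the cost of a $2^{O(k \log k)}$ or $2^{O(k^2)}$ multiplicative factor coming from the size of the family.

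\textbf{Counting homomorphisms.} First I would fix a degeneracy orientation of $G$ in $O(n+m)$ time, so that every vertex has out-degree at most $d$. Next, I would pick an acyclic orientation $\vec{H}$ of $H$ realizing $\tau_1(H)$ and compute, in time $f_T(k)$, a DAG tree decomposition $(T, \{B_t\}_{t \in V(T)})$ of $\vec{H}$ of width $\tau_1(H)$ on $2^{O(k \log k)}$ bags. I would then run a bottom-up dynamic program over $T$ where the table at node $t$ is indexed by maps $\phi_t : B_t \to V(G)$, and the entry $\Phi_t[\phi_t]$ counts the partial homomorphisms that extend $\phi_t$ to every source in the subtree rooted at $t$ together with every non-source reachable from those sources. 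The crucial running-time observation is that once $\phi_t(s) = v$ is fixed, each non-source $u$ reachable from $s$ (but not already mapped by an ancestor bag) must be mapped to a vertex on a directed path of length at most $k$ starting at $v$ in the degeneracy orientation; since out-degrees are bounded by $d$, this allows at most $d^{k - \tau_1(H)}$ extensions per bag assignment. At an internal bag with children $t_1, \ldots, t_r$, the reachability intersection property of \Cref{def:dagtreedecomposition} forces every non-source in $R(B_{t_i}) \cap R(B_{t_j})$ to lie in $R(B_t)$, so after conditioning on $\phi_t$ the child tables become independent and a new entry is obtained as a product of compatible child entries. Iterating over $n^{\tau_1(H)}$ assignments at each of the $2^{O(k \log k)}$ bags, with $O(\log n)$ per table lookup, gives the stated $\mathrm{hom}$ bound.

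\textbf{Counting subgraphs and induced subgraphs.} For $\mathrm{sub}(H, G)$, I would invoke the identity
\[
\mathrm{sub}(H, G) \;=\; \sum_{\theta \in \Theta(H)} c_\theta \, \mathrm{hom}(H/\theta, G),
\]
where at most $2^{O(k \log k)}$ partitions $\theta$ give nonzero $c_\theta$, and each quotient $H/\theta$ has at most $k$ vertices. Applying the homomorphism bound to each $H/\theta$ yields an exponent of $\max_\theta \tau_1(H/\theta) = \tau_2(H)$. For $\mathrm{ind}(H, G)$, I would use the Möbius inversion
\[
\mathrm{ind}(H, G) \;=\; \sum_{H' \in D(H)} \mu_{H'} \, \mathrm{sub}(H', G)
\]
over the $2^{\binom{k}{2}} = 2^{O(k^2)}$ supergraphs of $H$ on the fixed vertex set $V_H$; applying the $\mathrm{sub}$ bound to each $H'$ gives the exponent $\max_{H' \in D(H)} \tau_2(H') = \tau_3(H)$ and a prefactor of $2^{O(k^2)}$. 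The final sentence of the theorem, about replacing $\tau_i$ by upper bounds, is immediate because the DP is correct with respect to any valid DAG tree decomposition, not only an optimal one.

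\textbf{Main obstacle.} The delicate point is designing the DP so that the table at node $t$ is keyed only by maps of the bag $B_t$, giving a factor $n^{\tau_1(H)}$, rather than by maps of all sources processed so far. This forces an argument that the interface between $t$ and its parent is captured entirely by $B_t$ and by the already-mapped non-sources; the reachability intersection property must be used twice — once to show that non-sources shared between sibling subtrees are reachable from ancestor bags, and once to show that the ``forget'' step when moving up the tree never loses information relevant to a later sibling or child. Verifying this forget step, and checking that the $d^{k - \tau_1(H)}$ non-source enumeration composes cleanly across bag joins without double-counting vertices that lie in overlapping reachability sets, is the technical heart of the proof.
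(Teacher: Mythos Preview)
The paper does not prove this theorem at all: it is quoted verbatim as Theorem~9 of Bressan~\cite{bressan2021faster} and used as a black box to derive \Cref{thm:11vertex}, \Cref{thm:timek/5}, and \Cref{thm:averagedegree}. There is therefore no ``paper's own proof'' to compare against; your proposal is effectively a reconstruction of Bressan's original argument, not of anything in the present paper.

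That said, your sketch contains one slip worth fixing. You write ``pick an acyclic orientation $\vec{H}$ of $H$ realizing $\tau_1(H)$'' and then run the DP on that single orientation. This is backwards: $\tau_1(H) = \max_{P \in \Sigma(H)} \tau(P)$ is a \emph{maximum} over orientations, and the reason it appears is that the algorithm must iterate over \emph{all} acyclic orientations of $H$. Once $G$ is degeneracy-oriented, a homomorphism $\phi : H \to G$ pulls back to a unique acyclic orientation of $H$, so $\mathrm{hom}(H,G) = \sum_{\vec H \in \Sigma(H)} \mathrm{hom}(\vec H, \vec G)$; there are at most $k! = 2^{O(k \log k)}$ terms, and the per-term cost is bounded by $n^{\tau(\vec H)} \le n^{\tau_1(H)}$. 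Picking a single orientation would miss homomorphisms whose induced orientation differs. Apart from this, your outline (degeneracy orientation of $G$, DP over the DAG tree decomposition with $n^{|B_t|}$ bag assignments and $d^{k-\tau_1}$ extensions to reachable non-sources, then the spasm identity for $\mathrm{sub}$ and the supergraph inclusion--exclusion for $\mathrm{ind}$) matches the structure of Bressan's proof.
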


As mentioned earlier, \cite{bera2021near} first showed that any pattern $H$ with atmost 5 vertices and later \cite{paul2024subgraph} showed that any pattern $H$ with atmost 9 vertices, subgraph counting can be performed in time $O(m \log m)$ and $\tilde{O}(n^{5/3})$, respectively. We now extend these results to patterns on up to 11 vertices and show that all such patterns have DAG treewidth at most 2.

\begin{theorem}\label{thm:11vertices}
    For any pattern graph $H$, if $|V(H)| \le 11$, then $dtw(\vec{H})\le 2$.
\end{theorem}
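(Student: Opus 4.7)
The plan is to combine Lemma~\ref{lem:dtw=ghw} with a case analysis on the number of sources $s = |S(\vec{H})|$. By the lemma, it suffices to show $\operatorname{ghw}(\mathcal{H}) \le 2$ for the reachability hypergraph $\mathcal{H}$ (vertices: non-sources of $\vec{H}$; hyperedges: sets $R(s_i)$). Setting $t = |V(\vec{H})| - s$, the cases split according to whether $s$ is small (so the sources themselves admit a trivial decomposition) or $s$ is large (so the number $t$ of non-sources is small, constraining $\mathcal{H}$).

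For $s \le 4$, a two-bag DAG tree decomposition $B_1 - B_2$ that partitions the sources into groups of size at most $2$ suffices. Since there are only two bags, the reachability intersection property $R(B_1) \cap R(B_2) \subseteq R(B)$ is vacuous for every $B \in \{B_1, B_2\}$ on the path, so $dtw(\vec{H}) \le 2$. For $5 \le s \le 11$, one has $t \le 6$, so $\mathcal{H}$ has at most $6$ vertices. I would first absorb any dominated source (one whose $R(\cdot)$ is properly contained in another's) into the bag of its dominator, without affecting the width. Then I would pick a pair $(s^{\ast}, s^{\ast\ast})$ maximizing $|R(s^{\ast}) \cup R(s^{\ast\ast})|$, place $\{s^{\ast}, s^{\ast\ast}\}$ as the center of a star-shaped tree, and attach every remaining source $s_i$ as a leaf bag $\{s^{\ast}, s_i\}$ or $\{s^{\ast\ast}, s_i\}$ (choosing whichever better matches $s_i$'s reachability). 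In the favourable case $R(s^{\ast}) \cup R(s^{\ast\ast}) = T$, the reachability intersection of any two leaves lies in $R(\text{center})$ automatically.

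The main obstacle is the residual configurations for $s \ge 5$ where no pair of sources covers $T$ in their joint reachability. Here one must construct a deeper or branched decomposition and verify the reachability intersection condition at the intermediate bags. I expect the cleanest path to be a structural lemma asserting $\operatorname{ghw} \le 2$ for any hypergraph on at most $6$ vertices with pairwise-incomparable hyperedges of size at least $2$, proved either by an exchange-style argument on $\mathcal{H}$'s 2-section or by a bounded finite case check over the possible intersection patterns (which is tractable because, after removing dominated sources, the number of distinct reachability sets on $\le 6$ non-sources is uniformly bounded). Tightness at $|V(H)| = 11$ is supported by the fact that larger DAGs (such as subdivisions of denser patterns) can already realize hypergraphs forcing $\operatorname{ghw} \ge 3$.
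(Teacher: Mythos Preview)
Your high-level strategy mirrors the paper's: the case $s \le 4$ is dispatched identically, dominated sources are absorbed as leaf bags, and what remains is a finite analysis of the reachability hypergraph. However, the structural lemma you propose as the clean finish is \emph{false} in the generality you state it. Take $\mathcal{H} = K_6$, the hypergraph on six vertices whose hyperedges are all fifteen pairs; these are pairwise incomparable and each has size~$2$. For any graph viewed as a size-$2$ hypergraph, a GHD of width $w$ yields an ordinary tree decomposition of width at most $2w-1$, so $\operatorname{ghw}(K_6) \ge \lceil(\operatorname{tw}(K_6)+1)/2\rceil = 3$. Equivalently, the one-subdivision of $K_6$, oriented so that the six original vertices are sinks and the fifteen subdivision vertices are sources, has DAG treewidth $3$ (consistent with Lemma~\ref{lem:onesubdivision}). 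That DAG has $21$ vertices and therefore does not contradict the theorem, but it does refute your lemma.

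The ingredient you are dropping is the bound on the \emph{number of sources}, i.e., on the number of hyperedges of $\mathcal{H}$. With $t$ non-sources the constraint $s + t \le 11$ forces at most $11 - t$ hyperedges, and it is precisely this coupling that makes the theorem true. The paper's proof exploits it directly: it splits on the pair $(s,t) \in \{(5,6),(6,5),(7,4),\dots\}$ and, within each case, uses the small number of sources together with reachability-size constraints (after removing dominated sources and assuming $|R(s)| \ge 2$) to exhibit an explicit width-$2$ DAG tree decomposition, sometimes a path of three bags rather than a star. Your ``bounded finite case check'' is the right instinct, but it has to be parameterized by both $s$ and $t$, not by $t$ alone; once you add that constraint you are essentially carrying out the paper's argument.
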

\begin{proof}
    If the number of sources is at most 4, we can split them into two bags, each of size at most 2. So, we are interested in cases with at least 5 sources. Also, we assume that $R(s)$ is not contained in $R(s_i)$ for any source $s_i$, otherwise we can attach a leaf bag containing $s$ to a parent bag containing $s_i$. Further, we can assume that $|R(s) | \geq 2$, otherwise it is easy to obtain a DAG tree decomposition without increasing the bag size.

    \begin{itemize}
        \item \textbf{Case 1:} Assume that the number of sources is 5 and the number of non-sources is 6.

        We pick a source $s$ such that $|R(s)|$ is minimum.

        \begin{itemize}
            \item \textbf{Case A:} If $|R(s)|\le 2$ then we can take two sources $s_i,s_j$ such that $R(s)\subseteq R(s_i)\cup R(s_j)$. Thus,  we can put $s$ in leaf bag $B$ and put $B_1=\{s_i,s_j\}$ adjacent to $B$. We place the remaining two source vertices in $B_2$, adjacent to $B_1$. It is easy to verify the reachability condition, as we only have to argue about the reachability intersection of $s$ and $R(B_2)$. However, $R(s)\subseteq R(B_1)$. Similarly, we can argue that whenever there exist sources $s$ such that there exists a pair $s_i, s_j$, then $R(s)\cap R(s_l)\subseteq R(s_i)\cup R(s_j)$ for any source $s_l$. Then we can get the DAG tree decomposition of the bag size at most two.

            Now consider the case when $|R(s)|\ge 3$ and there does not exist $\{s_i,s_j\}$ such that $R(s)\subseteq R(s_i)\cup R(s_j)$.

            \item \textbf{Case B:} $|R(s)|=3$

            We know that for any $s_i$, $|R(s_i) \cap R(S)|\le 1$, otherwise $R(s) \subseteq R(s_i)\cup R(s_j)$. Therefore, $|R(s_i)\cup R(s)| \ge 5$. If $|R(s_i)\cup R(s)| =6$ then we are done.

            If there is no such $s_i$, then we know that for each source $s$,$|R(s)|=3$. So, we consider the case when $R(s)=3$ and $|R(s_i)\cap R(s_j)|\le 1$ for every pair of source vertices $s_i$ and $s_j$. However, we require at least seven non-source vertices even to satisfy the condition when the number of sources is four. Thus, it is not possible.
            
            \item \textbf{Case C:} $|R(s)|=4$

            We know that for any $s_i$, $|R(s_i)\cap R(s)|\le 2$. But then $|R(s_i)\cup R(s)|\ge 6$. So we are done.
        \end{itemize}

        \item \textbf{Case 2:} Assume that the number of sources is 6 and the number of non-sources is 5.

        \begin{itemize}
            \item \textbf{Case A:} If there exists $s_i$ such that $R(s_i)\geq 3$.

            If $\exists s_j$ such that $|R(s_i)\cup R(s_j)|=5$, then we are done. So, $|R(s)\setminus R(s_i)|\leq 1$.

            For a non-source $x$ such that $x\notin R(s_i)$, and $|P(x)|$ is at least three (number of sources reaching $x$), there exists $x$ because there are only two non-sources that are not reachable by $s$, and there are a total $5$ source vertices other than $s$. We pick a source $s$, s.t. $x \in R(s)$. We make bag $B_1=\{s_i,s\}$. The source vertices that reach $x$ can be kept as leaf bags in $B_1$. The source vertices that do not reach $x$ are at most two. We keep it in a bag $B_2$. This gives a valid DAG tree decomposition.

            \item \textbf{Case B:} $|R(s_i)|\le 2$

            If $|R(s_i)|\le 2$ then pick two sources $s_1$ and $s_2$ such that $|R(s_1)\cap R(s_2)|=1$. Put $s_1$ and $s_2$ in bag $B_1$. Now take two other sources $s_3$ and $s_4$ such that $|R(s_1)\cap R(s_3)|=1$ and $|R(s_2)\cap R(s_4)|=1$ and put $s_3$ and $s_4$ in bag $B_2$. Now, for the remaining sources $s_5$ and $s_6$, put them in bag $B_3$. Now we make a DAG tree decomposition where the edges are between $ B_1$, $ B_2$, and $ B_3$.
        \end{itemize}

        \item \textbf{Case 3:} When the number of sources is 7, then the number of non-sources is 4.

        If $\exists s$ such that $|R(s)|\ge 3$ then we are done. Therefore $|R(s)|\le 2$. Now, if $|R(s)|=1$, then it can be easily handled. So we only consider the case when $|R(s)|=2$. However, in this case, it is not possible to have 7 sources such that $R(s_i) \nsubseteq R(s_j)$ for any $ i \ ne j$. Therefore, we have $|R(s_i)\cup R(s_j)|=4$. Hence, we are done. 
    \end{itemize}

    The remaining cases are also similar to the proof.
\end{proof}
 By applying ~\Cref{thm:11vertices} together with ~\Cref{thm:bressen}, we derive the following improved bounds.

\begin{theorem}\label{thm:11vertex}
  For any pattern graph $H$ up to 11 vertices, we can compute:
\begin{itemize}
    \item $\mathrm{hom}(H, G)$ in time $ O\left(d^{k -2} n^{2} \log n\right)$,
    \item $\mathrm{sub}(H, G)$ in time $ O\left(d^{k -2} n^{2} \log n\right)$,
    \item $\mathrm{ind}(H, G)$ in time $ O\left(d^{k -2} n^{2} \log n\right)$.
\end{itemize}
\end{theorem}
\begin{proof}
    From \Cref{thm:11vertices}, we know that all the patterns with eleven vertices have $dtw$ at most two.
    
    Given an undirected graph $G$, we know that the number of vertices of $\spasm(G)$ is at most the number of vertices in $G$. Therefore $\tau_2(H)\leq 2$. Similarly, the number of vertices in $G$ is bounded for every supergraph of $\spasm(G)$. Therefore, $\tau_3(H)\leq 2$. Therefore, using \Cref{thm:bressen} we get the desired runtime for $\mathrm{hom}(H,G), \mathrm{sub}(H,G) \text{ and } \mathrm{ind}(H,G)$.
\end{proof}

The above bound on eleven vertices is tight for quadratic time, as there exists a pattern on twelve vertices that has $dtw(H)=3$. Consider the pattern in \Cref{fig:X}. We show that $dtw(X)=3$.

\begin{lemma}
Let \(\vec{X}\) be the DAG on twelve vertices shown in \Cref{fig:X}. Then \(\operatorname{dtw}(X)=3\).
\end{lemma}

\begin{proof}
We consider the acyclic orientation of \(X\) in which the black vertices in \Cref{fig:X} are the source vertices. Since there are six sources, we can construct a DAG-tree decomposition in which each bag contains at most three sources by distributing the sources across two bags. This yields a DAG-tree decomposition of width three, and hence \(\operatorname{dtw}(X)\le 3\).

To prove the lower bound, consider the three source vertices \(\{s_1,s_2,s_3\}\). Any DAG-tree decomposition must either contain a bag that includes all three of these sources, or contain two adjacent bags \(B_1\) and \(B_2\) such that two of the sources, say \(s_i\) and \(s_j\), appear in \(B_1\) and the remaining source \(s_k\) appears in \(B_2\). 

Now consider the remaining source vertices \(u_1, u_2, u_3\). Observe that for each \(i\), the reachability set \(R(u_i)\) is contained in \(R(s_i)\cup R(s_j)\) for suitable \(i\). However, the vertices \(u_j\) and \(u_k\) cannot be placed in non-adjacent bags without violating the reachability intersection property. Consequently, there must exist a bag containing the three sources \(\{s_k,u_j,u_k\}\). This implies that every DAG-tree decomposition of \(X\) has a bag of size at least three, and therefore \(\operatorname{dtw}(X)\ge 3\).

Combining both bounds, we conclude that \(\operatorname{dtw}(X)=3\).
\end{proof}

\begin{figure}[t]
\centering
\begin{tikzpicture}[
    vertex/.style={circle, draw, thick, inner sep=0.8mm},
    source/.style={vertex, fill=black},
    nonsource/.style={vertex, fill=blue}
]

% Sources
\node[source, label=left:$u_1$]  (u1) at (0,4) {};
\node[source, label=right:$u_3$] (u3) at (4,4) {};
\node[source, label=below:$u_2$] (u2) at (2,1) {};

\node[source, label=right:$s_1$] (s1) at (2,4) {};
\node[source, label=left:$s_2$]  (s2) at (1,2.5) {};
\node[source, label=right:$s_3$] (s3) at (3,2.5) {};

% Non-sources
\node[nonsource, label=above:$v_1v_6$] (v16) at (2,5) {};
\node[nonsource, label=below:$v_2v_3$] (v23) at (0,2) {};
\node[nonsource, label=right:$v_4v_5$] (v45) at (4,2) {};

\node[vertex, label=left:$u'_1$]  (u1p) at (1,3.5) {};
\node[vertex, label=below:$u'_2$] (u2p) at (2,2.5) {};
\node[vertex, label=above:$u'_3$] (u3p) at (3,3.5) {};

% Edges
\draw (u1)--(v16)--(u3)--(v45)--(u2)--(v23)--(u1);
\draw (v16)--(s1)--(u3p)--(s3)--(v45);
\draw (s1)--(u1p)--(s2)--(v23);
\draw (s3)--(u2p)--(s2);

\end{tikzpicture}
\caption{The graph \(X\). The black vertices denote sources, and the others are non-sources.}
\label{fig:X}
\end{figure}

In \cite{paul2024subgraph}, apart from the subquadratic algorithm for subgraph counting of patterns with at most 9 vertices, the authors conjectured the following:

\conjjj*

In the above conjecture, $\mathcal{H}_{\Delta}$ is the same as defined in \cite{paul2024subgraph} (See \Cref{fig:4vertices}). Thus, \Cref{con:paul} implies that there are no subquadratic algorithms for computing the number of subgraphs of all 10-vertex patterns $H$. We now answer this conjecture in the affirmative by showing that no combinatorial algorithm can compute $\mathrm{sub}(H, G)$ in $o(m^2)$ time, when $|V(H)|=10$.

\subsection*{Proof of \Cref{con:paul}}

Our proof relies on the assumption from~\cite{komarath2023finding}, which states that there is no combinatorial algorithm of time $o(m^2)$ to count copies of $K_4$ ( \cite{komarath2023finding}).

We use a strategy similar to that in~\cite{bera2021near} to prove the hardness of counting subdivision of $K_4$ patterns.  

Let \( H \) be a graph on \( h \) vertices. We define a coloring of a graph \( G \) as 
\( c: V(G) \rightarrow [h] \). A mapping \( \phi: V(H) \rightarrow V(G) \) is called a 
\textit{colorful homomorphism} if 
\[
\{ c(\phi(v)) : v \in V(H) \} = [h].
\]
In other words, a colorful homomorphism maps each vertex of \( H \) to a distinct color class in \( G \), thereby ensuring injectivity. It is a known fact that counting injective colorful homomorphisms reduces to counting homomorphisms \( \mathrm{Hom}(H, G) \) such that the 
reduction preserves degeneracy (see Lemma 3.1 \cite{bera2021near}). Also, similar to Lemma 3.2 of \cite{bera2021near}, if one can detect colorful \( K_4 \) in a 4 vertex colored graph with \( m \) edges in time \( f(m) \), then it is possible to detect (uncolored) \( K_4 \) in a graph with \( m \) edges in time \( \tilde{O}(f(m)) \) via color-coding.

Using all these known facts, we now describe the construction used for this reduction: 
Given a graph \( G \) and a coloring \( c: V(G) \rightarrow [4] \), we construct a new graph 
\( G' \) with a coloring \( c': V(G') \rightarrow [10] \) in time \( O(|V(G)| + |E(G)|) \), 
such that 
\[
|V(G')|,\, |E(G')| = \mathcal{O}(|V(G)| + |E(G)|).
\]
We now claim that the number of colored subdivision \( K_4 \) patterns in \( G' \) is equal to the 
number of \( K_4 \) subgraphs in \( G \). Furthermore, \( G' \) is 2-degenerate.

The construction is as follows: Since \( G \) is colored with 4 colors, we assign one of 
6 new colors to each subdivision vertex in \(G' \). For every edge \( \{x, y\} \in E(G) \), 
if \( c(x) \neq c(y) \), we subdivide the edge by introducing a new vertex with color 
\( (c(x), c(y)) \). If \( c(x) = c(y) \), the edge is deleted. Note that the original vertices 
in \( V(G) \) retain colors from \(\{1, 2, 3, 4\}\), while subdivision vertices are colored 
with the remaining 6 colors. Clearly, $|V(G')| \leq |V(G)| + |E(G)| \quad \text{and} \quad |E(G')| \leq 2|E(G)|$
both remain in \( O(|V(G)| + |E(G)|) \).

One can observe that the original vertices \( V(G) \) form an independent set in \(G' \), and 
all subdivision vertices have degree exactly 2 in \(G' \). After removing the subdivision 
vertices, the resulting graph is empty. Hence, \( G' \) is 2-degenerate. Also, from the construction, it is straightforward to verify that the number of colorful \( K_4 \) copies in \( G \) equals the number of colorful subdivision \( K_4 \) copies in \(G' \).

Now, suppose that homomorphism counting for subdivision \( K_4 \) can be performed in 
$o((n^2)$ time on 2-degenerate \( n \)-vertex graphs. 
Then, one can also count colored injective homomorphisms of  subdivision \( K_4 \) in 
$o((n^2)$ time on 2-degenerate, \( k \)-vertex-colored, \( n \)-vertex graphs.  

Using the above construction, one can then detect colorful \( K_4 \) in 4-vertex-colored, 
\( m \)-edge graphs in $o(m^2)$ time. Consequently, this yields an 
\( \tilde{o}(m^2) \)-time algorithm for detecting (uncolored) \( K_4 \) in general 
\( m \)-edge graphs.

However, it is known that \( \mathrm{mtw}(K_4) = 4 \) and no combinatorial algorithm exists 
to detect \( K_4 \) in $o(m^2)$ time under standard complexity assumptions (\cite{komarath2023finding}). 
Therefore, subdivision \( K_4 \) subgraph counting cannot be done in \( o(m^2) \) time. This further implies that there is no $o(m^2)$ algorithm for computing $sub(H, G)$, when $V(H)=10$.

\hfill $\square$
\begin{figure}[h]
    \centering
    \captionsetup{justification=centering}
    \includegraphics[width=0.5\textwidth]{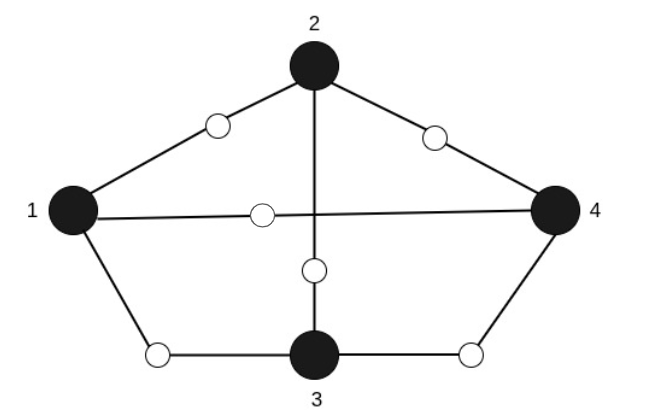}
    \caption{$k_4$ with one subdivision}
    \label{fig:4vertices}
\end{figure}

We now derive the following lemma, which will help us to show that for a DAG with $k$ vertices, $dtw(H)\le \frac{k}{5}+4$. We then do further refinement to show that $dtw(H)\le \frac{k}{5}+3$.

\begin{lemma}\label{lem:boundedreachablity}
    Given a DAG $H$ with $t$ source vertices such that the reachability of every source vertex is bounded by 2, we have $dtw(H) \leq \frac{t}{5.217} + 4$.
\end{lemma}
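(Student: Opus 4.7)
The plan is to reduce the bound on $dtw(H)$ to a sparse-graph pathwidth bound on an auxiliary undirected graph constructed from $H$.

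\textbf{Step 1: Auxiliary graph.} I would build $G$ with $V(G)$ equal to the non-sources of $H$, and an edge $\{u,v\}$ for each source $s$ with $R(s) = \{u,v\}$. Under the standing normalisations from \Cref{thm:dtdedgebound} (sources with $|R(s)|\leq 1$, dominated sources, and dominated non-sources are absorbed into leaf bags at additive cost $O(1)$ in the final bag size), $|E(G)| = t$ and the minimum degree of $G$ is at least $2$.

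\textbf{Step 2: From $dtw(H)$ to a parameter of $G$.} By \Cref{lem:dtw=ghw}, $dtw(H) = ghw(\mathcal{H})$, where $\mathcal{H}$ is the hypergraph with hyperedges $\{R(s) : s \in S\}$. Since each hyperedge has size exactly $2$, $\mathcal{H}$ is the 2-uniform hypergraph $G$, so $dtw(H) = ghw(G)$. Because $G$ has no isolated vertex, any tree decomposition of $G$ of width $w$ yields a GHD of width at most $w+1$ (pick one incident edge per vertex of each bag to form the cover). Hence $dtw(H) \leq tw(G) + 1 \leq pw(G) + 1$.

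\textbf{Step 3: Sparse pathwidth bound.} The remaining task is to show $pw(G) \leq t/5.217 + O(1)$. This is a Kneis--Mölle--Richter--Rossmanith-style bound in the regime of graphs with $m$ edges and minimum degree at least $2$: a weighted vertex-elimination argument, stratifying vertices by degree class and tracking a potential that decreases by a controlled amount per elimination, yields the required constant when $G$ has no degree-$1$ vertex. Combining Steps 2 and 3 gives
\[
dtw(H) = ghw(G) \leq pw(G) + 1 \leq \frac{t}{5.217} + 4,
\]
where the additive $4$ absorbs the $+1$ from the GHW-to-TW conversion, rounding, and the $O(1)$ overhead from the preprocessing in Step 1.

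\textbf{Main obstacle.} The technical heart of the proof is the sparse pathwidth bound with the precise constant $5.217$. This requires designing a potential function (a weighted combination of $|V|$, $|E|$, and a corrective term for maximal degree-$2$ chains) and a delicate case analysis verifying that every elimination step either reduces the potential by enough or yields an instance on which induction applies. A secondary concern is to verify that the Step~1 preprocessing contributes only additive overhead, which is routine but must be spelled out by treating each absorbed source or non-source as a leaf attachment in the final DAG tree decomposition rather than allowing it to multiply the width.
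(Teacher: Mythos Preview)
Your approach is correct and close to the paper's, but you have misidentified where the work lies. The constant $5.217$ is not something you need to derive: the bound $tw(G)\leq m/5.217+3$ for any graph with $m$ edges is precisely the result of Kneis--M\"olle--Richter--Rossmanith, and the paper simply \emph{cites} it. So your ``main obstacle'' evaporates; Step~3 is a one-line citation, and the minimum-degree-$2$ assumption you introduce is unnecessary since the cited bound holds for arbitrary graphs.

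The paper's conversion from $tw(G)$ to $dtw(H)$ is also more direct than yours. Rather than passing through \Cref{lem:dtw=ghw} and the inequality $ghw(G)\leq tw(G)+1$, the paper takes a tree decomposition of $G$ of bag size at most $t/5.217+4$ and, in each bag, replaces every non-source $u$ by some source $s$ with $u\in R(s)$; any source not yet covered has $R(s)$ contained in a single bag (since $|R(s)|\le 2$ and the endpoints of the corresponding edge of $G$ share a bag), so it is attached as a singleton leaf. This preserves bag sizes and yields a valid DAG tree decomposition immediately. Your $ghw$ route is perfectly valid and gives the same bound, just with one extra layer.

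A minor correction: your Step~1 invokes ``standing normalisations from \Cref{thm:dtdedgebound}'', but that theorem concerns DAG \emph{treedepth}, not DAG treewidth, and the present lemma neither assumes nor needs those normalisations. If some source has $|R(s)|\le 1$ you simply get $|E(G)|\le t$, which only helps.
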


\begin{proof}
    We construct a graph $G$ where each vertex corresponds to a non-source vertex in $H$. For every source vertex $s$ in $H$ such that $R(s) = \{u, v\}$, we add the edge $\{u, v\}$ in $G$. Thus, the number of edges in $G$ is equal to the number of source vertices in $H$, i.e., $|E(G)| = t$.

    From the result of ~\cite{kneis2005algorithms}, we know that any graph with $m$ edges has treewidth at most $\frac{m}{5.217} + 3$. Hence, $tw(G) \leq \frac{t}{5.217} + 3$, and there exists a tree decomposition $T$ of $G$ with maximum bag size at most $\frac{t}{5.217} + 4$.

    We now use $T$ to construct a DAG tree decomposition $T'$ of $H$. For every non-source vertex $u$ in a bag of $T$, we replace it by the source vertex $s$ such that $u \in R(s)$. Since $|R(s)| = 2$ for every $s$, this replacement is well-defined and unique. After this replacement, for every source vertex $s$, either $s$ appears in some bag, or $R(s)$ is contained in a single bag. In the latter case, we can add a leaf bag $B_s$ containing only $s$ as a child of the bag that contains $R(s)$. This ensures the reachability condition of DAG tree decomposition is satisfied.

    Thus, the resulting DAG tree decomposition $T'$ has maximum bag size at most $\frac{t}{5.217} + 4$, and we conclude that $dtw(H) \leq \frac{t}{5.217} + 4$.
\end{proof}

\begin{theorem}\label{thm:kbyfive}
    For any DAG $H$ with $k$ vertices, we have $dtw(H) \leq \frac{k}{5} + 4$.
\end{theorem}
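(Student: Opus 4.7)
The plan is to adapt the greedy/inductive strategy from Theorem \ref{thm:dtdk/4} and Theorem \ref{thm:dtdedgebound} to the DAG treewidth setting, and combine it with Lemma \ref{lem:boundedreachablity} applied to the residual. I will construct the DAG tree decomposition incrementally along a chain of bags, processing sources in order of decreasing new reachability $|R(s) \setminus R(D)|$, where $D$ is the set of already-discovered vertices.

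First, I would iteratively handle sources with $|R(s) \setminus R(D)| \ge 4$: each such source, when placed in a new bag adjacent to the current one (together with at most a constant number of bridge sources needed to preserve the reachability intersection property), contributes at most $1$ to the width while marking at least $5$ new vertices (the source itself plus $\ge 4$ new non-sources). Corner cases of sources with reachability $0$ or $1$, or sources whose reachability is contained in that of another source, are absorbed into the additive constant, analogously to the case analysis in Theorem \ref{thm:dtdk/4}. The intermediate case, where some source has $|R(s) \setminus R(D)| = 3$ but none has $\ge 4$, requires an amortization argument: by bundling such sources with shared non-sources, one can still discover on average at least $5$ vertices per unit of bag-size growth.

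Second, once every remaining source has reachability bounded by $2$ in the residual DAG $H'$, I would invoke Lemma \ref{lem:boundedreachablity} to obtain a DAG tree decomposition of $H'$ of width at most $t'/5.217 + 4 \le k'/5.217 + 4$, where $k'$ denotes the number of residual vertices and $t' \le k'$ the number of residual sources. Splicing this decomposition into the chain built in the first step (via a bounded number of bridge sources) yields the full decomposition for $H$. The arithmetic then combines the two phases: if the greedy phase performs $h$ steps, then $5h \le k - k'$, so the overall width satisfies
\[
h + \frac{k'}{5.217} + 4 \;\le\; \frac{k-k'}{5} + \frac{k'}{5.217} + 4 \;\le\; \frac{k}{5} + 4,
\]
because the slack $1/5 - 1/5.217 > 0$ keeps the $k'$-term non-positive, absorbing the additive constants.

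The main obstacle is Phase 1: maintaining the reachability intersection property while keeping the bag-size growth to at most $1$ per processed source, especially when high-reachability sources share common non-sources across different bags of the chain. A secondary obstacle is the intermediate $3$-reachability regime, where Lemma \ref{lem:boundedreachablity} does not directly apply and one must perform a careful amortization; handling this case without incurring extra additive terms is the delicate point of the argument.
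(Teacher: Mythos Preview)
Your overall two-phase strategy matches the paper's, but Phase~1 contains a genuine gap that the paper resolves in a completely different way. You propose to place each high-reachability source $s$ in a \emph{new} bag along a chain, together with ``at most a constant number of bridge sources.'' You then correctly identify the obstacle: there is no reason a constant number of bridge sources suffices. If $s_1, s_3$ (in non-adjacent bags of your chain) share a non-source not reachable from $s_2$ (in the intermediate bag), the reachability intersection property fails; fixing this for all pairs may force linearly many sources into each bag, destroying the $+1$-per-step bound. You name this as the main obstacle but offer no mechanism to overcome it.

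The paper sidesteps this entirely: it does not build a chain at all in Phase~1. Instead it collects all high-reachability sources into a single set $B^*$ and then adds $B^*$ to \emph{every} bag of the Phase~2 decomposition of the residual $H'$. This trivially preserves the reachability intersection property (anything reachable from $B^*$ is covered in every bag), and since each source in $B^*$ accounts for $\ge 5$ discovered vertices, $|B^*| \le (k-k')/5$, giving width $\le |B^*| + k'/5 + 4 \le k/5 + 4$. The same trick is reused for the reachability-$3$ case: the paper builds another set $B_1$ and again adds it wholesale to every bag of a sub-decomposition, followed by a case split on $|B_1|$ and on the number of remaining non-sources. Your proposed ``amortization by bundling'' for the $3$-reachability regime is too vague to evaluate and does not correspond to what the paper does. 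The missing idea, in short, is: do not try to distribute the greedy sources across a chain; put the whole greedy set in every bag.
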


\begin{proof}
    We construct a DAG tree decomposition for $H$ with maximum bag size at most $\frac{k}{5} + 4$.

    The construction has two phases. In the first phase, we build a set of source vertices through which we can discover at least four new non-sources. In the second phase, we build a DAG decomposition for the remaining sources. Finally, we combine the two to obtain a DAG tree decomposition of $H$.

    \paragraph*{First phase:} Initialize $B^*$, a subset of sources of $H$, to the empty set. We greedily add sources to $B^*$ as follows: at each step, if there is a source vertex $s$ such that $|R(s) \setminus R(B^*)| \geq 4$. We mark $s$ and the non-sources reachable from $s$ as $\SEEN$. This step marks at least five new vertices. We add $s$ to $B^*$ and remove all $\SEEN$ vertices from $H$. Then, we repeat this process. If there is no source vertex $s$ such that $|R(s) \setminus R(B^*)| \geq 4$, then we stop this phase. Let the subgraph of $H$ at the end of this phase be $H'$, and let $k'$ be the number of vertices in $H'$.
    
    \paragraph*{Second phase:} Our goal now is to construct a DAG tree decomposition $T'$ of $H'$ with maximum bag size at most $\frac{k'}{5} + 4$. In the graph $H'$, every source vertex has at most three reachable vertices, that is, $|R(s)| \leq 3$. We now apply the construction in the first phase again to build a set of sources $B_1$ (starting from the empty set) by adding iteratively sources $s$ such that $|R(s) \setminus R(B_1)| = 3$ while removing all vertices reachable from $s$ at each step as before. Let $H_1$ be the resulting subgraph of $H'$. Now, we split the proof into cases:
\begin{itemize}
    \item $|B_1| = 0$: By ~\Cref{lem:boundedreachablity}, we have $dtw(H') \leq \frac{k'}{5.217} + 4$.

    \item $|B_1| = \frac{k'}{5}$: We have removed at least $4(\frac{k'}{5})$ vertices while building $B_1$. Therefore, at most $\frac{k'}{5}$ sources remain. Let $B_2$ be the set of remaining sources. The DAG tree decomposition of $H'$ is a tree that is an edge. One vertex is the bag $B_1$, and the other is the bag $B_2$. This decomposition has a maximum bag size $\frac{k'}{5}$.

    \item $0 < |B_1| < \frac{k'}{5}$: Let $t$ be the number of non-source vertices in $H_1$. If $t \leq \frac{k'}{5}$, we build a bag $B'$ that includes one source for each of the $t$ non-source vertices. Any remaining sources are attached to the bag $B'$ as singleton leaf bags. Finally, the bag $B_1$ is attached to $B'$. The maximum bag size remains at most $\frac{k'}{5}$. If $t > \frac{k'}{5}$, then the number of sources in $H_1$ is less than $\frac{4k'}{5}$. Let $G_1$ be the induced subgraph of $H'$ on $B_1 \cup R(B_1)$ and let $k_1$ be the number of vertices in $G_1$. Then, $G_1$ has $\frac{k_1}{4}$ sources and $\frac{3k_1}{4}$ non-sources. Let $G_2$ be the graph obtained by removing $V(G_1)$ from $H'$. In $G_2$, each source has at most two reachable vertices, and the number of sources is less than $\frac{4k'}{5} - \frac{k_1}{4}$. Applying ~\Cref{lem:boundedreachablity} gives:
    $$
    dtw(G_2) \leq \frac{1}{5.217} \left( \frac{4k'}{5} - \frac{k_1}{4} \right) + 4.
    $$
    We then form the decomposition of $H'$ by adding $B_1$ to each bag in the decomposition of $G_2$. So, the maximum bag size is:
    $$
    dtw(H') \leq \frac{4k'}{25} - \frac{k_1}{20} + \frac{k_1}{4} + 4 = \frac{4k'}{25} + \frac{k_1}{5} + 4.$$
    If $k_1 > \frac{k'}{5}$, then the number of source vertices in $H'$ is at most $\frac{2k'}{5}$. Thus, we can get DAG tree decomposition for $H'$. with maximum bag size $\frac{k'}{5}$. So $k_1 \leq \frac{k'}{5}$ and therefore $\frac{4k'}{25} + \frac{k_1}{5} + 4 \leq \frac{k'}{5} + 4$.
\end{itemize}
Therefore, in all cases, at the end of the second phase, we have a DAG tree decomposition for $H'$ with maximum bag size at most $k'/5 + 4$.

Finally, by adding $B^*$ from the first phase to every bag of the decomposition of $H'$ obtained in the second phase, we obtain a DAG tree decomposition of $H$. The maximum bag size of this DAG tree decomposition is: $|B^*|+\frac{k'}{5}+4 \leq \frac{(k-k')}{5}+\frac{k'}{5}+4=\frac{k}{5}+4$.
\end{proof}

\begin{corollary}\label{cor:k/4+3}
    The additive term $+4$ arises only when using ~\Cref{lem:boundedreachablity}, which divides by $5.217$. If we divide by $5$, then in practice, using $+3$ is sufficient. Hence, $dtw(H) \leq \frac{k}{5} + 3$ for all DAGs with $k$ vertices.
\end{corollary}

We now use \Cref{cor:k/4+3} and \Cref{thm:bressen} to have the following result.

\begin{theorem}\label{thm:timek/5}
  Consider any $k$-node pattern graph $H = (V_H, E_H)$. Then one can compute:
\begin{itemize}
    \item $\mathrm{hom}(H, G)$ in time $ 2^{O(k\log k)}\cdot O(d^{k-\lfloor\frac{k}{5}\rfloor-3}n^{\lfloor\frac{k}{5}\rfloor+3}\log n)$,
    \item $\mathrm{sub}(H, G)$ in time  $ 2^{O(k \log k)}\cdot O(d^{k-\lfloor\frac{k}{5}\rfloor-3}n^{\lfloor\frac{k}{5}\rfloor+3}\log n)$,
    \item $\mathrm{ind}(H, G)$ in time $ 2^{O(k^2)}\cdot O(d^{k-\lfloor\frac{k}{5}\rfloor-3}n^{\lfloor\frac{k}{5}\rfloor+3}\log n)$.
\end{itemize}
\end{theorem}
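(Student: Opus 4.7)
\textbf{Proof plan for Theorem~\ref{thm:timek/5}.} The plan is to combine the structural bound from Corollary~\ref{cor:k/4+3} with Bressan's algorithmic result stated in Theorem~\ref{thm:bressen}. All three running times in the statement have the same exponent $\lfloor k/5\rfloor + 3$, which matches the DAG-treewidth upper bound $\mathrm{dtw}(H)\le k/5 + 3$ from Corollary~\ref{cor:k/4+3}. So, conceptually, the entire proof reduces to showing that each of the parameters $\tau_1(H)$, $\tau_2(H)$, $\tau_3(H)$ is bounded by $\lfloor k/5\rfloor + 3$, and then invoking Theorem~\ref{thm:bressen} verbatim.

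First, I would handle $\tau_1(H)$. By definition, $\tau_1(H) = \max\{\tau(P) : P \in \Sigma(H)\}$, i.e.\ the maximum DAG treewidth over acyclic orientations of $H$; but this is exactly $\mathrm{dtw}(H)$, and Corollary~\ref{cor:k/4+3} gives $\mathrm{dtw}(H)\le \lfloor k/5\rfloor + 3$ since $H$ has $k$ vertices. Substituting into the first bullet of Theorem~\ref{thm:bressen} immediately yields the claimed running time for $\mathrm{hom}(H,G)$.

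Next, I would handle $\tau_2(H) = \max\{\tau_1(H/\theta) : \theta \in \Theta(H)\}$. For every equivalence relation $\theta$ on $V_H$, the quotient graph $H/\theta$ has at most $k$ vertices, so Corollary~\ref{cor:k/4+3} (applied to $H/\theta$ in place of $H$) gives $\tau_1(H/\theta)\le \lfloor |V(H/\theta)|/5\rfloor + 3\le \lfloor k/5\rfloor + 3$. Taking the maximum over all $\theta$ shows $\tau_2(H)\le \lfloor k/5\rfloor + 3$, and the second bullet of Theorem~\ref{thm:bressen} then yields the bound for $\mathrm{sub}(H,G)$. Finally, $\tau_3(H) = \max\{\tau_2(H') : H' \in D(H)\}$ ranges over supergraphs of $H$ on the \emph{same} vertex set $V_H$, each of which still has exactly $k$ vertices; applying the $\tau_2$ bound to each $H'\in D(H)$ yields $\tau_3(H)\le \lfloor k/5\rfloor + 3$, and the third bullet of Theorem~\ref{thm:bressen} gives the bound for $\mathrm{ind}(H,G)$.

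There is essentially no hard step here, since Corollary~\ref{cor:k/4+3} does all the structural work and Theorem~\ref{thm:bressen} provides a ``black-box'' conversion from a DAG-treewidth upper bound to a running-time upper bound. The only subtlety worth spelling out carefully is the observation that $|V(H/\theta)|\le k$ for all $\theta$ and $|V(H')| = k$ for all $H'\in D(H)$, so that the single bound $\lfloor k/5\rfloor+3$ can be used uniformly inside the $\max$ defining $\tau_2$ and $\tau_3$. Once this is noted, the three bullets of the theorem follow by direct substitution, which is why the proof can be stated in just a few lines.
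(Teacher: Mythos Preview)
Your proposal is correct and is exactly the approach the paper takes: the paper simply says ``We now use \Cref{cor:k/4+3} and \Cref{thm:bressen}'' and states the theorem, and your write-up just spells out the easy observation (made explicitly by the paper in the analogous proof of \Cref{thm:11vertex}) that the quotients $H/\theta$ and the supergraphs $H'\in D(H)$ all have at most $k$ vertices, so the single bound $\lfloor k/5\rfloor+3$ from \Cref{cor:k/4+3} controls $\tau_1,\tau_2,\tau_3$ uniformly.
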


This improves the previous known bound for counting induced subgraphs of a $k$-vertex pattern of \cite{bressan2021faster}. In \cite{bressan2021faster}, the author improved the exponent of $n$ from  $0.791k + 2$ (\cite{nevsetvril1985complexity}) to $0.25k+2$. In  \Cref{cor:k/4+3}, we show that the exponent of $n$ is bounded by $0.2k+3$. Let $r$ be the average degree of $G$. Since we know that $d=O(\sqrt{rn})$ as a corollary of \Cref{thm:timek/5} we have the following theorem:

\begin{theorem}\label{thm:averagedegree}
    Consider any $k$-node pattern graph $H=(V_H,E_H)$ and let $r$ be the average degree of $G$. Then one can compute:
\begin{itemize}
    \item $\mathrm{hom}(H, G)$ in time $ 2^{O(k\log k)}\cdot O(r^{\frac{1}{2}(k-\lfloor\frac{k}{4}\rfloor-2)}n^{\frac{1}{2}(k+\lfloor\frac{k}{4}\rfloor+2)}\log n)$,
    \item $\mathrm{sub}(H, G)$ in time  $ 2^{O(k \log k)}\cdot O(r^{\frac{1}{2}(k-\lfloor\frac{k}{4}\rfloor-2)}n^{\frac{1}{2}(k+\lfloor\frac{k}{4}\rfloor+2)}\log n)$,
    \item $\mathrm{ind}(H, G)$ in time $ 2^{O(k^2)}\cdot O(r^{\frac{1}{2}(k-\lfloor\frac{k}{4}\rfloor-2)}n^{\frac{1}{2}(k+\lfloor\frac{k}{4}\rfloor+2)}\log n)$.
\end{itemize}
\end{theorem}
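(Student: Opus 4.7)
The plan is to obtain Theorem \ref{thm:averagedegree} as a direct consequence of Theorem \ref{thm:timek/5} by substituting a bound on the degeneracy $d$ in terms of the average degree $r$ and the number of vertices $n$ of the host graph $G$. The starting point is the well-known observation that for any graph $G$ with $m$ edges one has $d \le O(\sqrt{m})$: indeed, if $G$ contains a subgraph $G'$ with minimum degree $d$, then $G'$ has at least $d+1$ vertices and at least $\binom{d+1}{2}$ edges, so $d(d+1)/2 \le m$, giving $d = O(\sqrt{m})$. Since $m = rn/2$, this yields the claimed bound $d = O(\sqrt{rn})$.

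Next, I would plug this degeneracy bound into each of the three running times from Theorem \ref{thm:timek/5}. Writing $t := \lfloor k/5\rfloor + 3$ for brevity (with the understanding that the exponents should match those in Theorem \ref{thm:timek/5}; the statement of Theorem \ref{thm:averagedegree} appears to use the coarser estimate $\lfloor k/4\rfloor + 2$, which is upper-bounded by $t$), the factor $d^{k-t} n^{t}$ becomes
\[
d^{k-t} n^{t} \;=\; O\bigl((\sqrt{rn})^{k-t}\bigr)\cdot n^{t} \;=\; O\bigl(r^{(k-t)/2} \, n^{(k-t)/2 + t}\bigr) \;=\; O\bigl(r^{(k-t)/2}\, n^{(k+t)/2}\bigr).
\]
This algebraic substitution is to be carried out separately in the three bounds of Theorem \ref{thm:timek/5} (for $\mathrm{hom}(H,G)$, $\mathrm{sub}(H,G)$, and $\mathrm{ind}(H,G)$), preserving in each case the preceding $2^{O(k\log k)}$ or $2^{O(k^2)}$ prefactor and the $\log n$ tail. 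This yields exactly the form claimed in Theorem \ref{thm:averagedegree}.

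The proof is essentially mechanical, so there is no substantive obstacle, only bookkeeping. The one step deserving care is the degeneracy-vs-edges inequality $d \le \sqrt{2m}$, which I would state as a short standalone lemma (or cite as folklore) before invoking it; the remainder is arithmetic on exponents. If one wishes to use the tighter $\lfloor k/5\rfloor + 3$ bound from Corollary \ref{cor:k/4+3} rather than $\lfloor k/4\rfloor + 2$, the same substitution yields a strictly better estimate with the same proof, so the statement as written follows a fortiori.
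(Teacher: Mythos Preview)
Your proposal is correct and matches the paper's approach exactly: the paper derives Theorem~\ref{thm:averagedegree} in one line by substituting the bound $d = O(\sqrt{rn})$ into Theorem~\ref{thm:timek/5}, which is precisely what you do (with the added bonus that you actually justify $d = O(\sqrt{m})$, whereas the paper merely asserts it). Your observation about the $\lfloor k/4\rfloor+2$ versus $\lfloor k/5\rfloor+3$ mismatch is apt; the statement as printed appears to carry over Bressan's older exponent rather than the paper's improved one, so your ``a fortiori'' remark is the right way to reconcile them.
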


\section{Connections to Treewidth and Treedepth}\label{sec:connection}

As DAG treedepth is a new parameter introduced in this work, inspired by the classical notion of treedepth, our goal is to establish a precise relationship between the two. Using the construction $\mathcal{G}$ defined in ~\Cref{sec:construction}, we show that for $G_1, G_2 \in \mathcal{G}$, the following holds: $td(G_1) \leq dtd(\vec{H}) \leq td(G_2).$

\begin{lemma}\label{lem:dtdupper}
For a given DAG $\vec{H}$ there exists a $G_S\in \mathcal{G}$ such that $dtd(\vec{H}) \leq td(G_S)$.
\end{lemma}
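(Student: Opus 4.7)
The plan is to prove by strong induction on $|V(\vec{H})|$ the following slightly stronger statement: for \emph{every} choice function $c$ producing an element $G_S \in \mathcal{G}_{\vec{H}}$ (from the construction in \Cref{def:assocfam}), the inequality $dtd(\vec{H}) \leq \td(G_S)$ holds. The lemma then follows by picking any such $c$. The base cases (the empty DAG and the one-source DAG) are immediate, since in both cases $dtd(\vec{H}) = \td(G_S) \in \{0, 1\}$.

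If $\vec{H}$ is disconnected in the underlying undirected sense, both $c$ and $G_S$ decompose along the components of $\vec{H}$, and the induction hypothesis applied componentwise gives the result after taking maxima on both sides. So assume $\vec{H}$ is connected with at least two sources. Let $T$ be an optimal elimination tree of $G_S$ of depth $\td(G_S)$, and let $r$ be its root. Note that $r \in V(G_S) = S(\vec{H})$ is a source of $\vec{H}$. I would use $r$ as the root of the DAG elimination tree being built for $\vec{H}$, set $\vec{H}' = \vec{H} \setminus (\{r\} \cup \mathrm{R}_H(r))$, and let $D_1, \ldots, D_m$ be the underlying connected components of $\vec{H}'$.

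The key step, and what I expect to be the main technical obstacle, is the following subclaim: if two sources $s_1, s_2 \neq r$ lie in the same component of $\vec{H}'$, then they lie in the same component of $G_S \setminus \{r\}$. I would prove it by taking any undirected path $s_1 = v_0, v_1, \ldots, v_\ell = s_2$ in $\vec{H}'$ and defining $u_i := v_i$ if $v_i$ is a source, and $u_i := c(v_i)$ otherwise. A short case analysis on whether each of $v_i, v_{i+1}$ is a source or a non-source, combined with the facts that sources have no incoming arcs and that $v_i \in \mathrm{R}_H(c(v_i))$ composes transitively with directed edges, shows that consecutive $u_i$'s coincide or are adjacent in $G_S$. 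Moreover, no $u_i$ equals $r$, since $v_i \notin \mathrm{R}_H(r)$ for every $v_i \in V(\vec{H}')$ forces $c(v_i) \neq r$. This yields a walk in $G_S \setminus \{r\}$ from $s_1$ to $s_2$.

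Given the subclaim, every component $D_j$ of $\vec{H}'$ has all of its sources inside a single component $C_{i(j)}$ of $G_S \setminus \{r\}$. The restriction $c|_{D_j}$ is a well-defined choice function for $D_j$ (for any non-source $v \in V(D_j)$, any $s' \in S(\vec{H})$ with $v \in \mathrm{R}_H(s')$ must itself lie in $V(D_j)$, by the same avoidance-of-$\mathrm{R}_H(r)$ argument along the $s' \to v$ path), and the resulting graph $G_{D_j}$ is a subgraph of $C_{i(j)}[S(D_j)]$. Since treedepth is subgraph-monotone, $\td(G_{D_j}) \leq \td(C_{i(j)})$, and the induction hypothesis yields $dtd(D_j) \leq \td(G_{D_j}) \leq \td(C_{i(j)}) \leq \td(G_S \setminus \{r\})$. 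Maximizing over $j$ gives $dtd(\vec{H}') \leq \td(G_S \setminus \{r\}) = \td(G_S) - 1$, where the final equality uses that $T$ is an optimal elimination tree with root $r$. Placing $r$ at the top of the DAG elimination forest then yields $dtd(\vec{H}) \leq 1 + dtd(\vec{H}') \leq \td(G_S)$, closing the induction.
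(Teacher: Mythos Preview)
Your proof is correct and takes a genuinely different route from the paper's. The paper argues directly: it fixes an elimination tree $T$ of $G_S$ and claims that $T$ itself is a valid DAG elimination tree of $\vec{H}$, verifying this by checking a ``reachability intersection property'' (for any two sources $s_i,s_j$ not in ancestor--descendant relation, $R_H(s_i)\cap R_H(s_j)$ is covered by the reachability sets of common ancestors). Your proof instead unfolds the recursive definition of DAG treedepth, picking the root $r$ of an optimal elimination tree of $G_S$, deleting $\{r\}\cup R_H(r)$, and applying the induction hypothesis to each component $D_j$ via the restricted choice function $c|_{D_j}$.

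The main technical content in your approach is the subclaim that sources in the same component of $\vec{H}'$ land in the same component of $G_S\setminus\{r\}$, which you prove by the walk argument mapping non-sources $v_i$ to $c(v_i)$. This is the analogue of the paper's reachability-intersection verification, but yours is tailored to the recursive structure. One advantage of your approach is that it tracks the recursive \Cref{def:dtd} exactly, whereas the paper's step ``$T$ satisfies the reachability intersection condition, therefore $T$ is a valid DAG elimination tree'' tacitly uses that this condition is sufficient for the recursive construction to go through (strictly speaking the recursive definition forces the subtrees of the root to correspond to the connected components of the residual DAG, which is a bit more than reachability intersection alone). Your inductive argument closes that gap automatically. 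The paper's approach, on the other hand, is shorter and makes the structural reason transparent in one shot.

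Two small remarks: your equality $\td(G_S\setminus\{r\}) = \td(G_S)-1$ does hold here (since $G_S$ is connected whenever $\vec{H}$ is, by the same walk argument, and $r$ is the root of an optimal elimination tree), so the argument is clean as stated. Also, like the paper, you actually prove the stronger ``for all $G_S\in\mathcal{G}$'' statement rather than merely ``there exists''; you note this explicitly, which is a nice bonus.
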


\begin{proof}
Let $T$ be an elimination tree of the graph $G_S$. From the construction in ~\Cref{sec:construction}, the vertex set of $G_S$ is $V(G_S) = S$, where $S$ is the set of sources in the DAG $\vec{H}$. By definition, every source vertex appears in $T$.

To prove the lemma, we need to show that $T$ also satisfies the \emph{reachability intersection property} required for a valid DAG elimination tree. That is, for any pair of source vertices $s_i$ and $s_j$, if their reachability sets intersect ($\mathrm{R}_H(s_i) \cap \mathrm{R}_H(s_j) \neq \emptyset$), then either $s_i$ and $s_j$ lie on the same root-to-leaf path in $T$, or every vertex in $\mathrm{R}_H(s_i) \cap \mathrm{R}_H(s_j)$ is reachable from some common ancestor source $s_k$ of both $s_i$ and $s_j$ in $T$.

Formally, we require that $\mathrm{R}_H(s_i) \cap \mathrm{R}_H(s_j) \subseteq \bigcup_k \mathrm{R}_H(s_k),$ where the union is over all source vertices $s_k$ that are common ancestors of $s_i$ and $s_j$ in $T$. If $\mathrm{R}_H(s_i) \cap \mathrm{R}_H(s_j) = \emptyset$, there is nothing to prove. If $s_i$ and $s_j$ lie on the same root-to-leaf path in $T$, the reachability intersection property is also satisfied.

Now consider the case when $\mathrm{R}_H(s_i) \cap \mathrm{R}_H(s_j) \neq \emptyset$ and $s_i$, $s_j$ lie on different root-to-leaf paths in $T$. This can happen only if $\{s_i, s_j\} \notin E(G_S)$. From the construction of $G_S$, we know that for each non-source vertex $u \in \mathrm{R}_H(s_i) \cap \mathrm{R}_H(s_j)$, there exists a source $s_u$ such that $u \in \mathrm{R}_H(s_u)$, and during the edge contraction process, edges $\{s_i, s_u\}$ and $\{s_j, s_u\}$ were added to $G_S$. Therefore, $s_u$ is a common ancestor of $s_i$ and $s_j$ in $T$, and $u$ is in the reachability set of $s_u$.

Since $u$ was chosen arbitrarily from $\mathrm{R}_H(s_i) \cap \mathrm{R}_H(s_j)$, the entire intersection is covered by the reachability sets of ancestor sources in $T$. Hence, $T$ satisfies the reachability intersection condition. Therefore, $T$ is a valid DAG elimination tree of $\vec{H}$, and we conclude that $dtd(\vec{H}) \leq td(G_S)$.
\end{proof}

\begin{lemma}\label{lem:dtdlb}
   For a given DAG $\vec{H}$ there exist a $G_S\in \mathcal{G}$ such that,  $dtd(\vec{H}) \geq td(G_S)$.
    
\end{lemma}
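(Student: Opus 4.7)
The plan is to exhibit a concrete member of $\mathcal{G}_{\vec{H}}$ whose treedepth is at most $dtd(\vec{H})$, by reading off a contraction scheme directly from an optimal DAG elimination forest of $\vec{H}$. So I would start by fixing an optimal DAG elimination forest $T$ of $\vec{H}$ of depth exactly $dtd(\vec{H})$. Recall from the remark after \Cref{def:dtd} that the nodes of $T$ are in bijection with the sources of $\vec{H}$; I will use this identification throughout.

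Next, I would describe the contraction used to obtain $G_S$. For each non-source vertex $u$ of $\vec{H}$, define $\eta(u)$ to be the (unique) source $s$ that ``claimed'' $u$ when $T$ was built, i.e.\ the source closest to the root of its tree in $T$ among all sources $s'$ with $u \in \mathrm{R}_H(s')$. Using the recursive construction in \Cref{def:dtd}, I would observe that (i) $\eta(u)$ is well defined since the moment $\eta(u)$ is processed, the vertex $u$ is removed from the DAG, so no strict descendant of $\eta(u)$ in $T$ can reach $u$; and (ii) every source $s'$ with $u \in \mathrm{R}_H(s')$ is a (not necessarily proper) ancestor of $\eta(u)$ in $T$, because if it were in a sibling subtree, the recursive step at the common parent would have already eliminated $u$ via $\eta(u)$. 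Hence all sources reaching $u$ lie on a single root-to-leaf path of $T$. Now form $G_S$ by contracting, in $\textsc{Bip}(\vec{H})$, the edge $\{u, \eta(u)\}$ for every non-source $u$; this is a valid member of $\mathcal{G}_{\vec{H}}$ by \Cref{def:assocfam}.

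I then claim that $T$, viewed as a rooted forest whose nodes are precisely $V(G_S)$ (the sources of $\vec{H}$), is an elimination forest of $G_S$. To verify this, I would check the only non-trivial property: for every edge $\{s_1, s_2\} \in E(G_S)$, the vertices $s_1, s_2$ lie on a common root-to-leaf path of $T$. Any such edge arises from a contraction: there is a non-source $u$ with $\{s_1, u\}, \{s_2, u\} \in E(\textsc{Bip}(\vec{H}))$ and one of $s_1, s_2$ equals $\eta(u)$; in either case $s_1$ and $s_2$ both lie on the root-to-$\eta(u)$ path of $T$ by step (ii) above. Thus $T$ is a valid elimination forest of $G_S$, giving $td(G_S) \leq \operatorname{depth}(T) = dtd(\vec{H})$, which is exactly the desired inequality.

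The main obstacle I anticipate is the subtle fact in step (ii): that every source reaching a given non-source $u$ actually lies on the root-to-leaf path through $\eta(u)$ in $T$. This requires a careful induction on the recursive construction of $T$ (or equivalently, a small case analysis at the step where the common ancestor of two such sources is processed), because an ill-chosen sibling-source could in principle reach $u$ without being on the same path. Once this structural fact is nailed down, the rest of the argument is just bookkeeping about contractions in $\textsc{Bip}(\vec{H})$, and together with \Cref{lem:dtdupper} it yields the two-sided bound promised in \Cref{thm:dagtdtd}.
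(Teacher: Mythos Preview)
Your approach is the same as the paper's: fix an optimal DAG elimination forest $T$, contract each non-source $u$ into the source $\eta(u)$ that first claims it, and check that $T$ itself is an elimination forest of the resulting $G_S$. However, claim (ii) is misstated and its stated consequence is false. Since $\eta(u)$ is by definition the \emph{closest-to-root} source reaching $u$, no strict ancestor of $\eta(u)$ can reach $u$; hence every other source $s'$ with $u\in R_H(s')$ must be a \emph{descendant} of $\eta(u)$, not an ancestor. In particular, the sources reaching $u$ need \emph{not} all lie on a single root-to-leaf path of $T$: take a root $c$ with children $a,b$ and a non-source $z$ reachable from all three; then $\eta(z)=c$, while $a$ and $b$ sit in distinct subtrees below $c$.

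The correct version of (ii), which is what the paper uses implicitly, is: every $s'$ with $u\in R_H(s')$ lies in the subtree of $T$ rooted at $\eta(u)$ (equivalently, is comparable with $\eta(u)$). The proof is exactly the induction you flag as the main obstacle: if $s'$ were incomparable with $\eta(u)$, let $r$ be their lowest common ancestor; since $r$ is a strict ancestor of $\eta(u)$ it does not reach $u$, so $u$ survives in the residual after $r$ and its ancestors are processed; but then $s'$ and $u$ lie in different undirected components of that residual, forcing the directed $s'$--$u$ path in $\vec H$ to pass through some already-removed vertex $x\in R_H(a)$ for an ancestor $a$ of $r$, whence $u\in R_H(a)$, contradicting the minimality of $\eta(u)$. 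With this correction your final step goes through unchanged: every edge of $G_S$ is of the form $\{\eta(u),s'\}$ with $s'$ a descendant of $\eta(u)$, so its endpoints are comparable in $T$, and $T$ is a valid elimination forest for $G_S$.
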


\begin{proof}
    Let $T$ be a DAG elimination tree of $\vec{H}$ based on a fixed acyclic orientation. Recall from the construction in ~\Cref{sec:construction} that the vertex set of $G_S$ is the set of source vertices of $\vec{H}$. Therefore, the nodes of $T$ correspond to the sources in $H$. We aim to show that $T$ is also a valid elimination tree for $G_S$, thereby proving that the treedepth of $G_S$ is at most the DAG treedepth of $\vec{H}$.

    Consider any pair of source vertices $s_i$ and $s_j$:
    \begin{itemize}
        \item If $s_i$ and $s_j$ lie on the same root-to-leaf path in $T$, then the elimination tree requirement for $G_S$ is trivially satisfied.
        \item If $R(s_i) \cap R(s_j) = \emptyset$, then $s_i$ and $s_j$ are not adjacent in $G_S$, and again there is nothing to prove.
        \item Otherwise, suppose $R(s_i) \cap R(s_j) \neq \emptyset$ and $s_i$ and $s_j$ lie on different root-to-leaf paths in $T$. Since $T$ is a valid DAG elimination tree, it satisfies the reachability intersection condition. For each non-source vertex $u \in R(s_i) \cap R(s_j)$, there exists a source vertex $s_u$ that is a common ancestor of both $s_i$ and $s_j$ in $T$ such that $u \in R(s_u)$.

         By the construction of $G_S$, for such a vertex $u$, an edge contraction was performed between $u$ and $s_u$. Therefore, no edge exists between $s_i$ and $s_j$ in $G_S$, the shared reachability is captured by their ancestor $s_u$.
    \end{itemize}

    Hence, $T$ is a valid elimination tree for $G_S$, and its depth is at least the treedepth of $G_S$. Thus,$dtd(\vec{H}) \geq td(G_S).$
\end{proof}

Next, as the relationship between treewidth and DAG treewidth remains relatively unexplored, we aim to establish a more concrete connection between them. Leveraging the construction $\mathcal{G}$ defined in ~\Cref{sec:construction}, we prove that there exists graphs $G_1, G_2 \in \mathcal{G}$ such that the following bounds hold: $\frac{\mathrm{dtw}(G_1) + 1}{2} \leq \mathrm{dtw}(\vec{H}) \leq \mathrm{dtw}(G_2) + 1.$ To this end in \Cref{lem:dtwupper} we show that $dtw{\vec{H} \le tw(G_S)}+1$ and in \Cref{lem:dtwlb} we show that $dtw(\vec{H}) \geq \frac{tw(G_S)+1}{2}.$

\begin{lemma}\label{lem:dtwupper}
For a given DAG $\vec{H}$ there exist a $G_S\in \mathcal{G}$ such that $dtw(\vec{H})\leq tw(G_S)+1$.
\end{lemma}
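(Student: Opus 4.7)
The plan is to take any $G_S\in\mathcal{G}$ obtained from $\textsc{Bip}(\vec{H})$ by the contraction process described in \Cref{sec:construction}, fix an optimal tree decomposition $(T,\{B(t)\}_{t\in V(T)})$ of $G_S$ realizing $tw(G_S)$, and reinterpret it as a DAG tree decomposition of $\vec{H}$. Because $V(G_S)=S$, the set of sources of $\vec{H}$, each bag $B(t)\subseteq S$ has size at most $tw(G_S)+1$, so the resulting DAG width is at most $tw(G_S)+1$. It remains to check that this is a valid DAG tree decomposition.

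The coverage axiom is immediate: every source is a vertex of $G_S$ and hence appears in some bag. For the reachability intersection property, fix bags $B_1,B_2$ and a bag $B$ on the unique $B_1\to B_2$ path in $T$, and take any $u\in R_H(B_1)\cap R_H(B_2)$. Then there are sources $s_i\in B_1$ and $s_j\in B_2$ with $u\in R_H(s_i)\cap R_H(s_j)$. By the construction of $G_S$, the non-source $u$ was contracted into a unique source $s_u\in P_H(u)$, and after contraction $G_S$ contains the edges $\{s_u,s\}$ for every $s\in P_H(u)\setminus\{s_u\}$. I would split into two cases: (a) $s_u\in\{s_i,s_j\}$, making $s_i$ and $s_j$ adjacent in $G_S$; (b) $s_u\notin\{s_i,s_j\}$, so $s_u$ is adjacent to both $s_i$ and $s_j$ in $G_S$.

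Let $T_v$ denote the connected subtree of $T$ consisting of bags containing vertex $v$. In case (a), the edge axiom gives $T_{s_i}\cap T_{s_j}\neq\emptyset$; by uniqueness of tree paths and connectivity of $T_{s_i},T_{s_j}$, the $B_1\to B_2$ path is entirely contained in $T_{s_i}\cup T_{s_j}$. In case (b), the edge axiom yields $T_{s_i}\cap T_{s_u}\neq\emptyset$ and $T_{s_j}\cap T_{s_u}\neq\emptyset$; concatenating a path in $T_{s_i}$, then $T_{s_u}$, then $T_{s_j}$ produces a $B_1\to B_2$ walk, which by uniqueness must be the tree path, so it lies in $T_{s_i}\cup T_{s_u}\cup T_{s_j}$. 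In either case every bag on the path contains a source whose reachability set already includes $u$ (since $u\in R_H(s_i)\cap R_H(s_u)\cap R_H(s_j)$), so $u\in R_H(B)$ as required.

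The main obstacle is the confinement-of-path argument in case (b), because two pairwise intersections of subtrees do not a priori force the $B_1\to B_2$ path to visit $T_{s_u}$. I expect to resolve this cleanly by exploiting that paths in a tree are unique, together with the fact that any subtree-to-subtree walk in $T$ between $B_1\in T_{s_i}$ and $B_2\in T_{s_j}$ built out of the three connected pieces $T_{s_i},T_{s_u},T_{s_j}$ must coincide with the unique tree path. Once this is established, the width bound $|B(t)|\leq tw(G_S)+1$ carries over directly, yielding $dtw(\vec{H})\leq tw(G_S)+1$.
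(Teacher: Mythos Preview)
Your approach is essentially the same as the paper's: take a tree decomposition of $G_S$ (whose vertex set is $S$) and verify directly that it is a valid DAG tree decomposition of $\vec{H}$, using the fact that contracting a non-source $u$ into $s_u$ makes $s_u$ adjacent in $G_S$ to every other source in $P_H(u)$. The paper organizes the case split slightly differently (splitting $B_1,B_2$ into $Y_1,Y_2$ and $B_1\cap B_2$ first), but the core step is identical.

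Your self-identified ``obstacle'' in case (b) is not a real obstacle, and your sketch already contains the resolution. You do not need the $B_1\to B_2$ path to \emph{visit} $T_{s_u}$; you only need it to lie in $T_{s_i}\cup T_{s_u}\cup T_{s_j}$. Since $T_{s_i}\cap T_{s_u}\neq\emptyset$ and $T_{s_u}\cap T_{s_j}\neq\emptyset$, this union is a connected subgraph of $T$, hence a subtree, and subtrees are closed under paths: the unique $B_1\to B_2$ path in $T$ coincides with the unique path inside this subtree. Thus every bag $B$ on the path contains at least one of $s_i,s_u,s_j$, each of which reaches $u$, so $u\in R_H(B)$. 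The argument is complete as stated; there is no gap to fill.
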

\begin{proof}
$T$ be a tree decomposition of the graph $G_S$. From the construction described in \Cref{sec:construction}, we know that $V(G_S) = S$, where $S$ is the set of sources in the DAG $\vec{H}$. By the definition of a tree decomposition, each bag in $T$ is a subset of the vertex set $V(G_S) = S$, and the union of all bags covers $S$. Thus, $T$ satisfies the \emph{coverage property} of the DAG-tree decomposition (DTD) (see \Cref{def:dagtreedecomposition}).

To complete the proof of the lemma, it remains to show that $T$ also satisfies the \emph{reachability intersection property}. Specifically, for all bags $B, B_1, B_2 \in \mathcal{B}$, if $B$ lies on the unique path between $B_1$ and $B_2$ in the tree $T$, then we must have:

$$\mathrm{R}_H(B_1) \cap \mathrm{R}_H(B_2) \subseteq \mathrm{R}_H(B).$$

To prove this, we define: $Y_1 = B_1 \setminus B_2$ and $Y_2 = B_2 \setminus B_1$. Since $T$ is a tree decomposition, the \emph{connectivity property} implies that for any vertex $v \in B_1 \cap B_2$, all bags containing $v$ must lie along the path between $B_1$ and $B_2$, which includes $B$. Therefore, $B_1 \cap B_2 \subseteq B$, and consequently:
$$\mathrm{R}_H(B_1 \cap B_2) \subseteq \mathrm{R}_H(B).$$

Thus, to prove the reachability intersection condition, it is sufficient to consider the contributions from $Y_1$ and $Y_2$. This leads us to two cases:

\begin{enumerate}
    \item  $\mathrm{R}_H(Y_1) \cap \mathrm{R}_H(Y_2) = \emptyset$.

    \item $\mathrm{R}_H(Y_1) \cap \mathrm{R}_H(Y_2) \neq \emptyset$.
\end{enumerate}

For both cases, we want to show that the reachability condition holds.

\

\paragraph*{Case 1: $\mathrm{R}_H(Y_1) \cap \mathrm{R}_H(Y_2) = \emptyset$}

\hfill \break

We know that $\mathrm{R}_H(B_1) = \mathrm{R}_H(Y_1 \cup (B_1 \cap B_2)),$ and $\mathrm{R}_H(B_2) = \mathrm{R}_H(Y_2 \cup (B_1 \cap B_2)).$

Taking the intersection we have,
$$\mathrm{R}_H(B_1) \cap \mathrm{R}_H(B_2) = \left( \mathrm{R}_H(Y_1) \cap \mathrm{R}_H(Y_2) \right) \cup \mathrm{R}_H(B_1 \cap B_2).$$

Since $\mathrm{R}_H(Y_1) \cap \mathrm{R}_H(Y_2) = \emptyset$ by assumption, the intersection simplifies to:

$$\mathrm{R}_H(B_1) \cap \mathrm{R}_H(B_2) = \mathrm{R}_H(B_1 \cap B_2).$$

From the connectivity property of tree decompositions, $B_1 \cap B_2 \subseteq B$, so:$$\mathrm{R}_H(B_1 \cap B_2) \subseteq \mathrm{R}_H(B).$$

Thus, the reachability condition holds in this case.

\paragraph*{Case 2: $\mathrm{R}_H(Y_1) \cap \mathrm{R}_H(Y_2) \neq \emptyset$} 
%\hfill \break

This is the more involved case where the reachability sets of $Y_1$ and $Y_2$ overlap. Our goal remains to show that every vertex in $\mathrm{R}_H(B_1) \cap \mathrm{R}_H(B_2)$ is also in $\mathrm{R}_H(B)$.

Suppose there exist sources $u \in Y_1$ and $v \in Y_2$ such that $\mathrm{R}_H(u) \cap \mathrm{R}_H(v) \neq \emptyset$. We consider two subcases based on how the common reachable non-source vertices are connected in the graph $G_S$.

\begin{enumerate}[label=(\Alph*)]
    \item \textit{There exists a non-source vertex that is reachable only from $u$ and $v$:}

From the construction of the graph $G_S$, we know that in such a case, the edges connecting this non-source vertex to $u$ and $v$ would be contracted. This results in an edge $\{u, v\} \in E(G_S)$, implying that $T$, the tree decomposition of $G_S$, contains a bag with both $u$ and $v$.

\vspace{3mm}

\item \textit{The common reachable non-source vertices are also reachable from some other set of sources:}

If, for any such non-source vertex in $\mathrm{R}_H(u) \cap \mathrm{R}_H(v)$, the edge contractions in $G_S$ were performed using $u$ and $v$, then again $\{u, v\} \in E(G_S)$, and $T$ contains a bag with both $u$ and $v$.

However, consider the case where edge contractions were performed using a different set of sources, say, a set $\{S' \}$ such that each non-source vertex in $\mathrm{R}_H(u) \cap \mathrm{R}_H(v)$ is reachable from some source in  $S'$. These sources in $S'$ are then connected via contractions to the non-sources and become common neighbors of both $u$ and $v$ in $G_S$. As a result, the tree decomposition $T$ must include intermediate bags containing these shared neighbors, which lie on the path between $u$ and $v$ in $T$.

Since $u$ and $v$ were chosen arbitrarily from $Y_1$ and $Y_2$, this argument extends to show that all shared reachable vertices in $\mathrm{R}_H(Y_1) \cap \mathrm{R}_H(Y_2)$ must be included in the reachability of some intermediate bag $B$ on the path between $B_1$ and $B_2$. Therefore,

$$
\mathrm{R}_H(B_1) \cap \mathrm{R}_H(B_2) \subseteq \mathrm{R}_H(B),
$$

for all such bags $B$ on the path from $B_1$ to $B_2$ in $T$.
\end{enumerate}

 This shows that $T$ satisfies both the coverage and the reachability intersection properties required for a valid DAG-tree decomposition. Finally, by definition, the DAG-treewidth of $\Vec{H}$ is the maximum size of any bag in the DAG-tree decomposition. Since the treewidth of $G_S$ is one less than the size of its largest bag, we conclude that $\mathrm{dtw}(\Vec{H}) \leq \mathrm{tw}(G_S) + 1.$

\end{proof}

Using  \Cref{obs:equaltreewidth} and \Cref{lem:dtwupper} we get the following remark:
\begin{remark}
    For a given DAG $\Vec{H}$ and corresponding $\textsc{Bip}(\vec{H})$, $dtw(H)\leq tw(\textsc{Bip}(\vec{H}))+1$.
\end{remark}

\begin{remark}
    
The above bound is tight. For example, consider the cycle graph $C_5$. For any acyclic orientation of $C_5$, the DAG treewidth is $\mathrm{dtw}(C_5) = 1$, while the treewidth is $\mathrm{tw}(C_5) = 2$.

For a more general case, consider the graph $G$ obtained by subdividing every edge of the complete bipartite graph $K_{n,n}$ exactly once. Use an orientation where the original vertices have out-degree only (i.e., all original vertices are sources), and the newly added subdivision vertices are non-sources.

In this construction, each vertex in one part of the bipartite graph has a unique reachable non-source shared with every vertex in the opposite part. By  ~\Cref{obs:uniqueintersection}, all sources from one side must appear together (or in adjacent bags) in any DAG tree decomposition. Hence, any bag must contain at least $n$ source vertices among the total of $2n$ sources. This gives $\mathrm{dtw}(G) \geq n$.

Moreover, we can construct a valid DAG tree decomposition by placing all sources on one side in the root bag and attaching a leaf bag to each source on the other side. This shows that $\mathrm{dtw}(G) = n$, matching the lower bound.

\end{remark}

\begin{lemma}\label{lem:dtwlb}
For a given DAG $\vec{H}$ and a corresponding set of graphs $\mathcal{G}$, there exist $G_S\in \mathcal{G}$ such that $dtw(\vec{H}) \geq \frac{tw(G_S)+1}{2}.$
\end{lemma}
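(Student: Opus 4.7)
The plan is to turn a DAG tree decomposition $T$ of $\vec{H}$ of width $w = \operatorname{dtw}(\vec{H})$ into an explicit member $G_S \in \mathcal{G}$ together with a tree decomposition of $G_S$ whose bags have size at most $2w$. This yields $\operatorname{tw}(G_S)+1 \le 2\operatorname{dtw}(\vec{H})$, which rearranges to the claimed inequality. The construction will mirror, in the opposite direction, the argument used for \Cref{lem:dtwupper}: there, a tree decomposition of some $G_S$ was reinterpreted as a DAG tree decomposition; here, a DAG tree decomposition must be expanded into a tree decomposition, with the width doubling at most once to accommodate the contraction choices.

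First I would fix $T$ with bags $\{B_t\}$, root it arbitrarily, and use the reachability intersection property to observe that for every non-source $u$, the set $T_u = \{B \in T : B \cap P(u) \ne \emptyset\}$ is a connected subtree of $T$ (via the same argument appearing in \Cref{lem:dtwupper}). Let $B_u$ denote the root of $T_u$ and pick $s_u \in B_u \cap P(u)$; this yields a specific $G_S \in \mathcal{G}$ whose edges are precisely the stars $\{\,\{s_u, s\} : s \in P(u)\setminus\{s_u\}\,\}$ for each non-source $u$.

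Next I would construct a tree decomposition of $G_S$ on the same underlying tree as $T$, setting $B' = B \cup X_B$, where $X_B$ consists of those sources that must be ``passed down'' from the parent bag to cover edges of $G_S$ straddling the parent link. Concretely, $X_B \subseteq B_{\text{parent}}$ contains the source $s_u$ for each non-source $u$ whose subtree $T_u$ contains both $B$ and $B_{\text{parent}}$. I would then verify the three axioms: (i) coverage holds since $B \subseteq B'$; (ii) edge coverage of $\{s_u,s\}$ follows because $s_u$ is propagated along $T_u$ from $B_u$ to the bag containing $s$; (iii) bag-connectivity of any source $s_u$ is immediate because $s_u \in B_u$ originally and is added only on the connected subtree $T_u$.

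The key obstacle is the bag-size bound $|B'| \le 2w$. The naïve extension $X_B = \{s_u : u \in R_H(B) \setminus R_H(B_u)\}$ can be arbitrarily large since $R_H(B)$ may be huge. My plan is to resolve this through a charging argument: each source inherited into $X_B$ must already appear in $B_{\text{parent}}$ as the contraction partner of some non-source passing through $B$, so $X_B \subseteq B_{\text{parent}}$, giving $|X_B| \le |B_{\text{parent}}| \le w$ and hence $|B'| \le 2w$. If the naïve choice of $s_u$ as a source of $B_u$ does not admit such a bound, I would refine the selection of $s_u$'s by a bottom-up pass on $T$, always picking the contraction partner of $u$ to coincide with a source that is already being propagated downwards; this ensures that at most $w$ distinct extra sources accumulate in any extended bag. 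With this control on $|X_B|$, the construction produces a valid tree decomposition of $G_S$ of width at most $2w - 1$, concluding the proof.
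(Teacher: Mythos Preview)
Your overall strategy matches the paper's: fix an optimal DAG tree decomposition $T$ of width $w$, manufacture a particular $G_S\in\mathcal{G}$ from $T$, and exhibit a tree decomposition of $G_S$ on the same tree whose bags have size at most $2w$. The difficulty is entirely in the last step, and that is where your proposal has a real gap.

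With your choice $s_u\in B_u\cap P(u)$ (root of the subtree $T_u$), the charging claim ``$X_B\subseteq B_{\text{parent}}$'' is false. If $B,B_{\text{parent}}\in T_u$ then $B_u$ is an ancestor of $B_{\text{parent}}$, and $s_u$ lives in $B_u$, not in $B_{\text{parent}}$. Along a root-to-leaf path in $T$ one can have non-sources $u_1,u_2,\dots$ whose subtrees $T_{u_i}$ are nested with distinct roots, forcing distinct $s_{u_i}$ from different ancestor bags all to be added to the deepest bag; the set $X_B$ can then be as large as the depth of $B$ times $w$. The recursive containment $X_B\subseteq B_{\text{parent}}\cup X_{B_{\text{parent}}}$ does hold, but it only yields $|B'|\le w\cdot\mathrm{depth}(B)$, not $2w$. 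Your proposed refinement (``pick $s_u$ to coincide with a source already being propagated'') does not repair this: the sources being propagated through $B_u$ need not lie in $P(u)$, and reusing labels across nested subtrees is a nontrivial combinatorial problem you have not solved.

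The paper avoids propagation altogether. It chooses the contraction partners by a bottom-up sweep over $T$ so that every edge $\{s,s'\}$ of the resulting $G_S$ has $s$ and $s'$ in the same bag or in adjacent bags of $T$ (their Claim~\ref{claim:edge}, whose statement is inverted but whose proof establishes the needed direction). Once edges are local in this sense, taking $B'=B\cup\textsc{Parent}(B)$ immediately covers every edge and gives $|B'|\le 2w$, with connectivity patched by deleting redundant occurrences of sources whose reachability is already covered by intermediate bags. The essential idea you are missing is that the contraction target for $u$ should be chosen locally (in a bag adjacent to where the other sources of $P(u)$ sit), not at the top of $T_u$; this is what guarantees each bag only needs to absorb vertices from its parent.
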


\begin{proof}
Let $\vec{H}$ be a fixed acyclic orientation, and let $T = (\mathcal{B}, E)$ be a DAG tree decomposition (DTD) of $\vec{H}$. Let $G_S$ be the graph obtained from $\vec{H}$ using the construction described in ~\Cref{sec:construction}. Note that the vertex set of $G_S$ consists of the source vertices of $\vec{H}$. Since each bag in $T$ contains only source vertices and the union of all bags equals $V(G_S)$, the first condition of a tree decomposition for $G_S$ is satisfied.

To complete the construction of $G_S$, we perform a sequence of edge contractions on $\textsc{Bip}(\vec{H})$, the bipartite graph built from $\vec{H}$ as described earlier. The contraction process is guided by the structure of $T$ and proceeds in a bottom-up fashion.

For each leaf node $u \in B_{\text{leaf}}$ in $T$ and its parent $v = \textsc{Parent}(u)$, if $R(u) \cap R(v) \neq \emptyset$, we contract the non-source vertices in this intersection with $v$. In the next step, we consider the grandparent bag $B = \textsc{Parent}(v)$ and perform edge contractions for non-source vertices in $(R(B) \cap R(v)) \setminus (R(B) \cap R(u)),$ as contractions involving vertices in $R(B) \cap R(u)$ have already been completed. This recursive process continues until all contractions are completed. We now claim the following:

\begin{claim}\label{claim:edge}
   If $u$ and $v$ belong to the same bag or to adjacent bags in $T$, then $\{u, v\} \in E(G_S)$. 
\end{claim}

\begin{proof}
Suppose, for contradiction, that $\{u, v\} \in E(G_S)$, but $u$ and $v$ do not appear in the same or adjacent bags in $T$. Let $B_1$ and $B_2$ be bags containing $u$ and $v$, respectively, and consider the unique path $T(B_1, B_2)$. Let $B \in T(B_1, B_2)$ be a bag that contains neither $u$ nor $v$. Since $T$ is a valid DAG tree decomposition, it satisfies the reachability condition, implying that there exists a minimal set of source vertices $\{u_1, \ldots, u_k\} \subseteq B$ such that:

$$R(u) \cap R(v) \subseteq \bigcup_{i=1}^k \left(R(u) \cap R(u_i)\right).$$

Due to edge contraction, each $u_i$ contracts with the non-source vertices in its reachability set. Consequently, $\{u, u_i\}$ and $\{v, u_i\}$ are edges in $G_S$, implying that $\{u, v\} \notin E(G_S)$, a contradiction.
\end{proof}

Next, we verify the \emph{connectivity condition} of the tree decomposition. Suppose a source vertex $u$ appears in two bags $B_1$ and $B_2$, but not in any bag along the path $T(B_1, B_2) \setminus \{B_1, B_2\}$. From the reachability property of $T$, we know:

$$R(B_1) \cap R(B_2) \subseteq R(B),$$

for any intermediate bag $B$. Since $R(u) \subseteq R(B_1) \cap R(B_2)$, we must have $R(u) \subseteq R(B)$ as well. Therefore, there exists a minimal set of source vertices $\{u_1, \ldots, u_k\}$ such that:

$$R(u) \subseteq \bigcup_{i=1}^k R(u_i).$$

This implies that for every non-source vertex $w \in R(u)$, there exists some $u_i \neq u$ such that $w \in R(u_i)$, allowing the contraction of $w$ with $u_i$. As a result, all of $ u$'s neighbors in $G_S$ are contained in $\{u_1, \ldots, u_k\}$, and so $u$ can safely be removed from $B_2$, maintaining the connectivity condition.

We now verify the \emph{second condition} of the tree decomposition for $G_S$, namely, that for every edge $\{u, v\} \in E(G_S)$, there exists a bag that contains both $u$ and $v$. From ~\Cref{claim:edge}, such edges arise only when $u$ and $v$ are in the same or adjacent bags in $T$. We construct a new tree decomposition $T'$ from $T$ by adding for each bag $B \in T$ its parent $\textsc{Parent}(B)$ as an additional bag in $T'$. Let $\textsc{Root}(T')=\textsc{Root}(T)$, this construction guarantees that for every edge $\{u, v\} \in E(G_S)$, there exists a bag in $T'$ containing both vertices.

Let $k$ be the maximum bag size in $T$. Then each new bag in $T'$ has size at most $2k$, and hence $\mathrm{tw}(G_S) \leq 2k - 1.$ Therefore, $k \geq \frac{\mathrm{tw}(G_S) + 1}{2}.$ Since $k$ is the width of the DAG-tree decomposition $T$, this completes the proof.
\end{proof}

\begin{remark}
   
The above bound is tight. Consider the graph $G$ obtained by subdividing every edge of $K_5$. We orient all original vertices to point outward, making them sources. For each pair of original vertices, there exists a unique non-source vertex reachable from both original vertices. By  ~\Cref{obs:uniqueintersection}, such sources must appear together in the same or adjacent bags of any valid DAG tree decomposition. Therefore, $\mathrm{dtw}(G) = 3$, while $\mathrm{tw}(G) = 5$.

\end{remark}

%\bibliography{references}

%\begin{comment}
%\bibliographystyle{alpha}
%\newcommand{\etalchar}[1]{$^{#1}$}

%\end{comment}

\end{document}